\newcommand{\mytabref}[1]{Theorem~\ref{#1}}
\newcommand{\mytabprop}[1]{Proposition~\ref{#1}}
\newcommand{\mytabcor}[1]{Corollary~\ref{#1}}
\newlength{\additionaltextheight}
\newlength{\additionaltextwidth}
\renewcommand*{\backref}[1]{}
\renewcommand*{\backrefalt}[4]{%
\ifcase #1%
\marginpar{\tiny no cite}
\or
 $\rightarrow$~p.~#2.%
\else
  $\rightarrow$~pp.~#2.%
\fi
}
\crefname{table}{Table}{Tables}
\crefname{section}{Section}{Sections}
\newtheorem{theorem}{Theorem}
\crefname{theorem}{Theorem}{Theorems}
\newtheorem{corollary}{Corollary}
\crefname{corollary}{Corollary}{Corolaries}
\newtheorem{obs}{Observation}
\crefname{obs}{Observation}{Observations}
\newtheorem{prop}{Proposition}
\crefname{prop}{Proposition}{Propositions}
\newtheorem{lemma}{Lemma}
\crefname{lemma}{Lemma}{Lemmas}
\theoremstyle{definition}
\crefname{example}{Example}{Examples}
\crefname{reduction}{Reduction}{Reductions}
\newtheorem{definition}{Definition}
\crefname{definition}{Definition}{Definitions}
\crefname{subsection}{Section}{Sections}
\crefname{figure}{Figure}{Figures}
\newcommand{\seq}[1]{\langle #1\rangle}
\newcommand{\pref}{\succ}
\newcommand{\spref}{{\ensuremath{\scriptstyle \succ }}}
\newcommand{\truevalue}{\textrm{true}}
\newcommand{\falsevalue}{\textrm{false}}
\newcommand{\problemPiVDlong}{\textsc{$\Pi$~Maverick Deletion}\xspace}
\newcommand{\problemPiADlong}{\textsc{$\Pi$~Alternative Deletion}\xspace}
\newcommand{\problemPiVD}{\problemPiVDlong}
\newcommand{\problemPiAD}{\problemPiADlong}
\newcommand{\problemValueVD}{\textsc{Value-Restricted Maverick Deletion}\xspace}
\newcommand{\problemMediumAD}{\textsc{Me\-di\-um-Re\-strict\-ed Alternative Deletion}\xspace}
\newcommand{\problemValueAD}{\textsc{Value-Restricted Alternative Deletion}\xspace}
\newcommand{\problemBconAD}{\textsc{\bcon-Restricted Alternative De\-le\-tion}\xspace}
\newcommand{\problemBconVD}{\textsc{\bcon-Restricted Maverick Deletion}\xspace}
\newcommand{\problemScVD}{\textsc{Sin\-gle-Crossing Maverick Deletion}\xspace}
\newcommand{\problemScAD}{\textsc{Single-Crossing Alternative Deletion}\xspace}
\newcommand{\problemSpVD}{\textsc{Single-Peaked Maverick Deletion}\xspace}
\newcommand{\problemGsVD}{\textsc{Group-Separable Maverick Deletion}\xspace}
\newcommand{\problemGsAD}{\textsc{Group-Separable Alternative Deletion}\xspace}
\newcommand{\problemVC}{\textsc{Vertex Cover}\xspace}
\newcommand{\problemMTwoSAT}{\textsc{Maximum 2-Satisfiability}\xspace}
\newcommand{\problemMTwoSATshort}{\textsc{Max2Sat}}
\newcommand{\acon}{\mbox{\ensuremath{\alpha}}}
\newcommand{\bcon}{\mbox{\ensuremath{\beta}}}
\newcommand{\ccon}{\mbox{\ensuremath{\gamma}}}
\newcommand{\dcon}{\mbox{\ensuremath{\delta}}}
\newcommand{\sub}[1]{_{\scriptscriptstyle{#1}}}
\newcommand{\ppp}{{\cal P}}
\newcommand{\edgevoter}{edge voter}
\newcommand{\worstinconsistentconfig}{worst-diverse configuration}
\newcommand{\Worstinconsistentconfig}{worst-diverse configuration}
\newcommand{\mediuminconsistentconfig}{medium-diverse configuration}
\newcommand{\Mediuminconsistentconfig}{Medium-diverse configuration}
\newcommand{\bestinconsistentconfig}{best-diverse configuration}
\newcommand{\Bestinconsistentconfig}{Best-diverse configuration}
\newcommand{\cyclicconfig}{cyclic configuration}
\newcommand{\binconsistentconfig}{\bcon-configuration}
\newcommand{\dinconsistentconfig}{\dcon-configuration}
\newcommand{\worstr}{worst-restricted}
\newcommand{\mediumr}{medium-res\-trict\-ed}
\newcommand{\bestr}{best-restricted}
\newcommand{\valuer}{value-restricted}
\newcommand{\betar}{\bcon-restricted}
\newcommand{\Worstr}{Worst-restricted}
\newcommand{\Mediumr}{Medium-restricted}
\newcommand{\Bestr}{Best-restricted}
\newcommand{\Valuer}{Value-restricted}
\newcommand\restr[2]{{% we make the whole thing an ordinary symbol
  \left.\kern-\nulldelimiterspace % automatically resize the bar with \right
  #1 % the function
  \vphantom{\big|} % pretend it's a little taller at normal size
  \right|_{#2} % this is the delimiter
  }}
\newcommand{\decprob}[3]{%
  \begin{center}%
    \begin{minipage}{0.92\linewidth}%
      \textsc{#1}\\
      \textbf{Input:} #2\\
      \textbf{Question:} #3
    \end{minipage}%
  \end{center}%
}
\newcommand{\defvoter}{\text{voter }}
\newcommand{\x}[2]{\ensuremath{\widehat{X}_{#2}^{#1}}}
\newcommand{\sss}{\ensuremath \scriptscriptstyle}
\newcommand{\preferenceorder}{preference order}
\newcommand{\NP}{\textsf{NP}\xspace}
\newcommand{\comP}{\textsf{P}\xspace}
\newcommand{\euclid}{$1$-D Euclidean\xspace}
\newif\ifLineBreak
\newcommand{\configScan}[1]{%
  \ifx\relax#1\empty
  \else
    \ifLineBreak
     \\\relax
    \else
      \LineBreaktrue
    \fi
    #1 \relax
    \LineBreaktrue
  \expandafter\configScan
  \fi
}
\begin{document}

%%%%%%%%%%%%%%%%%%%%%%%%%%%%%%%%%%%%%%%%%%%%%%%%%%%%%%%%%%%%%%%%%%%%%%%%%%
%%%%%%%%%%%%%%%%%%%%%%%%%%%%%%%%%%%%%%%%%%%%%%%%%%%%%%%%%%%%%%%%%%%%%%%%%%
% \title{{\bf The Computational Complexity of Measuring the Distance to Restricted Domains}}

\newcommand{\mytitle}{Are there any nicely structured preference~profiles~nearby?}
\title{\mytitle\thanks{A preliminary short 
version of this work has been presented
at the 23rd International Joint Conference on Artificial Intelligence (IJCAI 2013), Beijing, August, 2013~\cite{BreCheWoe2013}.}}

\author[1]{Robert Bredereck}
\author[1]{Jiehua Chen}
\author[2]{Gerhard J. Woeginger}

\affil[1]{  
  Institut f\"ur Softwaretechnik und Theoretische Informatik,\protect \\
  Technische Universit\"at Berlin,\protect \\
  D-10587 Berlin, Germany\protect \\
  \texttt{\{robert.bredereck, jiehua.chen\}@tu-berlin.de}
}

\affil[2]{
  Department of Mathematics and Computer Science,\protect \\
  TU Eindhoven,\protect \\
  5600 MB Eindhoven, Netherlands\protect \\
  \texttt{gwoegi@win.tue.nl}
}

\pdfinfo{
/Title (\mytitle)
/Author (Robert Bredereck, Jiehua Chen, and Gerhard J. Woeginger) 
}

\date{}

\maketitle

\thispagestyle{plain}

\begin{abstract}
  We investigate the problem of deciding whether a given preference profile is
  close to having a certain nice structure, 
  %a nicely structured preference profile of a certain type, 
  as for instance single-peaked, single-caved, single-crossing, value-restricted,
  best-restricted, worst-restricted, medium-re\-strict\-ed, or group-separable
  profiles. We measure this distance by the number of voters or
  alternatives that have to be deleted to make the profile a nicely structured
  one. Our results classify the problem variants with respect
  to their computational complexity, and draw a clear line between
  computationally tractable (polynomial-time solvable) and computationally
  intractable (NP-hard) questions.
\end{abstract}

%%%%%%%%%%%%%%%%%%%%%%%%%%%%%%%%%%%%%%%%%%%%%%%%%%%%%%%%%%%%%%%%%%%%%%%%%
%%%%%%%%%%%%%%%%%%%%%%%%%%%%%%%%%%%%%%%%%%%%%%%%%%%%%%%%%%%%%%%%%%%%%%%%%
\section{Introduction}\label{sec:intro}
%%%%%%%%%%%%%%%%%%%%%%%%%%%%%%%%%%%%%%%%%%%%%%%%%%%%%%%%%%%%%%%%%%%%%%%%%
The area of Social Choice (and in particular the subarea of \emph{Computational} Social Choice) is full of 
so-called \emph{negative} results.
On the one hand there are many axiomatic impossibility results, 
and on the other hand there are just as many
computational intractability results.
For instance, the famous impossibility result of % Arrow 
\citet{Arrow1950} states that there is no perfectly 
fair way (satisfying certain desirable axioms) of aggregating the preferences of a society of voters into a single \preferenceorder.
As another example, \citet{BaToTr1989} establish that it is computationally intractable
(\NP-hard) to determine whether some particular alternative has won an election under a voting scheme designed by Lewis Carroll. 
Most of these negative results hold for general preference profiles where \emph{any} combination of \preferenceorder{s} may occur.

One branch of Social Choice studies \emph{restricted domains} of preference profiles, where only certain nicely
structured combinations of \preferenceorder{s} are admissible.
The standard example for this approach are \emph{single-peaked} preference profiles as introduced by 
% Black 
\citet{Black1948}: a preference profile is single-peaked if there exists a linear order of the 
alternatives such that every voter's preference along this order is either always strictly increasing, 
always strictly decreasing, or first strictly increasing and then strictly decreasing.
Determining whether a profile is single-peaked is solvable in polynomial time~\cite{BarTri1986,DoiFal1994,EscLanOez2008}.

%As far as we know, 
%three research groups~\cite{BarTri1986,DoiFal1994,EscLanOez2008} show that determining whether a profile is single-peaked is solvable in polynomial time.
%\citet{EscLanOez2008} improve the result of \citet{DoiFal1994} by giving a linear-time algorithm for the detection. 
%Recognizing single-peaked profiles runs in polynomial time as studied by \citet{BarTri1986}, by \citet{DoiFal1994}, and by \citet{EscLanMelO2008}.

Single-peakedness implies a number of nice properties, as for instance strat\-e\-gy-proofness of a family of voting rules~\cite{Moulin1980}
% Moulin~
and transitivity of the majority relation~\cite{Inada1969}. % (Inada~
Furthermore, Arrow's impossibility result collapses for single-peaked profiles.
In a similar spirit (but in the algorithmic branch), \citet{Walsh2007}, % Brandt, Brill, Hemaspaandra 
% \& Hemaspaandra 
\citet{BraBriHemHem2015}, and % Faliszewski, Hemaspaandra, Hemaspaandra \& Rothe
\citet{FHHR2011} show 
that many electoral bribery, control and manipulation problems that are \NP-hard in the general case become 
tractable under single-peaked profiles.
Besides the single-peaked domain, 
the literature contains many other \emph{restricted domains} of 
nicely structured preference profiles (see \cref{sec:def} for precise mathematical definitions).
\begin{itemize}
\item
\citet{Sen1966} and % Sen \& Pattanaik
\citet{SePa1969} introduced the domain of \emph{value-restricted} 
preference profiles which satisfy the following:
for every triple of alternatives, 
one alternative is not preferred most by any individual~(\emph{best-restricted} profile), 
or one is not preferred least by any individual~(\emph{worst-restricted} profile),
or one is not considered as the intermediate alternative by any individual~(\emph{medium-restricted} profile).
\item 
% Inada~
\citet{Inada1964,Inada1969} considered the domain of \emph{group-separable} preference profiles 
which satisfy the following:
the alternatives can be split into two groups such that every voter prefers every alternative in the
first group to those in the second group, or prefers every alternative in the second group to those in the first group.
Every group-separable profile is also medium-restricted.
\item
\emph{Single-caved} preference profiles are derived from single-peaked profiles by reversing the preferences
of every voter.
Sometimes single-caved profiles are also called \emph{single-dipped}~\cite{KlPeSt1997}.
 % (Klaus, Peters \& Storcken 
\item
\emph{Single-crossing} preference profiles go back to the seminal papers of
 \citet{Mirrlees1971} and \citet{Roberts1977} on income taxation.
A preference profile is single-crossing if there exists a linear order of the voters such that 
each pair of alternatives separates this order into two sub-orders where in each sub-order, 
all voters agree on the relative order of this pair.
% for any 
% pair of alternatives, 
% there is a single spot along this order where the voters switch from preferring one alternative above the other one.  
Similar to the single-peaked property, 
single-crossing profiles can also be recognized in polynomial time~\cite{BCW13,DoiFal1994,EFS12}.
\end{itemize}

Unfortunately, real-world elections are almost never single-peaked, value-re\-strict\-ed, group-separable,
single-caved or single-crossing.
Usually there are maverick voters whose preferences are determined for instance by race, religion, or gender,
and whose misfit behaviors destroy all nice structures in the preference profile.
In a very recent line of research, % Faliszewski, Hemaspaandra \& Hemaspaandra
\citet{FaHeHe2014} searched for a 
cure against such mavericks, and arrived at \emph{nearly} single-peaked preference profiles: a profile is 
nearly single-peaked if it is very close to a single-peaked profile.
Of course there are many mathematical ways of measuring the closeness of profiles.
Natural ways to make a given profile single-peaked are (i) by deletion of voters and (ii) by deletion of alternatives.
This leads to the two central problems of our work for a specific property~$\Pi$ and a given number~$k$: % as introduced previously:
\begin{enumerate}
\item The \problemPiVDlong problem asks whether it is possible to delete at most $k$~\emph{voters} to make a given profile satisfy the $\Pi$-property.
\item The \problemPiADlong problem asks whether it is possible to delete at most $k$~\emph{alternatives} to make a given profile satisfy the $\Pi$-property.
\end{enumerate}
%See \cref{sub:central problems} for the formal definition.

%Here, $\Pi$ is a specific property as introduced previously.
In this paper, we have $\Pi \in \{$\worstr, \mediumr, \bestr,
value-restricted, single-peaked, single-caved, single-crossing,
group-separable, \bcon-restricted$\}$.
We provide the formal definitions of these properties in \cref{sec:def}.

\paragraph{Results of this paper.}
We investigate the problem of deciding how close (by deletion of voters and by deletion of alternatives)
a given preference profile is to having a nice structure
% a nicely structured preference profile with a certain property 
(like being single-peaked, or single-crossing, or group-separable).
We focus on the most fundamental definitions of closeness and on the most popular restricted domains.
Our results draw a clear line between computationally easy (polynomial-time solvable) and 
computationally intractable (\NP-hard) questions as they classify all considered problem variants with respect 
to their computational complexity. 
In particular, 
for most of the cases both of our central problems are computationally intractable 
(with the exceptions of maverick deletion when the specific property~$\Pi$ is single-crossing, 
and of alternative deletion when~$\Pi$ is either single-peaked or single-caved).
Our results are summarized in \cref{tab:main_results}.

\begin{table}[t]
  \centering
  \begin{tabular}{@{}l l l  l l l@{}}
    \toprule
    Restriction & \multicolumn{2}{l}{Maverick deletion} & & \multicolumn{2}{l}{Alternative deletion} \\
    \midrule
    &&\\[-0.8em]
    Single-peaked & $\NP$-complete & (*, \mytabcor{cor:sp-sc-gs_vd_npc}) && $\comP$ &(*)\\
    Single-caved & $\NP$-complete & (*, \mytabcor{cor:sp-sc-gs_vd_npc}) && $\comP$ &(*) \\
    Group-separable & $\NP$-complete & (\mytabcor{cor:sp-sc-gs_vd_npc}) && $\NP$-complete &(\mytabcor{cor:gs_ad_npc})\\ 
    Single-crossing & $\comP$ & (\mytabref{thm:sc_vd_p}) && $\NP$-complete &(\mytabref{thm:sc_ad_npc})\\
    &&\\[-.8em]    
    \midrule
    &&\\[-.8em]
    \Valuer & $\NP$-complete &(\mytabref{thm:value-vd_npc}) && $\NP$-complete &(\mytabref{thm:value-ad_npc}) \\
    \Bestr   & $\NP$-complete &(\mytabprop{prop:best-worst-medium-vd_npc})  && $\NP$-complete &(\mytabprop{prop:best-worst-medium-ad_npc})\\
    \Worstr  & $\NP$-complete &(\mytabprop{prop:best-worst-medium-vd_npc}) && $\NP$-complete &(\mytabprop{prop:best-worst-medium-ad_npc})\\ 
    \Mediumr & $\NP$-complete &(\mytabprop{prop:best-worst-medium-vd_npc}) && $\NP$-complete &(\mytabprop{prop:best-worst-medium-ad_npc}) \\
    \bcon-restricted  & $\NP$-complete &(\mytabref{thm:bcon-vd_npc}) && $\NP$-complete &(\mytabref{thm:bcon-ad_npc})\\
    \bottomrule
    % \ccon-restricted  & \\
    % \dcon-restricted  & \\
  \end{tabular}
  \caption{Summary of the results where \comP means polynomial-time solvable.
    Entries marked by ``*'' are due to \citet{ELP13}. 
    The definition of the respective domain restrictions can be found in \cref{sec:def}.}
  \label{tab:main_results}
\end{table}

\paragraph{Related work.}

As to different notions of closeness to restricted domains,
\begin{itemize}
\item 
% in a paper closely related to ours, 
% (with a disjoint set of results), %% it is not disjoint! (single-peaked) 
% Erd\'elyi, Lackner \& Pfandler
\citet{ELP13} study various concepts of nearly single-peakedness.
Besides deletion of voters and deletion of alternatives, they also study closeness measures that are based
on swapping alternatives in the preference orders of some voters, or on
introducing additional political axes.
\item \citet{YanGuo14} study \emph{$k$-peaked} domains, %elections as a generalization of single-peaked domains,
where every preference order can have at most $k$ peaks (that is, at most $k$ rising streaks that alternate with falling streaks).
\item \citet{CGS12,CGS13} introduce a closeness measure, the \emph{width}, 
for single-peaked, 
single-crossing, 
and group-separable profiles which is based on the notion of a clone set~\cite{Tid1987}.
For instance, the \emph{single-peaked width} of a preference profile is the smallest number~$k$
such that partitioning all alternatives into disjoint \emph{intervals}, each with size at most $k+1$,
and replacing each of these intervals with a single alternative 
results in a single-peaked profile. 
%such that one can make the profile single-peaked by replacing disjoint intervals of at most~$k+1$ alternatives by single alternative.
An interval of alternatives is a set of alternatives that 
appear consecutively (in any order) in the preference orders of all voters. %the profile.
% if one can make it single-peaked by replacing disjoint groups of alternatives with single alternatives such that each group has at most $k+1$ alternatives.
% The \emph{single-peaked width} of this profile is defined as the minimum value among all its sp-values. 
%of up to $k+1$ alternatives with single alternatives.
\end{itemize}

There are several generalizations of the single-peaked property.
For instance, 
\citet{BarGulSta1993} introduce the concept of multi-dimensional single-peaked domains.
The $1$-dimensional special case is equivalent to our single-peaked property.
%Our single-peakedness is then the $1$-dimensional special case.
\citet{SuiFraNieBou2013} study this concept empirically.
They present approximation algorithms~(for several optimization goals) 
of finding multi-dimensional single-peaked profiles 
and show that their two real-world data sets are far from being single-peaked 
but are nearly $2$-dimensional single-peaked. 
While \citet{ELP13} and this paper show that deciding the distance to restricted domains is \NP-hard in most cases,
\citet{ElkLac14} present efficient approximation and fixed-parameter tractable algorithms for deciding the distance to restricted domains such as single-peakedness and single-crossingness.

Finally, we remark that the closeness concept can also be used to characterize voting rules~\cite{Baigent1987,ElkFalSli2012,MesNur2008}. 
The basic idea is to first fix a specific property,
for instance, the transitivity of the pairwise majority relation (also known as Condorcet consistency),
and then to define a closeness measure from a given profile to the ``nearest'' profile with this specific property.
For instance, the Young rule~\citep{You77} 
takes the subprofile that is closest to being Condorcet consistent by deleting the fewest number of voters and selects the corresponding Condorcet winner as a winner; see \citet{ElkFalSli2012} for more information on this.

For restricted and nearly restricted domains, there are various studies on 
single-winner determination~\cite{BraBriHemHem2015}, 
on multi-winner determination~\cite{BetArkJoh2013,SkoYuFalElk2015}, 
on control, manipulation, and bribery~\cite{BraBriHemHem2015,CheFalNieTal2014,FHHR2011,FaHeHe2014}, 
and on possible/necessary winner problems~\cite{Walsh2007}. % for restricted and nearly restricted domains.
Usually, the expectation is that domain restrictions help in lowering the computational complexity
of many voting problems.
Many publications, however, report that this is not always the case. 
For instance, %\citet{FaHeHe2014} analyze various control and manipulation problems for nearly single-peaked profiles,
%and derive a number of somewhat unexpected results.
%For example, 
\citet{FaHeHe2014} show that 
the computational complexity of ``controlling approval-based rules'' for nearly single-peaked profiles
is polynomial-time solvable if the distance to single-peaked is a constant, 
and thus, coincides with the one for single-peaked profiles, 
whereas the computational complexity of ``manipulating the veto rule'' for nearly single-peaked profiles
is still \NP-hard and thus, coincides with the one for unrestricted profiles.
%For example, the computational complexity for manipulation problems tends to coincide
%for nearly single-peaked profiles and general profiles (that is, remains \NP-hard)
%whereas the computational complexity for control problems tends to coincide
%for nearly single-peaked profiles and single-peaked profiles (that is, it remains polynomial
%time solvable for constant distances).

\paragraph{Article outline.}
This paper is organized as follows.
\cref{sec:def} summarizes all the basic definitions and notations.
Our results are presented in \cref{sec:value-restricted-results,sec:sp-scaved-gs-results,sec:sc-results}:
\begin{enumerate}
  \item \cref{sec:value-restricted-results} presents results for the \valuer, \bestr, \worstr, and \mediumr{} properties.
  All results are \NP-hardness results and are obtained through reduction from the \NP-complete \problemVC problem (see the beginning of \cref{sec:value-restricted-results} for the definition).
  \item \cref{sec:sp-scaved-gs-results} shows results for single-peakedness, single-cavedness, and group-separability. 
  In addition, this section shows results for the \betar{} property, a necessary condition for group-separability.
  Again, all results are \NP-hardness results and are obtained through reduction from \problemVC.
  \item \cref{sec:sc-results} shows that achieving the single-crossing property by deleting as few alternatives as possible is
  \NP-hard; the reduction is from the \NP-complete \problemMTwoSAT problem~(see the beginning of \cref{sec:sc-results} for the definition), 
  and shows that finding a single-crossing profile with 
  the largest voter set is polynomial-time solvable; this is done by reducing the
  problem to finding a longest path in a directed acyclic graph.
\end{enumerate}
We conclude with some future research directions in the last section.
%\cref{sec:voter-del} present our results for the maverik deletion case, and
%\cref{sec:alternative-del} show results for the alternative deletion case.

\begin{figure}[t]
  \centering
  \resizebox{\textwidth}{!}{
    \tikzstyle{class} = [align=center, text centered, rounded
    corners = 2pt, draw, fill=white, rectangle, 
    text depth=.25ex, 
    text height=5ex]
    \tikzstyle{dumstyle} = [text height=5ex]
    \begin{tikzpicture}[>=stealth', shorten >= 1pt, shorten <= 1pt, node distance = 6ex]
      \node[class] (value) at (0,0) {value-\\restricted};
      
      \node[class,  below = 6ex of value] (medium) {medium-\\restricted};
      
      \node[dumstyle, left = 3ex of medium] (dum1) {};
      
      \node[class, below = 4ex of dum1] (baralpha) {$\bar{\alpha}$-\\restricted};
      
      \node[class, left = 3ex of dum1] (best) {best-\\restricted};
      
      \node[dumstyle, right = 3ex of medium] (dum2) {};
      
      \node[class, below = 4ex of dum2] (beta) {$\beta$-\\restricted};
      
      \node[class, right = 3ex of dum2] (worst) {worst-\\restricted};
      
      \node[dumstyle, right = 3ex of worst] (dum3) {};
      
      \node[class, below = 4ex of dum3] (alpha) {$\alpha$-\\restricted};
      
      \node[class, below = 16ex of best] (caved) {single-\\\;\;caved\;\;};
      \node[class, below = 16ex of medium] (gs) {group-\\separable};
      \node[class, below = 16ex of worst] (peaked) {single-\\\;peaked\;};

      \path[-] (best.north) edge (value.south west);
      \path[-] (worst.north) edge (value.south east);
      \path[-] (medium.north) edge (value.south);

      \path[-] (best.south) edge (caved.north);
      \path[-] (baralpha.south) edge (caved.north east);

      \path[-] (medium.south) edge (gs.north);
      \path[-] (beta.south) edge (gs.north east);

      \path[-] (worst.south) edge (peaked.north);
      \path[-] (alpha.south) edge (peaked.north east);

      \node[dumstyle, right = 15ex of dum3] (dum4) {};
      \node[class, below = 4ex of dum4] (gamma) {$\gamma$-\\restricted};
      \node[class, right = 3ex of gamma] (delta) {$\delta$-\\restricted};
      \node[class, below = 16ex of dum4] (crossing) {single-\\crossing};

      \path[-] (crossing.north) edge (gamma.south);
      \path[-] (crossing.north east) edge (delta.south);
    \end{tikzpicture}
  }
  \caption{A Hasse diagram for the relation of the different properties. An edge
  between two properties means that a profile with the property in the lower
  tier implies the property in the upper tier. For instance, there is an
  edge between ``value-restricted'' and ``best-restricted'', because a
  best-restricted profile is also value-restricted.}
  \label{fig:property-relation}
\end{figure}
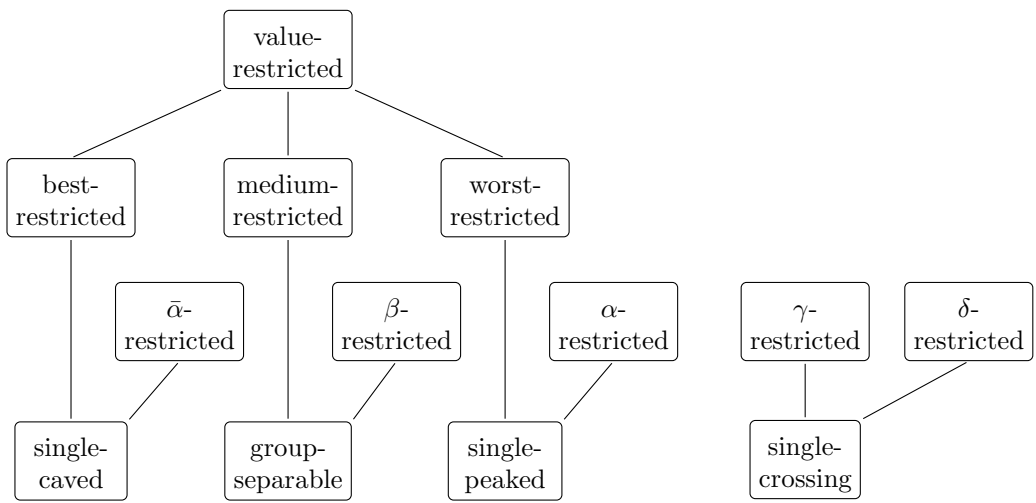

%%%%%%%%%%%%%%%%%%%%%%%%%%%%%%%%%%%%%%%%%%%%%%%%%%%%%%%%%%%%%%%%%%%%%%%%%
%%%%%%%%%%%%%%%%%%%%%%%%%%%%%%%%%%%%%%%%%%%%%%%%%%%%%%%%%%%%%%%%%%%%%%%%%
\section{Preliminaries and basic notations}\label{sec:def}
%%%%%%%%%%%%%%%%%%%%%%%%%%%%%%%%%%%%%%%%%%%%%%%%%%%%%%%%%%%%%%%%%%%%%%%%%

Let $a_{1},\ldots,a_{m}$ be $m$~alternatives and let $v_1,\ldots,v_n$ be $n$~voters. 
A \emph{preference profile} specifies the \emph{\preferenceorder{s}} of the $n$~voters, 
where voter~$v_i$ ranks the alternatives according
to a linear order~$\pref_i$ over the $m$ alternatives. 
For alternatives~$a$ and $b$, the relation
$a\pref_i b$ means that voter~$v_i$ strictly prefers $a$ to~$b$. 
We omit the subscript~$i$ 
if it is clear from the context whose \preferenceorder{} 
we are referring to. 

Given two disjoint sets~$A$ and~$B$ of alternatives, we write $A \pref_i B$
to express that voter~$v_i$ prefers set~$A$ to set~$B$, that is,
for each alternative $a\in A$ and each alternative~$b\in B$ it holds
that~$a\pref_i b$.
We write $A\pref_i b$ as shorthand for $A\pref_i \{b\}$ and $b\pref_i A$ for $\{b\}\pref_i A$.
%% Analogously, we write $A\pref_i b$ to denote that voter~$v_i$ prefers every
%% alternative in~$A$ to alternative~$b$, and we write $b\pref_i A$ for the reverse situation.
We sometimes fix a canonical order of the alternatives in~$A$ and denote
this order by~$\seq{A}$.
% $\overline{\seq{A_1}}$ means the reverse ordering of $\seq{A_1}$.
The expression~$\seq{A_1} \pref \seq{A_2} $ denotes the \preferenceorder{} that is
consistent with $\seq{A_1}$, $\seq{A_2}$, and $A_1 \pref A_2$.

Next, we review some preference profiles with special properties studied in the
literature~\cite{BH11,BCW13,Inada1964,Inada1969,SePa1969,Sen1966}. 
We call such profiles \emph{configurations}
%Next, we define several subprofiles, called \emph{configurations} in the following.
and we use them to characterize some properties of the preference profiles.
We illustrate the relation between the respective properties in \cref{fig:property-relation}.
% The first three configurations are profiles with three alternatives where
% all alternatives are at the best, medium, or worst position
% in some voters preference ordering.

\subsection{Value-restricted profiles}
The first three configurations~\cite{BH11} describe profiles with three alternatives where
each alternative is in the best, medium, or worst position
in some voter's \preferenceorder.

%%%%%%%%%%%%%%%
\begin{definition}[\Bestinconsistentconfig]
  A profile with three voters~$v_1, v_2, v_3$, and three distinct
  alternatives~$a$, $b$, $c$ is a \emph{\bestinconsistentconfig} if it
  satisfies the following:
  \begin{quote}
    $\defvoter v_1 \colon~ a \pref \{b, c\}$,\\
    $\defvoter v_2 \colon~ b \pref \{a, c\}$,\\
    $\defvoter v_3 \colon~ c \pref \{a, b\}$. 
  \end{quote}
\end{definition}
%%%%%%%%%%%%%%%

%%%%%%%%%%%%%%%
\begin{definition}[\Mediuminconsistentconfig]
  A profile with three voters~$v_1, v_2,$ $v_3$, and three distinct
  alternatives~$a, b, c$ is a \emph{\mediuminconsistentconfig} if it
  satisfies the following:
  \begin{quote}
    $\defvoter v_1\colon~b \pref a \pref c \;\text{ or }\; c \pref a \pref b$,\\
    $\defvoter v_2\colon~a \pref b \pref c \;\text{ or }\; c \pref b \pref a$,\\
    $\defvoter v_3\colon~a \pref c \pref b \;\text{ or }\; b \pref c \pref a$.
  \end{quote}
\end{definition}
%%%%%%%%%%%%%%

%%%%%%%%%%%%%%%
\begin{definition}[\Worstinconsistentconfig]
  A profile with three voters~$v_1, v_2, v_3$ and three distinct
  alternatives~$a$,$b$,$c$ is a \emph{\worstinconsistentconfig} if it
  satisfies the following:
  \begin{quote}
    $\defvoter v_1\colon~ \{b, c\} \pref a$,\\
    $\defvoter v_2\colon~ \{a, c\} \pref b$,~\\
    $\defvoter v_3\colon~ \{a, b\} \pref c$.
  \end{quote}
\end{definition}
%%%%%%%%%%%%%%%

We use these three configurations to characterize several restricted domains:
A profile is \emph{\bestr} 
(resp.\ \emph{\mediumr}, 
\emph{\worstr}) 
\emph{with respect to a triple of alternatives} 
if it contains no three voters that form a \bestinconsistentconfig{} 
(resp.\ a \mediuminconsistentconfig, a \worstinconsistentconfig) with respect to this triple. 
A profile is \emph{\bestr} (resp.\ \emph{\mediumr}, \emph{\worstr})
if it is \bestr{} (resp.\ \mediumr, \worstr) with respect to every possible triple of alternatives. 
% By the definition of \bestr{} and \worstr{} profiles, we can observe that they are ``reverse'' to each other.

% \begin{obs}\label{obs:bestr-worstr}
%   Let $\mathcal{P}_1$ be a profile
%   and let $\mathcal{P}_2$ be the profile derived from $\mathcal{P}_1$ by reversing the preference orders of all its voters.
%   Then, $\mathcal{P}_1$ is \bestr{} (resp.\ \worstr) if and only if $\mathcal{P}_2$ is \worstr{} (resp.\ \bestr).
% \end{obs}

A profile is \emph{value-restricted}~\cite{Sen1966} 
if for every triple~$T$ of alternatives, 
it is \bestr, \mediumr, or \worstr{} with respect to $T$.
In other words, 
a profile is \emph{not} value-restricted 
if and only if it contains a triple of alternatives and three voters~$v_1, v_2, v_3$ 
that form a \bestinconsistentconfig, a \mediuminconsistentconfig, 
and a \worstinconsistentconfig with respect to this triple.

%%%%%%%%%%%%%%%
\begin{definition}[Cyclic configuration]
  A profile with three voters~$v_1, v_2, v_3$ and three distinct
  alternatives~$a$,$b$,$c$ is a \emph{cyclic configuration} if it
  satisfies the following:
\begin{quote}
  $\defvoter v_1\colon~ a \pref b \pref c$,\\
  $\defvoter v_2\colon~ b \pref c \pref a$,\\
  $\defvoter v_3\colon~ c \pref a \pref b$.
\end{quote}
\end{definition}
%%%%%%%%%%%%%%%

The set of the preference orders of all voters in a \valuer{} profile
is also known as \emph{acyclic domain of linear orders}.
Many research groups~\citep{AbeJoh1984,GalRei2008,Monjardet2009,PupSli2015} 
investigate maximal acyclic domains for a given number~$m$ of alternatives,
where an acyclic domain is \emph{maximal} if adding any new linear order destroys the \valuer{} property.

\subsection{Single-peaked profiles and single-caved profiles}
Given a set~$A$ of alternatives and a linear order~$L$ over~$A$, 
we say that a voter~$v$ is \emph{single-peaked} with respect to $L$
if his preference along $L$ is always strictly increasing, 
always strictly decreasing, or first strictly increasing and then strictly decreasing.
Formally, a voter~$v$ is single-peaked with respect to $L$ if
for each three distinct alternatives~$a,b,c\in A$, it holds that
\begin{align*}
 (a \pref_L b \pref_L c \text{ or } c \pref_L b \pref_L a) \text{ implies that if } a \pref_v b, \text{ then } b \pref_v c.
\end{align*}
A profile with the alternative set~$A$ 
is \emph{single-peaked} if there is a linear order over~$A$ such that
every voter is single-peaked with respect to this order.
Single-peaked profiles are necessarily \worstr.
To see this,
we observe that in a profile with at least three alternatives, 
the alternative that is ranked last by at least one voter must not be placed between the other two along any single-peaked order.
But then, none of the alternatives~$a$, $b$, and $c$ from a \worstinconsistentconfig{} can be placed between the other two in any single-peaked order.
Thus, a profile with \worstinconsistentconfig{s} cannot be single-peaked.

To fully characterize the single-peaked domain, we additionally need the following configuration.

%%%%%%%%%%%%%%%
\begin{definition}[\acon-configuration]\label{ex:alpha}
A profile with two voters~$v_1$ and $v_2$, and four distinct alternatives~$a, b, c, d$ is an \acon-configuration if it satisfies the following:
\begin{quote}
$\defvoter v_1\colon~ \{a,d\}\pref b \pref c$,\\
$\defvoter v_2\colon~ \{c,d\}\pref b \pref a$.
\end{quote}
\end{definition}

The \acon-configuration describes a situation 
where two voters have opposite opinions on the order of three alternatives~$a, b$ and $c$ 
but agree that a fourth alternative~$d$ is ``better'' than the one ranked in the middle.
%The \acon-configuration represents a situation where two voters have opposite opinion
%on the relative relation of three alternatives~$a, b$ and $c$ but agree on that
%a fourth alternative~$d$ is ``better'' than the middle ranked alternative $b$.
A profile with this configuration is not single-peaked 
as we must put alternatives~$b$ and $d$ between alternatives~$a$ and $c$, 
but then voter~$v_1$ prevents us from putting $b$ next to $a$ 
and voter~$v_2$ prevents us from putting $b$ next to $c$.
%%%%%%%%%%%%%%%

A profile is \emph{single-peaked} if and only if it contains neither
\worstinconsistentconfig{}s nor \acon-configurations~\cite{BH11}.
Since reversing the \preferenceorder{s} of a single-peaked profile results in
a single-caved one,
an analogous characterization of single-caved profiles follows.
A profile is \emph{single-caved} if and only if it contains neither
\bestinconsistentconfig{}s nor $\bar{\acon}$-configurations
where a $\bar{\acon}$-configuration is a \acon-configuration with
both ``\preferenceorder{s}'' being inverted:

%%%%%%%%%%%%%%%
\begin{definition}[$\bar{\acon}$-configuration]\label{ex:alpha-bar}
A profile with two voters~$v_1$ and $v_2$, and four distinct alternatives
$a,b,c,d$ is an $\bar{\acon}$-configuration if it satisfies the following:
\begin{quote}
$\defvoter v_1 \colon~ a\pref b \pref \{c,d\}$,\\
$\defvoter v_2 \colon~ c\pref b \pref \{a,d\}$.
\end{quote}
\end{definition}
%%%%%%%%%%%%%%%

\subsection{Group-separable profiles}
Given a profile with $A$ being the set of alternatives,
the \emph{group-separable} property requires that 
every size-at-least-three subset~$A'\subseteq A$ can be partitioned into two disjoint non-empty subsets~$A'_1$ and $A'_2$
such that for each voter $v_i$, either~$A'_1 \pref_i A'_2$ or $A'_2 \pref_i A'_1$ holds.
One can verify that group-separable profiles are necessarily medium-restricted.
\citet{BH11} characterized the group-separable property using the following configuration.
%In addition, we need the following configuration to fully characterize this property:

%%%%%%%%%%%%%%%
\begin{definition}[\binconsistentconfig]\label{ex:beta}\ \\
  A profile with two voters~$v_1$ and $v_2$ and four distinct
  alternatives~$a,b,c,d$ is a \bcon-configuration if it satisfies the
  following:
\begin{quote}
$\defvoter v_1\colon~ a\pref b \pref c \pref d$, \\
$\defvoter v_2\colon~ b\pref d \pref a \pref c$.
\end{quote}
\end{definition}

The \bcon-configuration describes a situation where
the most preferred alternative 
and the least preferred alternative of voter~$v_1$ are $a$ and $d$ which
are different from the ones of voter~$v_2$: $b$ and $c$.
Both voters agree that $b$ is better than $c$, but disagree on
whether $d$ is better than $a$. 
This profile is not group-separable:
We can not partition $\{a,b,c,d\}$ into
% one one-alternative set
one singleton
and one three-alternatives set as each alternative is ranked in the middle once, 
but neither can we partition them into two sets each of size two
since voter $v_1$ prevents us from putting alternatives~$a$ and~$c$ 
or alternatives~$a$ and~$d$ together 
and voter $v_2$ prevents us from putting alternatives~$a$ and $b$ together. 
%%%%%%%%%%%%%%%
Profiles without \bcon-configurations are called \emph{\bcon-restricted}~\cite{BH11}.

A profile is \emph{group-separable} if and only if it contains neither
\mediuminconsistentconfig{s} nor \bcon-configurations~\cite{BH11}.

\subsection{Single-crossing profiles} 
The single-crossing property describes the existence of a ``natural'' linear order of the voters. 
A preference profile is \emph{single-crossing} 
if there exists a \emph{single-crossing order} of the voters, that is, 
a linear order~$L$ of the voters, such that 
each pair of alternatives separates $L$ into two sub-orders where in each sub-order, 
all voters agree on the relative order of this pair.
Formally, this means that for each pair of alternatives~$a$ and $b$ 
such that the first voter along the order~$L$ 
prefers $a$ to $b$ and for each two voters~$v, v'$ with $v\pref_L v'$,
\begin{align*}
  b \pref_{v} a \text{ implies } b \pref_{v'} a \text{.}
\end{align*}

To characterize single-crossing profiles, 
we need the following two configurations.

%%%%%%%%%%%%%%%
\begin{definition}[\ccon-configuration]\label{ex:gamma}\ \\
A profile with three voters $v_1, v_2, v_3$,
and six (not necessarily distinct) alternatives
$a,b,c,d,e,f$ is a \ccon-configuration, if it satisfies the following:
\begin{quote}
$\defvoter~v_1\colon~b\pref a$ ~and~ $c\pref d$ ~and~ $e\pref f$,\\
$\defvoter~v_2\colon~a\pref b$ ~and~ $d\pref c$ ~and~ $e\pref f$,\\
$\defvoter~v_3\colon~a\pref b$ ~and~ $c\pref d$ ~and~ $f\pref e$.
\end{quote}
\end{definition}

The \ccon-configuration describes a situation 
where each voter disagrees with the other two voters on the order of exactly two distinct alternatives. 
The profile is not single-crossing 
as none of the three voters can be put between the other two: The pair~$\{a,b\}$
prevents us from putting $v_1$ in the middle, 
the pair~$\{c,d\}$ forbids voter~$v_2$ in the middle, 
and the pair~$\{e,f\}$ forbids $v_3$ in the middle.
%%%%%%%%%%%%%%%

%%%%%%%%%%%%%%%
\begin{definition}[\dcon-configuration]\label{ex:delta} \ \\
A profile with four voters $v_1, v_2, v_3, v_4$,
and four (not necessarily distinct)
alternatives $a,b,c,d$ is a \dcon-configuration, if it satisfies the following:
\begin{quote}
$\defvoter v_1\colon~ a\pref b$ ~and~ $c\pref d$,\\
$\defvoter v_2\colon~ a\pref b$ ~and~ $d\pref c$,\\
$\defvoter v_3\colon~ b\pref a$ ~and~ $c\pref d$,\\
$\defvoter v_4\colon~ b\pref a$ ~and~ $d\pref c$.
\end{quote}
\end{definition}
The \dcon-configuration shows a different kind of voter behavior: 
Two voters disagree with the other two voters on the order of two alternatives,
but also disagree between each other on the order of two further alternatives. 
This profile is not single-crossing
as the pair~$\{a,b\}$ forces us to place $v_1$ and $v_2$ next to each other, and to
put $v_3$ and $v_4$ next to each other; the pair~$\{c,d\}$ forces us to place
$v_1$ and~$v_3$ next to each other, and to put $v_2$ and $v_4$ next to each
other.  This means that no voter can be placed in the first position.
%%%%%%%%%%%%%%%

A profile is \emph{single-crossing} if and only if it contains neither
\ccon-configurations nor \dcon-configura\-tions~\cite{BCW13}.

\subsection{Two central problems}\label{sub:central problems}
As already discussed before, two natural ways of measuring the closeness of
profiles to some restricted domains are to count the number of voters
resp. alternatives which have to be deleted to make a profile
single-crossing. Hence, for $\Pi \!\in\!\{$\worstr, \mediumr, \bestr,
value-restricted, single-peaked, single-caved, single-crossing,
group-separable, \bcon-restricted$\}$, we study the following two decision
problems: \problemPiVD and \problemPiAD.

\decprob{\problemPiVDlong} %(\problemPiVDshort)}
{A profile with $n$~voters and a non-negative integer $k\le n$.}
{Can we delete at most $k$ voters so that the resulting profile satisfies the $\Pi$-property?}

\decprob{\problemPiADlong} %(\problemPiADshort)}
{A profile with $m$ alternatives and a non-negative integer~$k\le m$.}
{Can we delete at most $k$ alternatives so that the resulting profile satisfies the $\Pi$-property?}

% %%%%%%%%%%%%%%%%%%%%%%%%%%%%%%%%%%%%%%%%%%%%%%%%%%%%%%%%%%%%%%%%%%%%%%%%%
% %%%%%%%%%%%%%%%%%%%%%%%%%%%%%%%%%%%%%%%%%%%%%%%%%%%%%%%%%%%%%%%%%%%%%%%%%
% \section{Results}\label{sec:results}
% %%%%%%%%%%%%%%%%%%%%%%%%%%%%%%%%%%%%%%%%%%%%%%%%%%%%%%%%%%%%%%%%%%%%%%%%%

An upper bound for the computational complexity of \problemPiVD and \problemPiAD
is easy to see.
Both problems are contained in~\NP for each property~$\Pi$ we study:
% $\Pi\in\{$\worstr, \mediumr,
% \bestr, single-peaked, single-caved, group-separable, \bcon-restricted$\}$:
Given a preference profile one can check in
polynomial time whether it has property~$\Pi$
since $\Pi$ is characterized
by a finite set of forbidden finite substructures. 
Thus, in order to show \NP-completeness of \problemPiVD and \problemPiAD, 
we only have to show their \NP-hardness.

\section{Value-restricted properties}
\label{sec:value-restricted-results}
In this section, we show \NP-hardness for the \valuer,  \bestr, \worstr, and \mediumr{} domains, respectively.
Notably, we show all these results by reducing from the \NP-complete \textsc{Vertex Cover} problem~\cite{GJ79}.

\decprob{Vertex Cover}
{An undirected graph~$G = (U, E)$ and a non-negative integer~$k \le |U|$.}
{Is there a \emph{vertex cover}~$U'\subseteq U$ of at most~$k$ vertices, that is, 
  $|U'| \le k$ and $\forall e \in E \colon e \cap U' \neq \emptyset$?}
In every reduction from \textsc{Vertex Cover} we describe, the vertex cover size~$k$ coincides with the
maximum number~$k$ of voters (resp.\ alternatives) to delete.
Hence, we use the same variable name.

We first deal with the case of maverick voter deletion~(\cref{subsec:reduction from vc-maverick deletion})
and then, 
with the case of deleting alternatives (\cref{subsec:reduction from vc-alternative deletion}).
In both cases, 
the general idea is to transform every edge of a given graph into an appropriate forbidden configuration.

\subsection{Maverick Voter Deletion}\label{subsec:reduction from vc-maverick deletion}

\begin{theorem}\label{thm:value-vd_npc}
  \problemValueVD is \NP-complete.
\end{theorem}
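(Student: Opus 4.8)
The plan is to reduce from \problemVC; since membership in \NP has already been observed, it suffices to establish \NP-hardness. Given an instance $(G,k)$ with $G=(U,E)$, I would fix an arbitrary linear order of $U$ and write each edge as $e_j=\{u_j,w_j\}$ with $u_j$ earlier than $w_j$, for $j=1,\dots,|E|$. Each edge $e_j$ gets its own \emph{block} of three alternatives $\{a_j,b_j,c_j\}$, and every \preferenceorder{} constructed ranks the whole block $\{a_1,b_1,c_1\}$ above $\{a_2,b_2,c_2\}$ above $\dots$ above $\{a_{|E|},b_{|E|},c_{|E|}\}$, so that only the internal order within each block varies between voters. There is one \emph{vertex voter} $v_u$ for each $u\in U$ and $|U|+1$ pairwise identical \emph{helper voters} $h_1,\dots,h_{|U|+1}$. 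Inside block $j$, every helper ranks $c_j\pref a_j\pref b_j$; the vertex voter $v_u$ ranks $a_j\pref b_j\pref c_j$ if $u=u_j$, ranks $b_j\pref c_j\pref a_j$ if $u=w_j$, and ranks $c_j\pref a_j\pref b_j$ if $u\notin e_j$. The deletion budget stays $k$; the profile has $2|U|+1$ voters and $k\le|U|$, so the instance is well-formed, and the whole construction is obviously polynomial-time computable.

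The first thing I would prove is that the only triples of alternatives that can witness a violation of value-restrictedness are the blocks $\{a_j,b_j,c_j\}$. If a triple contains two alternatives from a block $i$ and one from a block $i'>i$, then every voter ranks the block-$i'$ alternative worst among the three, so the profile is \worstr{} with respect to that triple; symmetrically, a triple with one alternative from a block $i$ and two from a later block makes the profile \bestr{} with respect to it; and a triple spread over three distinct blocks is ranked identically by all voters. Hence the profile is value-restricted with respect to every ``cross-block'' triple, no matter which voters have been deleted.

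With that in hand, correctness of the reduction follows. For each $j$, the three voters $v_{u_j}$, $v_{w_j}$ and any one helper, restricted to $\{a_j,b_j,c_j\}$, give the orders $a_j\pref b_j\pref c_j$, $b_j\pref c_j\pref a_j$, $c_j\pref a_j\pref b_j$ --- that is, a \cyclicconfig{} --- so each of $a_j,b_j,c_j$ appears once in the best, once in the medium and once in the worst position, and the profile is neither \bestr, \mediumr, nor \worstr{} with respect to this triple, hence not value-restricted with respect to it. Therefore, if a set $D$ with $|D|\le k$ is deleted so that the remaining profile is value-restricted, then for every $j$ at least one of $v_{u_j},v_{w_j}$ must lie in $D$: otherwise, since $|D|\le k\le|U|<|U|+1$, at least one helper survives, and $\{a_j,b_j,c_j\}$ still carries the \cyclicconfig. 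So $U':=\{u\in U: v_u\in D\}$ is a vertex cover of size $\le|D|\le k$. Conversely, given a vertex cover $U'$ of size $\le k$, I would delete exactly the voters $\{v_u: u\in U'\}$. Fix a block $j$; as $U'$ contains $u_j$ or $w_j$, one of $v_{u_j},v_{w_j}$ is gone. If $v_{u_j}$ is deleted, every surviving voter ranks block $j$ either as $b_j\pref c_j\pref a_j$ (the surviving $v_{w_j}$, if any) or as $c_j\pref a_j\pref b_j$ (all helpers and all $v_u$ with $u\notin e_j$), so $a_j$ is never ranked first among the three and the profile is \bestr{} with respect to $\{a_j,b_j,c_j\}$; if instead $v_{w_j}$ is deleted, the same argument shows $b_j$ is never first. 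Together with the cross-block observation, the profile is value-restricted after these $|U'|\le k$ deletions.

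The main obstacle, as usual in this kind of reduction, is controlling unintended configurations: one must be sure that the gadgets for different edges do not combine into new forbidden triples, which is exactly what the fixed block order buys us --- every non-block triple is automatically value-restricted and stays so under deletion. The second ingredient is choosing enough helper voters --- here $|U|+1$, which exceeds any feasible budget $k\le|U|$ --- so that a valid deletion is forced to remove, for each edge, one of its two vertex voters, i.e.\ to act as a vertex cover, rather than cheaply destroying all \cyclicconfig{}s by deleting a handful of helpers.
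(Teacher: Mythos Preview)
Your proof is correct and follows essentially the same route as the paper: a \problemVC reduction with three block alternatives per edge, cyclic configurations inside each block, and the observation that cross-block triples are automatically value-restricted. The only cosmetic difference is that you add $|U|+1$ identical helper voters to guarantee a surviving ``third voter'' for every block, whereas the paper instead assumes without loss of generality that $k\le r-3$ so that some non-incident vertex voter always survives; both devices serve the same purpose.
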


\begin{proof}
  We provide a polynomial-time reduction from \problemVC to show \NP-hardness.
  We will present an example (see \cref{fig:vc-graph-reduced-instance}) for the reduction right after this proof.

  Let $(G=(U, E), k)$ denote a \problemVC instance with vertex set~$U=\{u_{1}, \ldots, u_r\}$ and edge set~$E=\{e_{1}, \ldots, e_s\}$;
  without loss of generality we assume that the input graph~$G$ is connected and
  that graph~$G$ has at least four vertices, that is, $r\ge4$, and that $k \le r-3$.
  
  The set of alternatives consists of three \emph{edge alternative}s~$a_j, b_j$,
  and $c_j$ for each edge~$e_j \in E$.
  For each vertex in $U$, we construct one voter.
  That is, we define $A \coloneqq \{a_j,b_j,c_j \mid e_j \in E\}$ and $V \coloneqq \{v_i \mid u_i \in U\}$.
  In total, the number~$m$ of alternatives is~$3s$ and the number~$n$
  of voters is~$r$.

  Now we specify the \preferenceorder{} of each voter.
  Every voter prefers~$\{a_{j}, b_{j},$ $c_{j}\}$ to $\{a_{j'}$, $b_{j'}$, $c_{j'}\}$ whenever $j < j'$.
  For each edge~$e_j$ with two incident vertices~$u_i$ and $u_{i'}$, $i<i'$,
  and for each non-incident vertex~$u_{i''} \notin e_j$,
  the following holds:
  \begin{alignat*}{2}    
    &\defvoter v_i   &\colon &\;\; c_j \pref a_j \pref b_j, \\
    &\defvoter v_{i'} &\colon &\;\;  b_j \pref c_j \pref a_j,\\
    &\defvoter v_{i''} &\colon &\;\; a_j \pref b_j \pref c_j.
  \end{alignat*}
  % All voters prefer $\{a_{j}, b_{j}, c_{j}\}$
  % to~$\{a_{j'}, b_{j'}, c_{j'}\}$ whenever $j < j'$.
  % Moreover, every voter~$v$ has $a_j \pref b_j \pref c_j$ if $v\notin e_j$.
  % Otherwise, let edge~$e_j=\{u_i,u_{i'}\}$ with $i < i'$.
  % Then, the following holds:
  % \begin{alignat*}{3}
  %   &\defvoter v_i    &: &\;\; c_j \pref a_j \pref b_j, \\
  %   &\defvoter v_{i'}  &: &\;\;  b_j \pref c_j \pref a_j.
  % \end{alignat*}
  % \configs{
  %   {\defvoter v_i   \colon c_j \pref a_j \pref b_j,}
  %   {\defvoter v_{i'} \colon b_j \pref c_j \pref a_j.}
  % }
  In this way, the two vertex voters that correspond to the vertices in $e_j$ and any voter~$v_{z}$ not in $e_j$ form a \cyclicconfig{} with regard to the three edge alternatives~$a_j, b_j, \text{ and } c_j$.
  By the definition of \cyclicconfig{s}, this configuration is also best-diverse, medium-diverse, and worst-diverse.

  The maximum number of voters to delete equals the maximum vertex cover size~$k$.
  This completes the construction which can be done in polynomial time.

  It remains to show its correctness.
  In particular, we show that~$(G=(U, E), k)$ has a vertex cover
  of size at most~$k$ if and only if the constructed profile can be made
  \valuer{} by deleting at most~$k$ voters.

  For the ``only if'' part, suppose that $U'\subseteq U$ with $|U'|\le k$ is a vertex cover.
  We show that, after deleting the voters corresponding to the vertices in $U'$ 
  the resulting profile is \valuer.
  Suppose for the sake of contradiction that the resulting profile is 
  \emph{not} \valuer.
  That is, it still contains 
  a \cyclicconfig{}~$\sigma$.
  By the definition of \cyclicconfig{s}, it must hold that 
  for each pair of alternatives~$x$ and $y$ in $\sigma$,
  there are two voters, one preferring~$x$ to $y$ and the other preferring $y$ to $x$.
  Together with the fact that all voters agree on the relative order of 
  two edge alternatives that correspond to different edges, 
  this implies that %$\sigma$ contains at most one dummy alternative.
  % Edge alternatives that correspond to different edges are ranked equally by all voters.
  % However, in~$\sigma$, for each pair of alternatives~$x$ and $y$,
  % there is one voter with~$x \pref y$ and one voter with~$y \pref x$.
  %Hence, 
  the three alternatives~$a_j$, $b_j$, and $c_j$ in $\sigma$
  correspond to the same edge~$e_j$.
  Furthermore, $\sigma$~involves two voters corresponding to the incident vertices from $e_j$,
  and one other voter,
  because all voters corresponding to vertices 
  not in $e_j$ have the same ranking~$a_j \pref b_j \pref c_j$.
  Then, edge~$e_j$ is not covered by any vertex in~$U'$---a contradiction.

  For the ``if'' part, suppose that the profile becomes \valuer{} after the removal of a subset~$V'\subseteq V$ of voters with $|V'|\le k$.
  That is, no three remaining voters form a \cyclicconfig{}.
  We show by contradiction that~$V'$ corresponds to a vertex cover of graph~$G$.
  Assume towards a contradiction that an edge~$e_j$ is not covered
  by the vertices corresponding to the voters in~$V'$.
  Then, the two voters corresponding to the vertices that are incident with
  edge~$e_j$ together with a third voter form a \cyclicconfig{}
  with regard to the three alternatives~$a_j$, $b_j$ and $c_j$---a contradiction.
  Thus, $V'$ corresponds to a vertex cover of graph~$G$ and its size is at most $k$.
\end{proof}

\tikzstyle{sol}=[circle, minimum size=18pt, inner sep = 1.5pt,draw, fill=black!10, font=\small]
\tikzstyle{vertex} = [circle, minimum size=18pt, inner sep = 1.5pt, draw, font=\small]
%\tikzstyle{vertex}=[circle, minimum size=15pt, inner sep = 1pt, draw=black!50, fill=white]
\tikzstyle{line}=[draw=black,-]
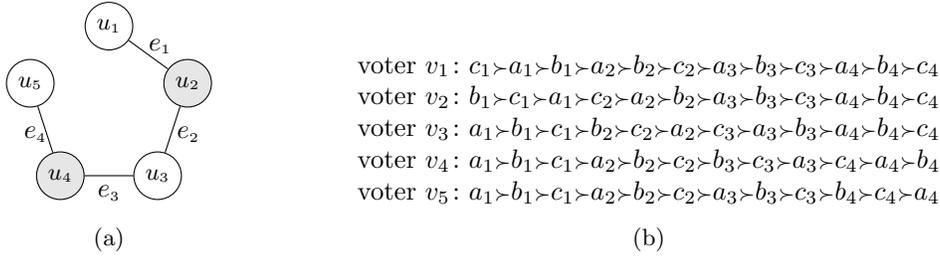
\begin{figure}
  \centering
  \begin{subfigure}[b]{.25\textwidth}
    \centering
    \begin{tikzpicture}
      % 
      % 
      %      u1
      %        \
      % u5     u2 
      %  \     /
      %   u4-u3
      % 
      % 
      \draw \foreach \x/\y/\t in {90/1/vertex,18/2/sol,306/3/vertex,234/4/sol,162/5/vertex} {
        (\x:1.1) node[\t] (v\y) {$u_{\sss \y}$} % -- (\x-72:1)
      };
      
      \draw (v1) edge[line] node[above,pos=.8] (e1) {$e_{\sss 1}$} (v2);
      \draw (v2) edge[line] node[pos=.6,xshift=1.5ex] (e2) {$e_{\sss 2}$} (v3);
      \draw (v3) edge[line] node[midway,yshift=-1.5ex] (e3) {$e_{\sss 3}$} (v4);
      \draw (v4) edge[line] node[pos=.4,xshift=-1ex] (e4) {$e_{\sss 4}$} (v5);
      %  \draw \foreach \x/\y in {1/2,2/3,3/4,4/5} {
      %   (v\x) edge[line] node[above] (e\x) {$e_{\sss \x}$} (v\y)
      % };
      
    \end{tikzpicture}
    \caption{}\label{fig:vc-graph-reduced-instance-graph}
  \end{subfigure}
  \qquad
  \begin{subfigure}[b]{.65\textwidth}
    \centering
    
    \begin{tabular}{@{}l@{}}
      $\defvoter v_1\colon c_1 \spref a_1 \spref b_1 \spref a_2 \spref b_2 \spref c_2 \spref a_3 \spref b_3 \spref c_3 \spref a_4 \spref b_4 \spref c_4$\\
      $\defvoter v_2\colon b_1 \spref c_1 \spref a_1 \spref c_2 \spref a_2 \spref b_2 \spref a_3 \spref b_3 \spref c_3 \spref a_4 \spref b_4 \spref c_4$\\
      $\defvoter v_3\colon a_1 \spref b_1 \spref c_1 \spref b_2 \spref c_2 \spref a_2 \spref c_3 \spref a_3 \spref b_3 \spref a_4 \spref b_4 \spref c_4$\\
      $\defvoter v_4\colon a_1 \spref b_1 \spref c_1 \spref a_2 \spref b_2 \spref c_2 \spref b_3 \spref c_3 \spref a_3 \spref c_4 \spref a_4 \spref b_4$\\
      $\defvoter v_5\colon a_1 \spref b_1 \spref c_1 \spref a_2 \spref b_2 \spref c_2 \spref a_3 \spref b_3 \spref c_3 \spref b_4 \spref c_4 \spref a_4$
    \end{tabular}
    \caption{}\label{fig:vc-graph-reduced-instance-profile}
  \end{subfigure}
    \caption{(a)
      An undirected graph with $5$ vertices and $4$ edges. The graph has a vertex cover of size~$2$ (filled in gray).
      (b) A reduced instance~$((A, V), k=2)$ of \problemValueVD, 
      where $A = \{a_i, b_i, c_i \mid 1\le i \le 4\}$ and $V=\{v_1, v_2, v_3, v_4, v_5\}$.
      Deleting $v_2$ and $v_4$ results in a \valuer{} profile.
      In fact, the resulting profile is also \bestr, single-peaked (and hence \worstr), and group-separable (and hence \mediumr).}
    \label{fig:vc-graph-reduced-instance}
\end{figure}

We illustrate our reduction through an example. 
\cref{fig:vc-graph-reduced-instance}(\subref{fig:vc-graph-reduced-instance-graph}) 
depicts an undirected graph with $5$ vertices and $4$ edges.
Vertices $u_2$ and $u_4$ form a vertex cover of size two.
\cref{fig:vc-graph-reduced-instance}(\subref{fig:vc-graph-reduced-instance-profile})
shows the reduced instance with~$5$ voters and $3\cdot 4 = 12$ alternatives.
Deleting voters~$v_2$ and~$v_4$ results in a \valuer{} profile which is also \bestr, single-peaked (and hence \worstr), 
and group-separable (and hence \mediumr).

\medskip
Taking a closer look at the reduction shown in the proof of \cref{thm:value-vd_npc},
the constructed profile contains \cyclicconfig{s} which are simultaneously best-diverse, worst-diverse, and medium-diverse.
It turns out that we can use the same construction to show the following three \NP-hardness results with regard to the \bestr, \worstr, and \mediumr{} properties.

\begin{prop}\label{prop:best-worst-medium-vd_npc}
  \problemPiVDlong is \NP-complete for every property~$\Pi \in \{$\bestr,
    \worstr, \mediumr$\}$.
\end{prop}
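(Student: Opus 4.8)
The plan is to reuse verbatim the reduction from \problemVC constructed in the proof of \cref{thm:value-vd_npc}. Recall that the only ingredient needed for the correctness argument was the observation that, for each edge $e_j$, the two voters corresponding to the endpoints of $e_j$ together with any voter corresponding to a non-incident vertex form a \cyclicconfig{} with respect to the triple $\{a_j, b_j, c_j\}$, and conversely that every \cyclicconfig{} in the constructed profile arises in exactly this way (since all voters agree on the relative order of edge alternatives belonging to different edges, any forbidden triple must consist of the three alternatives of a single edge). The key point to verify is that the \emph{same equivalence} holds when ``\cyclicconfig'' is replaced by ``\bestinconsistentconfig'', ``\worstinconsistentconfig'', or ``\mediuminconsistentconfig''; granting that, the proof of \cref{thm:value-vd_npc} goes through word for word for each of the three properties, and combined with membership in \NP (already established in \cref{sub:central problems}, since each of these properties is characterised by a finite set of forbidden finite substructures) we obtain \NP-completeness.

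First I would record the easy direction: a \cyclicconfig{} is simultaneously a \bestinconsistentconfig, a \worstinconsistentconfig, and a \mediuminconsistentconfig{} (this is already stated in the proof of \cref{thm:value-vd_npc}, and is an immediate check against the three definitions — in the \cyclicconfig{} each of $a,b,c$ is top of exactly one voter's order, bottom of exactly one, and middle of exactly one). Hence after deleting the voters of a vertex cover, the resulting profile contains no \cyclicconfig{}s and therefore, for \emph{the particular triples $\{a_j,b_j,c_j\}$}, no \bestinconsistentconfig{}s / \worstinconsistentconfig{}s / \mediuminconsistentconfig{}s either — but I must still rule out bad configurations on \emph{other} triples of alternatives. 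This is where the structural property of the construction is used: any triple containing two edge alternatives of different edges cannot support any of these three configurations, because all voters rank those two alternatives the same way, whereas each of the three ``diverse'' configurations requires, for every pair in the triple, two voters disagreeing on that pair. So the only triples that can support a forbidden configuration are the $\{a_j,b_j,c_j\}$, and on each such triple the surviving voters are: the (at least one) surviving endpoint voter of $e_j$, which ranks $a_j,b_j,c_j$ in one of the two ``endpoint'' patterns, plus all non-incident voters, which rank them $a_j \pref b_j \pref c_j$. A short case analysis shows that such a voter sub-profile — one or two fixed orders, together with copies of $a_j \pref b_j \pref c_j$ — is best-, worst-, and medium-restricted with respect to $\{a_j,b_j,c_j\}$; hence the whole profile has the $\Pi$-property.

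For the converse direction I would argue exactly as in \cref{thm:value-vd_npc}: if $V'$ with $|V'| \le k$ is deleted and some edge $e_j$ is uncovered, then the two endpoint voters of $e_j$ survive, and together with any surviving non-incident voter (one exists since $k \le r-3$) they form a \cyclicconfig{} on $\{a_j,b_j,c_j\}$, which — being simultaneously best-, worst-, and medium-diverse — witnesses that the profile fails \emph{each} of the three properties; contradiction. Therefore $V'$ is a vertex cover.

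The main obstacle — and really the only thing beyond copying the earlier proof — is the forward-direction bookkeeping: one must be careful that deleting a vertex cover kills not just the ``intended'' diverse configurations but every best-/worst-/medium-diverse triple in the profile. This reduces, as sketched, to (i) the ``different edges'' observation that no mixed triple can be diverse, and (ii) the finite check that one or two endpoint orders plus arbitrarily many copies of $a_j\pref b_j\pref c_j$ never form a \bestinconsistentconfig, \worstinconsistentconfig, or \mediuminconsistentconfig{} on $\{a_j,b_j,c_j\}$ — for instance the alternative placed last by the $a_j\pref b_j\pref c_j$ voters is never last for the endpoint voters, so there is an alternative that is never worst (worst-restriction), and symmetric arguments handle the other two. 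Once these two points are in place the statement follows, so I would keep the write-up short and lean heavily on ``as in the proof of \cref{thm:value-vd_npc}''.
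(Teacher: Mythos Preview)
Your approach coincides with the paper's: reuse the \cref{thm:value-vd_npc} construction verbatim, observe that a \cyclicconfig{} is simultaneously best-, worst-, and medium-diverse (so the ``if'' direction is immediate), and for the ``only if'' direction verify that after deleting a vertex cover no diverse configuration of any of the three types survives on any triple.

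There is, however, one genuine error in your forward-direction bookkeeping. You assert that ``each of the three `diverse' configurations requires, for every pair in the triple, two voters disagreeing on that pair''. This holds for \bestinconsistentconfig{}s and \worstinconsistentconfig{}s, but it is \emph{false} for \mediuminconsistentconfig{}s: the three orders
\[
b\pref a\pref c,\qquad a\pref b\pref c,\qquad a\pref c\pref b
\]
place each of $a,b,c$ in the middle exactly once, yet all three voters agree on $a\pref c$. So your argument for ruling out mixed-edge triples does not go through in the medium case. The conclusion is still correct, and the paper supplies the right replacement argument: in a \mediuminconsistentconfig{} every alternative must occur in the middle position for some voter, but if the triple contains an alternative from a different edge-block than the other two, that alternative is ranked by every voter either above both others or below both others, hence never in the middle. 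With this patch (and the minor correction that after deleting a vertex cover \emph{at most} one---not at least one---endpoint voter of each $e_j$ survives), your proof is complete and matches the paper's.
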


\begin{proof}
  Let $(G=(U, E),k)$ be a \problemVC instance and
  let $A$ and $V$ be the set of alternatives and the set of voters that are constructed in the same way as in the proof of \cref{thm:value-vd_npc}.
  Let $k$ be the number of voters to be deleted.
  As we already observed in that proof,
  for each edge~$e_j \in E$,
  the two vertex voters that correspond to the vertices in $e_j$ and any other
  voter~$v_z$ form a cyclic configuration, that is, a best-diverse, worst-diverse, and \mediuminconsistentconfig.
  It remains to show that 
  $(G=(U, E), k)$ has a vertex cover
  of size at most~$k$ if and only if the constructed profile can be made
  \bestr{} (or \worstr{} or \mediumr) by deleting at most~$k$ voters.
  
  For the ``if'' part, suppose that the profile becomes \bestr{} (or \worstr{} or \mediumr{})
  by deleting a subset~$V'\subseteq V$ of voters with $|V'|\le k$.
  Then, the resulting profile is also \valuer.
  Thus, we can use the ``if'' part in the proof of \cref{thm:value-vd_npc} 
  and obtain that the vertices corresponding to $V'$ form a vertex cover of size at most~$k$.

  For the ``only if'' part, suppose that $U'\subseteq U$ with $|U'|\le k$ is a vertex cover.
  As in the ``only if'' part proof of \cref{thm:value-vd_npc},
  we can show that deleting the voters corresponding to $U'$ results in a \bestr{} and a \worstr{} profile.
  
  As for the \mediumr{} property,
  suppose towards a contradiction that after deleting the voters corresponding to the vertices in $U'$ 
  there is still a \mediuminconsistentconfig{}~$\sigma$.
  By the definition of \mediuminconsistentconfig{s}, 
  we know that in $\sigma$, each alternative is ranked between the other two by one voter.
  This implies that $\sigma$ involves three alternatives that correspond to the same edge~$e_j$ and
  involves two voters that correspond to $e_j$'s incident vertices.
  Thus, $e_j$ is an uncovered edge---a contradiction.
\end{proof}

\subsection{Alternative Deletion}\label{subsec:reduction from vc-alternative deletion}

Next, we consider the case of deleting alternatives.
Just as for the voter deletion case, we first show \NP-hardness of deciding the distance
to \valuer{} profiles.
% In essence, we encode each vertex of a given \textsc{Vertex Cover} instance with an alternative such that two alternatives corresponding to two adjacent vertices,
% and we encode each edge with two voters~$v, v'$.
% We construct the preference orders of the two voters~$v, v'$ such that together with one additional voter, $v$ and $v'$ form a \worstinconsitentconfig{} regarding the two vertices that correspond to an edge.
% In this way, deleting one alternative is equivalent to selecting one vertex to the vertex cover solution.
Then, we show how to adapt the reduction to also work for
deciding the distance to \bestr{}, \worstr{}, and \mediumr{} profiles, respectively.

\begin{theorem}\label{thm:value-ad_npc}
  \problemValueAD is \NP-complete.
\end{theorem}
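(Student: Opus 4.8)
\problemValueAD lies in \NP, as already noted: membership in \valuer{} is decidable in polynomial time because the property is characterized by finitely many forbidden finite substructures. So the plan is to prove \NP-hardness by a polynomial-time reduction from \problemVC, following the same strategy as the reduction behind \cref{thm:value-vd_npc} but with the roles of voters and alternatives swapped: there, voters played the part of vertices and triples of alternatives the part of edges; here alternatives will play the part of vertices and small groups of voters the part of edges.

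\textbf{Construction.} Given $(G=(U,E),k)$ with $U=\{u_1,\dots,u_r\}$, $E=\{e_1,\dots,e_s\}$ (we may assume, as in \cref{thm:value-vd_npc}, that $G$ is connected with $r\ge 4$ and $k\le r-3$), I would introduce one alternative $x_i$ for each vertex $u_i$, plus one private auxiliary alternative $w_j$ for each edge $e_j$. I would fix a reference order of all alternatives, and for every edge $e_j=\{u_i,u_{i'}\}$ add a constant number of voters whose \preferenceorder{s} coincide with the reference order outside a short window and which, restricted to the triple $\{x_i,x_{i'},w_j\}$, form a \cyclicconfig{}. By the remarks following \cref{thm:value-vd_npc} this triple witnesses that the profile is not \valuer, and the witness disappears exactly when one of $x_i$, $x_{i'}$, $w_j$ is deleted. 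The deletion budget is $k$. (An alternative design, which makes the ``if'' direction below completely trivial, is to give each edge $k+1$ private auxiliaries $w_j^{(0)},\dots,w_j^{(k)}$ and one \cyclicconfig{} per auxiliary, so that killing all of $e_j$'s witnesses without deleting $x_i$ or $x_{i'}$ would need $k+1>k$ deletions.) Either way the instance has polynomially many voters and alternatives and is constructed in polynomial time.

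\textbf{Correctness.} For the ``only if'' direction, given a vertex cover $U'$ with $|U'|\le k$, I delete $\{x_i : u_i\in U'\}$: for every edge $e_j=\{u_i,u_{i'}\}$ one of $x_i,x_{i'}$ is removed, so every \cyclicconfig{} attached to $e_j$ is destroyed, and --- this is the delicate point --- no \emph{other} triple of surviving alternatives is a witness, so the remaining profile is \valuer. For the ``if'' direction, suppose deleting a set $D$ with $|D|\le k$ makes the profile \valuer. Since whether a triple is \valuer{} depends only on the preferences restricted to that triple, $D$ must contain an alternative of every triple that is a witness in the original profile; in particular $D\cap\{x_i,x_{i'},w_j\}\neq\emptyset$ for each edge $e_j=\{u_i,u_{i'}\}$. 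Setting $S=\{u_i : x_i\in D\}$ and adding, for each edge hit by $D$ only through its private auxiliary, one of its endpoints, I obtain a vertex cover; since distinct edges have distinct auxiliaries, its size is at most $|D|\le k$. (Equivalently, $\min_{S\subseteq U}\bigl(|S|+|\{e\in E:e\cap S=\emptyset\}|\bigr)=\tau(G)$, proved by replacing each uncovered edge in an optimal $S$ by one of its two endpoints.)

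\textbf{Main obstacle.} The genuine work is the parenthetical claim used in both directions: ensuring that the only non-\valuer{} triples in the constructed profile are the designed ones $\{x_i,x_{i'},w_j\}$. This is delicate exactly because a vertex alternative $x_i$ takes part in the gadgets of \emph{all} edges incident to $u_i$, so the local reorderings carried out by those gadgets might combine to yield a \cyclicconfig{} on an unintended triple --- most notably on $\{x_i,x_{i'},x_{i''}\}$ when $u_i,u_{i'},u_{i''}$ span a triangle of $G$, which a naive ``reference order plus local swap'' construction does not avoid. I expect the bulk of the proof to be the careful choice of the reference order and of the gadget \preferenceorder{s} --- where each auxiliary alternative sits, and how narrow the windows are --- arranged so that on every triple other than the designed ones some alternative is top in all voters, or bottom in all voters, or never the middle one; that is, every such triple is best-, worst-, or medium-restricted, which keeps the profile \valuer{} away from the gadgets.
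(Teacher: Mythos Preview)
Your high-level plan---reduce from \problemVC with one alternative per vertex and per-edge voter gadgets creating \cyclicconfig{}s---is the paper's plan too, and your ``if'' direction (including the exchange argument that replaces each deleted private auxiliary by one endpoint) is fine. But you have correctly located, and then not resolved, the only nontrivial step: you never actually give a construction for which the ``only if'' direction goes through. Writing ``I expect the bulk of the proof to be the careful choice of the reference order and of the gadget \preferenceorder{s}'' is a promissory note, not a proof; and with three fresh voters per edge each permuting a window containing $x_i,x_{i'}$, the triangle problem you flag is real and does not obviously go away.

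The paper's construction differs from both of your variants in a way that dissolves this obstacle. It uses a single pool of $k{+}1$ \emph{shared} dummy alternatives~$D$ (not per-edge auxiliaries), one reference voter $v_0$ with order $\seq{D}\pref\seq{A}$, and only \emph{two} voters per edge $e_i=\{u_j,u_{j'}\}$, $j<j'$:
\[
v_{2i-1}\colon a_j\pref a_{j'}\pref \seq{D}\pref \seq{A\setminus\{a_j,a_{j'}\}},\qquad
v_{2i}\colon a_{j'}\pref \seq{D}\pref \seq{A\setminus\{a_{j'}\}}.
\]
Then $v_0,v_{2i-1},v_{2i}$ are cyclic on $a_j,a_{j'},d$ for every $d\in D$, which (since $|D|=k{+}1$) gives the ``if'' direction just as in your second variant. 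The payoff is in the ``only if'' direction: every voter agrees with the canonical order except for lifting at most two vertex alternatives to the very top. Hence all voters agree on the internal order of~$D$ (so any surviving \cyclicconfig{} contains at most one dummy), and any voter that ranks two vertex alternatives above a dummy, or that inverts the canonical order of two vertex alternatives, is \emph{by construction} an edge voter for an edge on exactly those two vertices---which must then be uncovered. This traceability turns ``no unintended witnesses'' into a two-line case analysis, whereas your three-voters-per-edge design scrambles a window containing both endpoints and gives you no such handle.
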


\begin{proof}
  We reduce from \problemVC to \problemValueAD.
  Let $(G=(U, E), k)$ denote a \problemVC instance with vertex set~$U=\{u_{1}, \ldots, u_r\}$ 
  and edge set~$E=\{e_{1}, \ldots, e_s\}$.
  The set of alternatives consists of one \emph{vertex alternative}~$a_j$ for each vertex~$u_j$ in $U$
  and of $k+1$ additional \emph{dummy alternatives}. 
  Let $A$ denote the set of all vertex alternatives and
  let $D$ denote the set of all dummy alternatives.
  We arbitrarily fix a canonical order~$\seq{D}$ of~$D$ and
  we set $\seq{A} \coloneqq a_{1} \pref a_{2} \pref \ldots \pref a_{r}$.
  The number~$m$ of constructed alternatives is $r+k+1$.

  We introduce a voter~$v_{0}$ with the canonical \preferenceorder~$\seq{D}\pref \seq{A}$.
  For each edge~$e_i = \{u_{j},  u_{j'}\}$ with $j < j'$, we introduce
  two \emph{\edgevoter{s}} $v_{2i-1}$ and $v_{2i}$ with \preferenceorder{s}
  \begin{alignat*}{3}
    &\defvoter v_{2i-1}    &\colon &\;\;  a_j \pref a_{j'} \pref \seq{D} \pref \seq{A \setminus \{a_j, a_{j'}\}}, \\
    &\defvoter v_{2i}      &\colon &\;\;  a_{j'} \pref \seq{D} \pref  \seq{A\setminus \{a_{j'}\}}\text{.}
  \end{alignat*}
  % \[v_{2i-1}\colon a_j \pref a_{j'} \pref \seq{D} \pref \seq{A \setminus \{a_j, a_{j'}\}}\] and 
  % \[v_{2i} \colon a_{j'} \pref \seq{D} \pref  \seq{A\setminus \{a_{j'}\}},\] respectively.
  Together with voter~$v_0$, the two voters~$v_{2i-1}$ and~$v_{2i}$ form a
  \cyclicconfig{} with respect to the two vertex alternatives~$a_{j}$, $a_{j'}$ and an arbitrary dummy alternative from $D$.
  Let $V$ denote the set of all voters.
  In total, the number~$n$ of constructed voters is $2s+1$.
  The maximum number of alternatives to delete equals the maximum vertex cover size~$k$.
  This completes the construction.

  Our reduction runs in polynomial time.  
  It remains to show that graph~$G$ has a vertex cover of size at most~$k$ 
  if and only if the constructed profile can be made \valuer{} by deleting at most $k$~alternatives. 

  For the ``only if'' part, suppose that $U'\subseteq U$ with $|U'|\le k$ is
  a vertex cover.
  We show that after deleting the vertex alternatives corresponding to~$U'$,
  the resulting profile is \valuer{}.
  Suppose for the sake of contradiction that the resulting profile is not \valuer, 
  that is, it contains a \cyclicconfig~$\sigma$.
  By definition, it must hold that 
  for each pair of alternatives~$x$ and $y$ in $\sigma$,
  there are two voters, one preferring $x$ to $y$ and the other preferring $y$ to $x$.
  Together with the fact that each voter agrees on the relative order of two distinct dummy alternatives, % in the same way 
  this implies that $\sigma$ contains at most one dummy alternative.
  % All voters have the same ranking over $D$.
  % However, in~$\sigma$, for each pair of alternatives~$\{x,y\}$
  % there is one voter with~$x \pref y$ and one voter with~$y \pref x$.
  % Hence, $\sigma$ contains at most one
  % dummy alternative.
  But if $\sigma$ contains one dummy alternative~$d\in D$, 
  then there is a voter with preferences~$a_j \pref a_{j'} \pref d$ where $a_j, a_{j'} \in A$, 
  which means that edge~$\{u_j, u_{j'}\}$ is not covered by~$U'$.
  Hence, $\sigma$ contains no dummy alternative.
  This means that~$\sigma$ contains three vertex alternatives $a_{j}$, $a_{j'}$, and $a_{j''}$
  with $j< j' < j''$ and by the definition of \cyclicconfig{s},
  $\sigma$ involves three voters with preferences $\{a_{j}, a_{j'}\}  \pref  a_{j''}$,
  $\{a_{j}, a_{j''}\}  \pref  a_{j'}$, and $\{a_{j'}, a_{j''}\}  \pref  a_{j}$, respectively.
  However, the last preference implies that $\{u_{j'}, u_{j''}\}$ is an
  edge which is not covered by $U'$---a contradiction.

  For the ``if'' part, suppose that the constructed profile is a yes-instance of \problemValueAD.
  Let $A'\subseteq A\cup D$ be the set of deleted vertex alternatives with $|A'|\le k$.
  We show that the vertex set~$U'$ corresponding to $A'$ form a vertex cover of graph~$G$ and has size at most $k$.
  Clearly, $|U'|\le k$.
  Assume towards a contradiction that $e_i=\{u_{j}, u_{j'}\}, j<j'$, 
  is not covered by $U'$.
  Since $|D| > k$, at least one dummy alternative~$d$ is not deleted.
  Then, $v_0$ and the two edge voters~$v_{2i-1}, v_{2i}$ form a
  \cyclicconfig{} with regard to $a_{j}, a_{j'}, d$---a contradiction.
\end{proof}

Using the same construction as in the last proof,
we can show that achieving best-restriction, worst-restriction, 
or medium-restriction via deleting the fewest number of alternatives is intractable.

\begin{prop}\label{prop:best-worst-medium-ad_npc}
  \problemPiAD is \NP-complete for every property~$\Pi\in\{$\bestr{}, \worstr{}, \mediumr{}$\}$.
\end{prop}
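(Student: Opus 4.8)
The plan is to reuse, for $\Pi\in\{\worstr,\mediumr\}$, the very construction from the proof of \cref{thm:value-ad_npc}, and to reach $\Pi=\bestr$ separately by a reversal. Recall that in that construction voter~$v_0$ (with $\langle D\rangle\pref\langle A\rangle$) together with the edge voters~$v_{2i-1},v_{2i}$ forms a \cyclicconfig{} on each triple $\{a_j,a_{j'},d\}$ belonging to an edge $e_i=\{u_j,u_{j'}\}$, and that such a \cyclicconfig{} is in particular a \worstinconsistentconfig{} and a \mediuminconsistentconfig{} with respect to that triple; so for the forward implication I again delete the vertex alternatives of a vertex cover. The two ``if'' directions are then immediate: a profile that becomes \worstr{} (resp.\ \mediumr) after deleting a set~$S$ of alternatives is in particular \valuer{} after deleting~$S$, so the ``if'' direction of \cref{thm:value-ad_npc} already returns a vertex cover of size at most~$k$.

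The substantial part is to show that deleting $\{a_j: u_j\in U'\}$ for a vertex cover~$U'$ of size at most~$k$ destroys every \worstinconsistentconfig{}, resp.\ \mediuminconsistentconfig. Two structural properties of the construction carry the argument and are preserved by the deletion: (i)~in every voter's \preferenceorder{} each vertex alternative lies either above \emph{all} dummy alternatives or below \emph{all} of them (and the dummies keep a fixed order), and (ii)~the only voters ranking two vertex alternatives above a dummy are the voters~$v_{2i-1}$, and these rank above the dummies exactly the two endpoints of~$e_i$. From~(i) a forbidden triple~$T$ can contain at most one dummy (all voters agree on any pair of dummies, and no vertex alternative ever lies strictly between two dummies). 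If $T=\{a_p,a_q,d\}$ with $p<q$, then in a \worstinconsistentconfig{} the voter that ranks~$d$ last in~$T$ has $a_p,a_q$ above~$d$, while in a \mediuminconsistentconfig{} the voter that ranks~$a_q$ in the middle of~$T$ must realize $a_p\pref a_q\pref d$ (the order $d\pref a_q\pref a_p$ is excluded since, by~(i), $a_q$ below a dummy forces $a_p\pref a_q$); in both cases~(ii) identifies that voter as $v_{2i-1}$ for $e_i=\{u_p,u_q\}$, and as $a_p,a_q$ were not deleted we get $u_p,u_q\notin U'$, contradicting that $U'$ covers~$\{u_p,u_q\}$. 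If $T=\{a_p,a_q,a_t\}$ with $p<q<t$, then because $v_0$ lists the vertex alternatives by increasing index, the voter that puts the \emph{extreme} element into the required slot --- $a_p$ last for the worst case, $a_t$ in the middle for the medium case --- must be some~$v_{2i-1}$, and inspecting its \preferenceorder{} forces $e_i=\{u_q,u_t\}$ (worst case) or $e_i\in\{\{u_p,u_t\},\{u_q,u_t\}\}$ (medium case), once more an uncovered edge.

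For $\Pi=\bestr$ I would \emph{not} use this construction directly. Indeed, since $v_0$ ranks every dummy above every vertex alternative, a \bestinconsistentconfig{} built from one dummy and two vertex alternatives coming from \emph{different} edges can survive the deletion of a vertex cover, so the profile above need not become \bestr. Instead, I reduce \problemWorstAD to \problemBestAD: reversing every voter's \preferenceorder{} turns \worstinconsistentconfig{}s into \bestinconsistentconfig{}s and vice versa and commutes with deletion of alternatives, so it maps yes-instances of \problemWorstAD to yes-instances of \problemBestAD and back; \NP-hardness of \problemWorstAD thus transfers to \problemBestAD. Membership of all three problems in~\NP is already recorded in \cref{sub:central problems}.

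The step I expect to be the main obstacle is exactly the spurious-configuration analysis in the ``only if'' direction sketched above. For the rigid \cyclicconfig{}s of \cref{thm:value-ad_npc} a one-line argument sufficed; \worstinconsistentconfig{}s and \mediuminconsistentconfig{}s impose weaker constraints, and the argument only goes through because of the ``dummy-block'' property~(i) of the \preferenceorder{s}. As just noted, that property is not enough for \bestinconsistentconfig{}s, which is precisely why the \bestr{} case has to be routed through \problemWorstAD by reversal rather than handled on the same construction.
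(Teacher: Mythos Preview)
Your proof is correct. For $\Pi\in\{\worstr,\mediumr\}$ your case analysis coincides with the paper's (and is in fact more explicit about why the spurious orders $d\pref a_q\pref a_p$ and the like are impossible). Where you genuinely diverge is the $\Pi=\bestr$ case: the paper asserts that ``just as in the `only if' part proof of \cref{thm:value-ad_npc}'' the same deletion yields a \bestr{} profile, whereas you argue that this fails and route through \problemWorstAD via reversal instead.

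Your skepticism is warranted. Take a graph on four vertices with edges $e_1=\{u_1,u_2\}$, $e_2=\{u_3,u_4\}$, $e_3=\{u_2,u_3\}$ and the vertex cover $U'=\{u_2,u_3\}$. After deleting $a_2,a_3$, voter~$v_0$ ranks $d\pref a_1\pref a_4$, voter~$v_1$ ranks $a_1\pref d\pref a_4$, and voter~$v_3$ ranks $a_4\pref d\pref a_1$ for any $d\in D$; this is a \bestinconsistentconfig{}, so the resulting profile is not \bestr. Hence the paper's direct claim does not go through as stated, and your reversal argument is not merely an alternative but actually a repair. The cost is minimal---reversing all \preferenceorder{s} is a trivial polynomial-time transformation that bijects \bestinconsistentconfig{}s with \worstinconsistentconfig{}s and commutes with alternative deletion---so nothing is lost by taking this route.
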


\begin{proof}
  Let $((U, E), k)$ denote a \problemVC instance with vertex set~$U=\{u_{1}, \ldots, u_r\}$ 
  and edge set~$E=\{e_{1}, \ldots, e_s\}$.
  Let $A$, $D$, and $V$ be the sets constructed in the same way as in the proof of \cref{thm:value-ad_npc}.

  It remains to show that $((U, E), k)$ has a vertex cover of size at most~$k$ 
  if and only if the constructed profile can be made \bestr{} (or \worstr{} or \mediumr{}) by deleting at most $k$~alternatives. 
 
  For the ``if'' part, suppose that the profile becomes \bestr{} (or \worstr{} or \mediumr{}) after deleting a set~$A'\subseteq A\cup D$  of at most~$k$~alternatives.
  Then, the resulting profile is also \valuer.
  Thus, we can use the ``if'' part in the proof of \cref{thm:value-ad_npc} and obtain that the vertices corresponding to $A'$ form a vertex set of size at most~$k$.
  
  For the ``only if'' part, suppose that $U'\subseteq U$ with $|U'|\le k$ is a vertex cover.
  Just as in the ``only if'' part proof of \cref{thm:value-ad_npc},
  we can show that deleting the alternatives corresponding to $U'$ results in a \bestr{} and a \worstr{} profile.
  
  Now, we consider the \mediumr{} property.
  Suppose for the sake of contradiction that the resulting profile is not \mediumr,
  that is, it contains a \mediuminconsistentconfig~$\sigma$.
  Since all voters rank $\seq{D}$ 
  and since no voter rank $d \pref a_j \pref d'$ with $d,d' \in D$ and $a_j \in A$, 
  configuration~$\sigma$ contains at most one dummy alternative.
  Now, if $\sigma$ involves one dummy alternative~$d\in D$ and two vertex alternatives~$a_{j}, a_{j'} \in A$ with $j < j'$, 
  then the voter ranking $a_{j'}$ between $a_{j}$ and $d$ must rank $a_j \pref a_{j'} \pref d$. 
  But this means that edge~$\{u_j,u_{j'}\}$ is uncovered---a contradiction. 
  Hence, $\sigma$ contains no dummy alternative. 
  This means that $\sigma$ involves three vertex alternatives~$a_{j}, a_{j'}, a_{j''}$ with $j <  j' < j''$.
  By the definition of \mediuminconsistentconfig{s}, $\sigma$ must contain a voter that ranks $a_{j''}$ between $a_{j}$ and $a_{j'}$,
  that is, 
  a voter ranks either $a_{j}\pref a_{j''} \pref a_{j'}$ or $a_{j'}\pref a_{j''} \pref a_{j}$.
  This, however, implies that either edge~$\{u_j, u_{j''}\}$ or edge~$\{u_{j'}, u_{j''}\}$ is uncovered--a contradiction.
  % However, there is no voter ranking $a_{j''}$ between $a_{j}$ and $a_{j'}$, 
  % that is, there is no voter with $a_{j}  \pref  a_{j''}  \pref  a_{j'}$ or
  % $a_{j'}  \pref  a_{j''}  \pref  a_{j}$ in the resulting profile---a contradiction to $a_j, a_{j'}$ and $a_{j''}$ involved in the \mediuminconsistentconfig~$\sigma$.
\end{proof}

\section{Single-peaked, single-caved, and group-separable properties}
\label{sec:sp-scaved-gs-results}

Since single-peaked, group-separable, and single-caved profiles are necessarily
\worstr{}, \mediumr, and \bestr, respectively, it seems reasonable
to expect that the intractability result (\cref{prop:best-worst-medium-vd_npc}) transfers.
Indeed, we can show that this immediately follows from the proofs of \cref{prop:best-worst-medium-vd_npc} (and hence of \cref{thm:value-vd_npc}) because
the profile constructed in the \NP-hardness reduction contains neither \acon-configurations,
nor \bcon-configurations, nor $\bar{\acon}$-configurations.
Note that \NP-hardness of \problemSpVD is already known by a different proof of \citet{ELP13}.
However, their proof does not work for \problemPiVD with $\Pi \in \{$\bestr, \mediumr, \worstr, group-separable$\}$.

\begin{corollary}\label{cor:sp-sc-gs_vd_npc}
  \problemPiVD is \NP-complete for every property~$\Pi \in \{$single-peaked, single-caved, group-separable$\}$.
\end{corollary}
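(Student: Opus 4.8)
The plan is to reuse, word for word, the preference profile constructed in the proof of \cref{thm:value-vd_npc} (equivalently, of \cref{prop:best-worst-medium-vd_npc}), and to exploit the fact that single-peaked, single-caved, and group-separable profiles differ from \worstr, \bestr, and \mediumr{} profiles only by \emph{one additional} forbidden configuration: a profile is single-peaked iff it contains no \worstinconsistentconfig{} and no \acon-configuration; single-caved iff no \bestinconsistentconfig{} and no $\bar{\acon}$-configuration; and group-separable iff no \mediuminconsistentconfig{} and no \bcon-configuration. Thus, once we know from \cref{prop:best-worst-medium-vd_npc} that deleting a vertex cover makes the profile \worstr{}/\bestr/\mediumr, it suffices to check that the constructed profile, and hence every subprofile of it, contains none of the extra configurations \acon, $\bar{\acon}$, \bcon.

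The one new ingredient I would establish is precisely that: the profile of \cref{thm:value-vd_npc} contains no \acon-configuration, no $\bar{\acon}$-configuration, and no \bcon-configuration. The single structural fact doing all the work is that in that profile every voter ranks $\{a_j,b_j,c_j\}$ above $\{a_{j'},b_{j'},c_{j'}\}$ whenever $j<j'$; consequently, if two voters disagree on the relative order of two alternatives, those two alternatives must belong to a common edge triple $\{a_j,b_j,c_j\}$. In an \acon-configuration (and symmetrically in an $\bar{\acon}$-configuration) the two voters $v_1,v_2$ have mutually reversed orders on every pair inside $\{a,b,c\}$, so $\{a,b,c\}$ must equal some edge triple $\{a_j,b_j,c_j\}$; but the only orders occurring on such a triple are $c_j\pref a_j\pref b_j$, $b_j\pref c_j\pref a_j$, and $a_j\pref b_j\pref c_j$, and no two of these three are reverses of one another — a contradiction. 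In a \bcon-configuration $v_1$ and $v_2$ disagree exactly on the pairs $\{a,b\}$, $\{a,d\}$, and $\{c,d\}$; by the same observation each of these pairs lies within one edge triple, which forces all four (pairwise distinct) alternatives $a,b,c,d$ into the same triple $\{a_j,b_j,c_j\}$ — impossible, since an edge triple has only three alternatives.

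With that lemma in hand, both directions are short and parallel the argument of \cref{prop:best-worst-medium-vd_npc}. For the ``only if'' direction: if $U'$ with $|U'|\le k$ is a vertex cover, then deleting the corresponding voters yields a \worstr{} (resp.\ \bestr, \mediumr) profile by \cref{prop:best-worst-medium-vd_npc}, and removing voters can only destroy, never create, \acon/$\bar{\acon}$/\bcon-configurations, so by the characterizations above the resulting profile is single-peaked (resp.\ single-caved, group-separable). For the ``if'' direction: if at most $k$ voter deletions produce a single-peaked (resp.\ single-caved, group-separable) profile, then that profile is in particular \worstr{} (resp.\ \bestr, \mediumr), so the ``if'' direction of \cref{prop:best-worst-medium-vd_npc} (i.e., of \cref{thm:value-vd_npc}) returns a vertex cover of size at most $k$. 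Membership in \NP{} was already noted. Since \NP-hardness of \problemSpVD is also attributed to \citet{ELP13}, the genuinely new content is the group-separable and single-caved cases, but the unified configuration-freeness lemma handles all three at once.

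I expect no conceptual obstacle; the one place to be careful (bookkeeping rather than difficulty) is the \bcon-configuration check — one must correctly identify that it is exactly the three disagreement pairs $\{a,b\}$, $\{a,d\}$, $\{c,d\}$ that collapse everything into a single edge triple, and then invoke distinctness of $a,b,c,d$ to reach the contradiction; the \acon/$\bar{\acon}$ case additionally needs the small but easy-to-miss observation that none of the three per-edge orders is the reverse of another.
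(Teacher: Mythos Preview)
Your proposal is correct and follows the same overall strategy as the paper: reuse the profile from \cref{thm:value-vd_npc}/\cref{prop:best-worst-medium-vd_npc}, show it contains no \acon-, $\bar{\acon}$-, or \bcon-configuration, and then invoke the characterizations together with \cref{prop:best-worst-medium-vd_npc} for both directions.

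The only difference is in how the configuration-freeness is verified. For \acon{} and $\bar{\acon}$, the paper simply states that no three alternatives $x,y,z$ occur with one voter ranking $x\pref y\pref z$ and another $z\pref y\pref x$; you unpack this by first arguing such a triple must lie in one edge block $\{a_j,b_j,c_j\}$ and then checking that the three cyclic orders on that block are pairwise non-reversed --- this is a slightly more explicit version of the same observation. For \bcon, the paper argues that any four alternatives admit a partition $T_1\pref T_2$ respected by all voters (since at most three alternatives can sit in one edge block), which a \bcon-configuration cannot; you instead trace the three disagreement pairs $\{a,b\},\{a,d\},\{c,d\}$ to force all four distinct alternatives into a single three-element block. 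Both checks are short and correct; yours is a bit more hands-on, the paper's a bit more structural, but neither buys anything the other does not.
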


\begin{proof}
 First, the profile constructed in the proof of \cref{prop:best-worst-medium-vd_npc} does not contain any three alternatives~$x$, $y$, and $z$ such that there is
 one voter with $x \pref y \pref z$ and
 one voter with $z \pref y \pref x$.
 Thus, the profile does not contain any \acon-configuration or any $\bar{\acon}$-configuration.

 Second, one can partition every set~$T$ of four alternatives 
 into two non-empty subsets~$T_1$ and $T_2$ such that $T_1 \pref T_2$ holds for each voter
 because at most three alternatives can correspond to the same edge and 
 all voters have the same ranking over the alternatives 
 that correspond to different edges.
 However, this is not possible in a \bcon-configuration.
 Thus, the profile does not contain any \acon-configuration, 
 $\bar{\acon}$-configuration, or \binconsistentconfig.
 
 As a consequence, the reduction in the proof of \cref{prop:best-worst-medium-vd_npc} 
 also works for \problemSpVD, \problemScVD, and \problemGsVD.
\end{proof}

Just as the result for the maverick deletion, 
the \NP-hardness result of \problemMediumAD also transfers to the group-separable case.
After deleting the alternatives corresponding to a vertex cover, 
the resulting profile from the proof of \cref{prop:best-worst-medium-ad_npc}
does not contain any \binconsistentconfig{s}.
Thus, the following holds.

\begin{corollary}\label{cor:gs_ad_npc}
  \problemGsAD is \NP-complete.
\end{corollary}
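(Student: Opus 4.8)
The plan is to recycle verbatim the polynomial-time reduction from \problemVC used in the proof of \cref{prop:best-worst-medium-ad_npc} (equivalently, the reduction in the proof of \cref{thm:value-ad_npc}): from a \problemVC instance $((U,E),k)$ we build the alternative set $A\cup D$ with one vertex alternative $a_j$ per vertex and $k+1$ dummy alternatives, the voter $v_0$ together with the edge voters $v_{2i-1},v_{2i}$, and we keep the same budget $k$. Membership in \NP is immediate, since group-separability is characterized by a finite family of finite forbidden substructures (no \mediuminconsistentconfig{s} and no \bcon-configurations), so it suffices to prove \NP-hardness, i.e.\ that $((U,E),k)$ has a vertex cover of size at most $k$ if and only if the constructed profile can be made group-separable by deleting at most $k$ alternatives.

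For the ``if'' direction I would argue as in the group-separable-implies-\mediumr{} step: every group-separable profile is \mediumr, so if deleting at most $k$ alternatives yields a group-separable profile it in particular yields a \mediumr{} one, and the ``if'' part of the proof of \cref{prop:best-worst-medium-ad_npc} then hands us a vertex cover of size at most $k$.

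The ``only if'' direction needs the new work. Given a vertex cover $U'$ with $|U'|\le k$, delete the vertex alternatives corresponding to $U'$; the proof of \cref{prop:best-worst-medium-ad_npc} already shows the resulting profile is \mediumr, hence contains no \mediuminconsistentconfig. By the characterization of group-separability it remains to show the resulting profile contains no \bcon-configuration. The key structural observation is that, because $U'$ covers every edge, each surviving voter's preference order has the shape $x\pref\seq{D}\pref(\text{remaining vertex alternatives in the canonical order }\seq{A})$, where $x$ is a single vertex alternative or nothing; equivalently, each surviving voter's order is obtained from the fixed ``default'' order $\seq{D}\pref\seq{A}$ (restricted to the surviving alternatives) by moving one vertex alternative to the top. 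Consequently, two voters can disagree on a pair of alternatives only if that pair contains the top alternative of one of them, and two voters with the same top alternative have identical preference orders. In a \bcon-configuration (Definition~\ref{ex:beta}) the two voters disagree on exactly three pairs, namely $\{a,b\}$, $\{a,d\}$, $\{c,d\}$, which share no common element; so no single top alternative can account for all three disagreements, forcing the two voters to have distinct top alternatives lying in one of the candidate pairs $\{a,c\}$, $\{a,d\}$, $\{b,d\}$. But a voter's top alternative, if it appears among the four configuration alternatives, must be ranked first among them, which for the two rows $a\pref b\pref c\pref d$ and $b\pref d\pref a\pref c$ means $a$ and $b$ respectively; each of the three candidate pairs forces one of the two tops to equal $c$ or $d$ instead, a contradiction. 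Hence no \bcon-configuration survives, the resulting profile is group-separable, and $U'$ witnesses a yes-instance.

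The main obstacle is exactly this last step: checking that the ``move-one-vertex-alternative-to-the-top'' shape of the surviving preference orders is incompatible with the disagreement pattern of a \bcon-configuration. Everything else — the reduction, its polynomial running time, \NP-membership, and the ``if'' direction — is inherited unchanged from \cref{prop:best-worst-medium-ad_npc} and \cref{thm:value-ad_npc}.
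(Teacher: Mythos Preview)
Your proof is correct and uses the same reduction and the same ``if'' direction as the paper. The only place you and the paper diverge is in the ``only if'' step that rules out \bcon-configurations after deleting the alternatives of a vertex cover~$U'$.

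The paper argues concretely with indices: it first shows that at most one dummy can occur in a surviving \bcon-configuration, then that three of the four alternatives, say $w=a_j$, $y=a_{j'}$, $z=a_{j''}$, must be vertex alternatives (because after deleting a vertex cover every voter has at most one vertex alternative above the dummy block). From the two rows of the \bcon-configuration it then reads off $j'<j''$ (from voter~$v$) and $j''<j'$ (from voter~$v'$), a contradiction.

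Your argument replaces this index bookkeeping by a structural observation: after deleting~$U'$ every voter's order is the fixed default order with at most one alternative moved to the top, so any disagreement pair between two voters must contain one of their two top elements. Since the three disagreement pairs $\{a,b\},\{a,d\},\{c,d\}$ of a \bcon-configuration have no common element, both tops must lie in $\{a,b,c,d\}$ and together hit all three pairs, forcing $\{t_v,t_{v'}\}\in\{\{a,c\},\{a,d\},\{b,d\}\}$; but the tops must also be the first-ranked elements $a$ and $b$ of the two rows, which is not among these three sets. This is a clean, index-free alternative to the paper's calculation; both arguments exploit exactly the same structural fact about the surviving voters, and neither is substantially shorter than the other.
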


\begin{proof}
 The profile constructed in the proof of \cref{prop:best-worst-medium-ad_npc}
 may contain \bcon-configurations, but we show that destroying all
 \mediuminconsistentconfig{s}
 by deleting at most $k$ alternatives also destroys all \bcon-configurations.
 % Consider the \mediumr{} profile which results from deleting alternatives from the profile constructed in the proof
 % of \cref{prop:medium-ad_npc}.
 Consider the profile~$\ppp$ after the the deletion of the alternatives.
 Assume towards a contradiction that profile~$\ppp$ contains a \bcon-configuration{} which involves four alternatives~$w, x, y, z$ and two voters~$v, v'$ with preferences
 \begin{align*}
   \defvoter v \colon w \pref x\pref y \pref z ~\text{ and }~ \defvoter v' \colon x \pref z \pref w \pref y.
 \end{align*}
 Observe that a \bcon-configuration may contain at most one dummy alternative,
 because no two alternatives appear consecutively in both \preferenceorder{s} of
 a \bcon-configuration,
 but all dummy alternatives appear consecutively in all
 \preferenceorder{s} of the profile~$\ppp$.
 Furthermore, $w$, $y$, and $z$ are vertex alternatives
 since in $\ppp$, no voter prefers more than one vertex alternative to a dummy alternative.
 Then by the definition of \bcon-configurations, %the profile contains
 we have that voter~$v$ ranks~$a_{j} \pref x \pref a_{j'} \pref a_{j''}$
 and voter~$v'$ ranks $x \pref a_{j''} \pref a_{j} \pref a_{j'}$ 
 with~$a_{j}, a_{j'}, a_{j''} \in A$.
 However, the preference order of voter~$v$ implies that~$j'<j''$
 and the preference order of voter~$v'$ implies that $j''<j'$---a contradiction.
 As consequence, the reduction in the proof of \cref{prop:best-worst-medium-ad_npc}
 also works for \problemGsVD.
\end{proof}

\medskip
% While we have shown that making a profile \mediumr{} as well as \betar{} by deleting the fewest number of maverick voters (or alternatives) is \NP-hard,
% we can derive from the following that making a profile only \betar{} by deleting the fewest number of maverick voters (or alternatives) is also \NP-hard.
% On the one hand, making a \betar{} profile \mediumr by deleting alternatives,
% and hence group-separable, is \NP-hard.
% On the other hand, the following theorem shows that making a preference
% profile (only) \betar{} by deleting alternatives is also \NP-hard.
In order to be group-separable, a preference profile must be \mediumr{} and
\bcon-restricted. %  Since these two restrictions are finite, checking whether a
% profile is group-separable can be done in polynomial time. However, as
As already shown in \cref{cor:sp-sc-gs_vd_npc} and in \cref{cor:gs_ad_npc}, 
deleting as few maverick voters (or alternatives) as possible to obtain the group-separable property is \NP-hard. 
Alternatively, we can also derive this intractability result from the following two theorems. 
%which states that deciding the number of maverick voters with respect to
%\bcon-restriction is intractable.

\begin{theorem}\label{thm:bcon-vd_npc}
  \problemBconVD is \NP-complete.
\end{theorem}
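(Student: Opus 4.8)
The plan is to give a polynomial-time reduction from \problemVC, following the same template used for the value-restricted results: transform each edge of the input graph into a copy of a forbidden configuration---here, a \bcon-configuration---so that a set of voters destroys all \bcon-configurations precisely when the corresponding vertices form a vertex cover. Since membership in \NP was already established in \cref{sub:central problems} (the \bcon-restricted property is characterized by a single finite forbidden substructure), only \NP-hardness remains.

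\smallskip
First I would fix a \problemVC instance $(G=(U,E),k)$ with $U=\{u_1,\dots,u_r\}$ and $E=\{e_1,\dots,e_s\}$, assuming as usual that $G$ is connected, $r\ge 4$, and $k\le r-3$. For each edge $e_j$ I would create a small block of four ``edge alternatives'' $a_j,b_j,c_j,d_j$, with every voter preferring the block of $e_j$ to the block of $e_{j'}$ whenever $j<j'$; this global agreement on the order of alternatives from distinct edges forces any \bcon-configuration in the constructed profile to live inside a single edge block. Each vertex $u_i$ gives one voter $v_i$. For an edge $e_j=\{u_i,u_{i'}\}$ with $i<i'$, the two incident voters $v_i,v_{i'}$ would be assigned the two \preferenceorder{s} of a \bcon-configuration on $a_j,b_j,c_j,d_j$ (one gets $a_j\pref b_j\pref c_j\pref d_j$, the other $b_j\pref d_j\pref a_j\pref c_j$, in the sense of \cref{ex:beta}), while every non-incident voter $v_{i''}$ ranks this block as $a_j\pref b_j\pref c_j\pref d_j$---the same order as one of the two special voters, so that $v_{i''}$ together with either special voter does \emph{not} create a \bcon-configuration. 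The number of voters to delete is set equal to $k$. This construction is clearly polynomial.

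\smallskip
For correctness, the ``only if'' direction is the easy one: given a vertex cover $U'$ with $|U'|\le k$, delete the corresponding voters; if a \bcon-configuration survived it would, by the block argument above, be confined to some edge block $e_j$, and since the only two voters ranking that block in an ``off'' order are $v_i$ and $v_{i'}$ with $\{u_i,u_{i'}\}=e_j$, the surviving \bcon-configuration would need both of them, contradicting that $U'$ covers $e_j$. For the ``if'' direction, suppose deleting a voter set $V'$ with $|V'|\le k$ yields a \bcon-restricted profile; if some edge $e_j=\{u_i,u_{i'}\}$ were uncovered by the vertices corresponding to $V'$, then both $v_i$ and $v_{i'}$ remain, and their two \preferenceorder{s} restricted to $a_j,b_j,c_j,d_j$ form a \bcon-configuration---a contradiction---so $V'$ corresponds to a vertex cover of size at most $k$.

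\smallskip
The main obstacle I anticipate is not any single deep step but getting the bookkeeping exactly right: I must verify that no \bcon-configuration can be ``assembled'' across two or three different edge blocks. The key lemma is that in the constructed profile every pair of consecutive alternatives from distinct blocks is ordered identically by all voters, whereas a \bcon-configuration (by \cref{ex:beta}) requires, e.g., the pair that is top-and-second for one voter to be split apart by the other; a short case check on which of the four alternatives of a putative \bcon-configuration could come from which block forces all four into the same block. Once that is pinned down, the rest is the routine vertex-cover correspondence exhibited above.
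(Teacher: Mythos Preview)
Your overall plan matches the paper's, but the choice of preference order for non-incident voters is wrong, and this breaks the reduction. You give every non-incident voter $v_{i''}$ the order $a_j\pref b_j\pref c_j\pref d_j$ on block~$e_j$, identical to that of the first endpoint voter $v_i$, and then claim this avoids a \bcon-configuration with \emph{either} special voter. That claim is false: since $v_i$ and $v_{i'}$ form a \bcon-configuration by design, and $v_{i''}$ has the same order as $v_i$, the pair $(v_{i''},v_{i'})$ forms the very same \bcon-configuration on $a_j,b_j,c_j,d_j$.

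This kills the ``only if'' direction. Take a vertex cover $U'$ that covers edge $e_j=\{u_i,u_{i'}\}$ by including $u_i$ but not $u_{i'}$. After deleting the corresponding voters, $v_{i'}$ survives, and (since $k\le r-3$) so does at least one non-incident voter $v_{i''}$; these two still form a \bcon-configuration on block~$e_j$, so the resulting profile is not \bcon-restricted even though $U'$ is a vertex cover.

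The fix is to give non-incident voters a \emph{third} order on each block that is \bcon-compatible with both endpoint orders. The paper uses $d_j\pref a_j\pref b_j\pref c_j$; one checks directly that neither this order paired with $a_j\pref b_j\pref c_j\pref d_j$ nor with $b_j\pref d_j\pref a_j\pref c_j$ matches the pattern of \cref{ex:beta} in either role assignment. With this corrected choice, your block-confinement argument (any \bcon-configuration must live inside a single edge block because all cross-block pairs are ranked identically by all voters) and the vertex-cover correspondence go through essentially as you wrote them.
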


\begin{proof}
  We reduce from \problemVC to show \NP-hardness.
  Let $(G=(U,E), k)$ denote a \problemVC instance with vertex set~$U=\{u_{1}, \ldots, u_r\}$ and edge set~$E=\{e_{1}, \ldots, e_s\}$;
  without loss of generality we assume that the input graph~$G$ is connected and
  that it has at least four vertices, that is,~$r\ge4$.
  The set of alternatives consists of four \emph{edge alternative}s~$a_j, b_j$, $c_j$, $d_j$ 
  for each edge~$e_j \in E$.
  For each vertex in $U$, we construct one voter.
  That is, we define~$A \coloneqq \{a_j,b_j,c_j,d_j \mid e_j \in E\}$ and $V \coloneqq \{v_i \mid u_i \in U\}$.
  In total, the number~$m$ of alternatives is~$4s$ and the number~$n$
  of voters is~$r$.

  Now we specify the \preferenceorder{} of each voter.
  Every voter prefers~$\{a_{j}, b_{j},$ $c_{j},$ $d_{j}\}$ to $\{a_{j'}$, $b_{j'}$, $c_{j'}, d_{j'}\}$ whenever $j < j'$.
  For each edge~$e_j$ with two incident vertices~$u_i$ and $u_{i'}$, $i<i'$,
  and for each non-incident vertex~$u_{i''} \notin e_j$,
  the following holds:
  \begin{alignat*}{3}
    &\defvoter v_i    &\colon &\;\;  a_j \pref b_j \pref c_j \pref d_j, \\
    &\defvoter v_{i'}  &\colon &\;\;  b_j \pref d_j \pref a_j \pref c_j, \\
    &\defvoter v_{i''}  &\colon &\;\;  d_j \pref a_j \pref b_j \pref c_j, 
  \end{alignat*}
  In this way, 
  any $\bcon$-configuration regarding alternatives~$a_j, b_j, c_j, d_j$
  must involve voters~$v_{i}$ and $v_{i'}$.
  % if vertex~$u_i$ is incident to edge~$e_j$, 
  % then voter~$v_i$ has preference~$d_j \pref a_j \pref b_j \pref c_j$.
  % Otherwise, let edge~$e_j \coloneqq \{u_i,u_{i'}\}$ with $i < i'$.
  % Then, the following holds:
  % \begin{alignat*}{3}
  %   &\defvoter v_i    &\colon &\;\;  a_j \pref b_j \pref c_j \pref d_j, \\
  %   &\defvoter v_{i'}  &\colon &\;\;  b_j \pref d_j \pref a_j \pref c_j.
  % \end{alignat*}
  % \configs{
  %   {\defvoter v_i    \colon a_j \pref b_j \pref c_j \pref d_j,}
  %   {\defvoter v_{i'}  \colon  b_j \pref d_j \pref a_j \pref c_j.}
  % }
  %% In this way,
  %% \begin{enumerate}
  %%   \item the two vertex voters that correspond to the vertices in $e_j$ form a \binconsistentconfig{} with regard to the four edge alternatives~$a_j, b_j, c_j, d_j$, and 
  %%   \item the preference profile induced by a set of voters corresponding to vertices
  %%     that are pairwise not adjacent to each other is group-separable,
  %%     and hence, it is \betar.
  %%   \end{enumerate}
  The maximum number of voters to delete equals the maximum vertex cover size~$k$.
  This completes the construction.

  Clearly, the whole construction runs in polynomial time.
  It remains to show %its correctness.
  %In particular, we show 
  that~$(G=(U, E), k)$ has a vertex cover
  of size at most~$k$ if and only if the constructed profile can be made
  \betar{} by deleting at most~$k$ voters.

  For the ``only if'' part, suppose that $U'\subseteq U$ with $|U'|\le k$ is a vertex cover.
  We show that after deleting the voters corresponding to the vertices in $U'$ 
  the resulting profile is \betar.
  Suppose for the sake of contradiction that the resulting profile is 
  \emph{not} \betar.
  That is, it still contains a \binconsistentconfig{}~$\sigma$.
  Since all voters prefer $\{a_j, b_j, c_j, d_j\}$ to $\{a_{j'}, b_{j'}, c_{j''}, d_{j'''}\}$ whenever $j<j'$,
  the profile restricted to every four alternatives that correspond to at least two edges,
  is group-separable.
  But since $\sigma$ is not group-separable,
  $\sigma$ involves four alternatives~$a_j$, $b_j$, $c_j$, and $d_j$ that correspond to a single edge~$e_j$.
  As already observed, any $\bcon$-configuration regarding alternatives~$a_j, b_j, c_j, d_j$
  must involve voters~$v_{i}$ and $v_{i'}$ that correspond to both endpoints of $e_j$.
  Then edge~$e_j$ is not covered by any vertex in~$U'$---a contradiction.

  For the ``if'' part, suppose that the profile becomes \betar{}
  by deleting a subset~$V'\subseteq V$ of voters with $|V'|\le k$.
  That is, no two voters form a \binconsistentconfig{}.
  We show by contradiction that~$V'$ corresponds to a vertex cover of graph~$G$ and has size at most $k$.
  Clearly, $|V'| \le k$.
  Assume towards a contradiction that an edge~$e_j$ is not covered
  by the vertices corresponding to the voters in $V'$.
  Then, the two voters corresponding to the vertices that are incident with edge~$e_j$ form a \binconsistentconfig{} with regard to $a_j$, $b_j$, $c_j$, and $d_j$---a contradiction.
  Thus, $V'$ corresponds to a vertex cover of graph~$G$ and its size is at most $k$.
\end{proof}

\begin{theorem}\label{thm:bcon-ad_npc}
  \problemBconAD is \NP-complete.
\end{theorem}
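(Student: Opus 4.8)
The plan is to mimic the structure of the proof of \cref{thm:bcon-vd_npc}, but now encode the vertices of a \problemVC instance as alternatives rather than as voters, so that deleting alternatives corresponds to picking a vertex cover. As in the \valuer{} alternative-deletion reduction (\cref{thm:value-ad_npc}), I would introduce one \emph{vertex alternative}~$a_j$ for each vertex~$u_j\in U$, together with a small block of dummy alternatives (enough that at least one survives any deletion of $\le k$ alternatives) that all voters rank consecutively in a fixed canonical order~$\seq{D}$; these dummies play the role of the ``common'' edge alternatives and guarantee that no \bcon-configuration can hide inside the dummy block. The edges are then encoded by voters: for each edge~$e_i=\{u_j,u_{j'}\}$ with $j<j'$ I would create a constant number of voters whose restriction to $\{a_j,a_{j'}\}\cup D$ (plus one extra vertex alternative if needed, since a \bcon-configuration uses four alternatives) forms a \bcon-configuration, while on every other vertex alternative all voters agree (say ranking $\seq{A}$ after $\seq{D}$). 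The count of voters and alternatives is clearly polynomial and the construction is polynomial-time.

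The correctness argument splits in the usual two directions. For the ``only if'' direction, given a vertex cover~$U'$ with $|U'|\le k$, delete the corresponding vertex alternatives; I must argue the remaining profile is \betar. Suppose not, so a \binconsistentconfig~$\sigma$ on four alternatives~$w,x,y,z$ remains. First rule out that $\sigma$ lies entirely in the dummy block or uses two dummies: since all voters rank $\seq{D}$ identically and the dummies appear consecutively, any four alternatives containing at least two dummies are group-separable across all voters (partition into the dummy part and the rest), hence contain no \bcon-configuration — this is the same observation used in the proof of \cref{cor:gs_ad_npc}. So $\sigma$ uses at most one dummy and therefore at least three vertex alternatives. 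The agreement structure on vertex alternatives belonging to ``different edges'' forces the surviving vertex alternatives of $\sigma$ to come from a single edge's gadget, and the gadget is designed so that any \bcon-configuration on those alternatives forces the presence of both endpoint-encoding features; but if $U'$ covers that edge, at least one of its two vertex alternatives was deleted, contradicting that $\sigma$ survives. Hence the profile is \betar.

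For the ``if'' direction, suppose deleting an alternative set~$A'$ with $|A'|\le k$ makes the profile \betar. Let $U'$ be the vertices whose vertex alternatives lie in $A'$; then $|U'|\le k$. If some edge~$e_i=\{u_j,u_{j'}\}$ were uncovered, both $a_j$ and $a_{j'}$ survive, and since $|D|>k$ at least one dummy survives; by construction the surviving voters restricted to $\{a_j,a_{j'},d,\dots\}$ still contain the edge gadget's \bcon-configuration, contradicting \betar-ness. So $U'$ is a vertex cover of size at most~$k$, completing the reduction. Membership in \NP{} was already established in \cref{sub:central problems}, so this gives \NP-completeness.

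The main obstacle I anticipate is the careful design of the edge gadget: a \bcon-configuration requires \emph{four} distinct alternatives with a rather rigid pattern ($v_1\colon w\pref x\pref y\pref z$, $v_2\colon x\pref z\pref w\pref y$), whereas each edge naturally supplies only two vertex alternatives, so the gadget must borrow additional alternatives (dummies, or a shared pool of extra vertex alternatives) without inadvertently creating spurious \bcon-configurations that survive \emph{every} small deletion — which would make the reduction fail in the ``only if'' direction. Getting the bookkeeping right so that (i) every uncovered edge yields a surviving \bcon-configuration and (ii) every covered-edge deletion genuinely destroys all of them, while keeping the non-gadget part group-separable, is the delicate part; once the gadget is pinned down, the two correctness directions run exactly parallel to the proofs of \cref{thm:bcon-vd_npc} and \cref{cor:gs_ad_npc}.
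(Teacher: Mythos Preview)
Your high-level plan is right—reduce from \problemVC, encode vertices as alternatives, encode edges as voter gadgets—and this is exactly the framework the paper uses. But you have not actually supplied the gadget, and you say so yourself: the ``main obstacle'' you anticipate is precisely the part that constitutes the proof. A \binconsistentconfig{} needs four distinct alternatives in rigid positions, an edge supplies only two vertex alternatives, and your two suggested ways of finding the other two both break the reduction. If the missing two come from a \emph{shared} dummy pool~$D$, then the ``if'' direction fails: an adversary can delete the specific dummies used by an edge's configuration instead of a vertex alternative, destroying that \binconsistentconfig{} without covering the edge (your observation that ``at least one dummy survives'' is not enough—you would need the \emph{particular} dummies in the gadget to survive, or else a gadget that works with \emph{every} surviving dummy, which you have not constructed). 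If instead the fourth alternative is ``one extra vertex alternative,'' then deleting that extra alternative kills the configuration without covering the edge either. Your ``only if'' sketch also relies on the claim that the surviving alternatives of a \binconsistentconfig{} must ``come from a single edge's gadget,'' but vertex alternatives are shared across all edges incident to that vertex, so this localisation argument cannot go through as stated.

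The paper resolves exactly this difficulty with a device you are missing: it introduces \emph{one dummy alternative per vertex}, not a shared pool. Each vertex~$u_j$ gets both a vertex alternative~$a_j$ and its own dummy~$d_j$, and the canonical order interleaves them as $d_1\pref a_1\pref d_2\pref a_2\pref\cdots\pref d_r\pref a_r$. A single voter~$v_0$ with this order, together with one edge voter~$v_i$ ranking $a_j\pref a_{j'}\pref\seq{A\cup D\setminus\{a_j,a_{j'}\}}$, already forms a \binconsistentconfig{} on the four alternatives $d_j,a_j,d_{j'},a_{j'}$. Crucially, in the ``if'' direction one sets $U'=\{u_j\mid a_j\in A'\text{ or }d_j\in A'\}$, so deleting a dummy~$d_j$ is charged to covering vertex~$u_j$ and the size bound $|U'|\le k$ is preserved. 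The ``only if'' direction then needs a short case analysis (the paper's Properties~(i) and~(ii)) showing that any surviving \binconsistentconfig{} forces an uncovered edge; this replaces your unproved localisation claim. Once you see the per-vertex-dummy trick, the rest of your outline does go through essentially as you describe.
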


\begin{proof}
  We reduce from \problemVC to \problemBconAD.
  Let $(G=(U, E), k)$ denote a \problemVC instance with vertex set~$U=\{u_{1}, \ldots, u_{r}\}$ 
  and edge set~$E=\{e_{1}, \ldots, e_{s}\}$;
  without loss of generality we assume that the input graph~$G$ is connected and
  that $r\ge k+2$.
  The set of alternatives consists of one \emph{vertex alternative}~$a_j$ and one \emph{dummy alternative}~$d_j$ for each vertex~$u_j$ in $U$.
  Let $A$ denote the set of all vertex alternatives and
  let $D$ denote the set of all dummy alternatives.
  The number~$m$ of constructed alternatives is $2r$.

  We fix the canonical order of $A\cup D$ to be 
  \[\seq{A\cup D} \coloneqq d_1 \pref a_1 \pref d_2 \pref a_2 \pref \ldots \pref d_r \pref a_r.\]
  We introduce a voter~$v_{0}$ with the preference order~$\seq{A\cup D}$.
  For each edge~$e_i = \{u_{j}, u_{j'}\}$ with $j < j'$,
  we introduce one \emph{\edgevoter}~$v_{i}$ with \preferenceorder{}
  \[a_{j} \pref a_{j'} \pref \seq{A\cup D \setminus \{a_{j}, a_{j'}\}}.\]
  Observe that voter~$v_0$ and $v_{i}$ form a \binconsistentconfig{} with respect to the four alternatives corresponding to the vertices in edge~$\{u_{j}, u_{j'}\}$ with $j<j'$,
  that is, with respect to $a_j$, $a_{j'}$, $d_j$, and $d_{j'}$.
  \begin{alignat*}{3}
    &\defvoter v_0 &\colon &\;\; d_j \pref a_j \pref d_{j'} \pref a_{j'},\\
    &\defvoter v_i &\colon &\;\; a_j \pref a_{j'} \pref d_{j} \pref d_{j'}.  
  \end{alignat*}
  Furthermore, for each pair of alternatives~$a$ and $b$,
  if there is a voter~$v$ preferring~$a$ to $b$,
  then the following holds: 
  \begin{enumerate}[(i)]
    \item\label{prop:not-in-the-first-two} 
    If neither $a$ nor $b$ is in the first two positions of voter~$v$'s preference order,
%    If $v$ prefers neither $a$ nor $b$ in the first two positions, 
    then $a$ and $b$ correspond to two (not necessarily distinct) vertices~$u_{j}$ and $u_{j'}$ with $j \le j'$.
    \item\label{prop:when-edge} If $a$ and $b$ correspond to two vertices~$u_j$ and $u_{j'}$ with $j > j'$
    and if there is a third alternative~$c$ such that $v$ prefers $c$ to $a$,
    then $c$ and $a$ correspond to two adjacent vertices.
  \end{enumerate}
  We will utilize these two facts several times to show some contradictions.

  Let $V$ denote the set of all voters.
  In total, the number~$n$ of constructed voters is $s+1$.
  The maximum number of alternatives to delete equals the maximum vertex cover size~$k$.
  This completes the construction.

  Our construction runs in polynomial time.  
  It remains to show %its correctness. 
  %In particular, we show 
  that $(G=(U, E), k)$ has a vertex cover of size at most~$k$ 
  if and only if the constructed profile can be made \betar{} by deleting at most $k$~alternatives. 

  For the ``only if'' part, suppose that $U'\subseteq U$ with $|U'|\le k$ is
  a vertex cover of graph~$G$.
  We show that after deleting the vertex alternatives corresponding to~$U'$, denoted by $A'$,
  the resulting profile is \betar{}.
  Suppose for the sake of contradiction that the resulting profile still 
  contains a \binconsistentconfig~$\sigma$ with regard to four alternatives~$w, x, y, z$ and two voters~$v, v'$ with preferences
  \begin{align*}
    \defvoter v\colon w\pref x \pref y \pref z ~\text{ and }~ \defvoter v'\colon x \pref z \pref w \pref y.
  \end{align*}
%  Observe that $\sigma$ may contain at most two dummy alternatives as all dummy alternatives are ranked in the same way by every voter, but in $\sigma$ no three alternatives are ordered in the same way.
  Let $w, x, y, z$ correspond to four non-deleted vertices $u_j$, $u_{j'}$, $u_{j''}$, $u_{j'''}$, respectively.
  % This means that
  % \begin{align*}
  %   \defvoter v\colon u_{j}\pref u_{j'} \pref u_{j''} \pref u_{j'''} ~\text{ and }~ \defvoter v'\colon u_{j'}\pref u_{j'''} \pref u_j \pref u_{j''}.
  % \end{align*}
  By Property~(\ref{prop:not-in-the-first-two}), 
  the preference order of voter~$v$ implies that $j''\le j'''$ (note that 
  voter~$v$ ranks neither $y$ nor $z$ in the first two positions)
  and the preference order of voter~$v'$ implies that $j \le j''$ (note that 
  voter~$v'$ ranks neither $w$ nor $y$ in the first two positions).
  Since $x$, $y$, and $z$ correspond to at least two distinct vertices,
  it follows that $j<j'''$.
  Since voter~$v'$ rank $x\pref z \pref w \pref y$,
  by Property~(\ref{prop:when-edge})
  the inequality $j< j'''$ implies that $u_{j'}$ and $u_{j'''}$ are adjacent---a contradiction to $U'$ being a vertex cover. 
% This means % on the one hand that $j \le j'$ as otherwise voter~$v'$ implies that $\{u_{j'}, u_{j'''}\}$ is an uncovered edge, and
%   % on the other hand
%   that $j' \le j''$ as otherwise voter~$v$ ranking $w \pref x \pref y \pref z$ implies that $\{u_j, u_{j'}\}$ is an uncovered edge.
%   %This means that $j \le j' \le j'' \le j'''$ as otherwise $\{u_j, u_{j'}\}$ corresponds to an uncovered edge.
%   But then, voter~$v'$ ranking $x \pref z \pref w \pref y$ implies that $\{u_{j'}, u_{j'''}\}$ is an uncovered edge---a contradiction.
  Indeed, the resulting profile is group-separable. To see this, note that
  any size-at-least-three subset~$T \subseteq (A\cup D) \setminus A'$ of alternatives can be
  partitioned into two non-empty subsets~$\{a\}$ and $T\setminus \{a\}$ 
  with~$a$ being the last alternative in the canonical order restricted to the alternatives in set~$T$.

  For ``if'' part, suppose that the constructed profile is a yes-instance of \problemBconAD.
  Let $A'\subseteq A\cup D$ be the set of deleted alternatives with $|A'|\le k$.
  Consider the vertex set~$U'$ containing all vertices corresponding to a vertex alternative
  or to a dummy alternative in $A'$,
  that is, $U' \coloneqq \{u_j \mid a_j \in A' \vee d_j \in A'\}$.
  Obviously, $|U'|\le k$.
%   First, we show that we can assume that $A'$ consists only of vertex alternatives by showing that
%   for every \binconsistentconfig{} involving a dummy alternative~$d_j$,
%   there is also a \binconsistentconfig{} involving the corresponding vertex alternative~$v_j$.
%   To this end, let $\sigma$ be a \binconsistentconfig{} that involves alternatives~$w, x, y, z$ and two voters~$v, v'$ with preferences 
%   \[v: w\pref x \pref y \pref z \text{ and } v': x \pref z \pref w \pref y,\] 
%   such that $d_j \in \{w, x, y, z\}$ and $a_j \notin \{w,x,y,z\}$.
%   Now, observe that $d_j$ cannot be $x$ as otherwise voter~$v'$ implies that $z, y$ correspond to two vertices~$u_{j'}$ and $u_{j''}$ with $j' < j''$ but voter~$v$ implies that $j'' \le j'$---a contradiction.
%   Analogously, $d_j$ cannot be $z$ since otherwise voter~$v'$ implies that $y$ corresponds the vertex~$u_{j'}$ with $j < j'$ but voter~$v$ implies that $j' \le j$---a contradiction.
%   If $w = d_j$, 
%   then $x,y,z$ correspond to three vertices~$u_{j'}, u_{j''}, u_{j'''}$ with $j < j' \le j'' \le j'''$ implying that the same two voters and alternatives $a_j, x, y, z$ form a \binconsistentconfig.
%   Otherwise, $y = d_j$ which means that $z$ correspond to the vertex~$u_{j'}$ with $j < j'$.
%   This implies also that the same two voters and alternatives $w, x, a_j, z$ form a \binconsistentconfig.
%   
%   Finally, assuming that $A' \subseteq A$,
  We show that set~$U'$ is a vertex cover.
  Suppose towards a contradiction that there is an uncovered edge~$e_i=\{u_j, u_{j'}\}$ with $j < j'$.
  By the definition of $U'$, we have that~$A'\cap\{d_j, a_j, d_{j'}, a_{j'}\}=\emptyset$.
  Then, voters~$v_0$ and~$v_i$ form a \binconsistentconfig{} with respect to the alternatives~$d_j, a_j, d_{j'}, a_{j'}$---a contradiction.
\end{proof}

\section{Single-crossing properties}
\label{sec:sc-results}
In this section, we show that for the single-crossing property,
the alternative deletion problem is \NP-hard while the maverick deletion problem is polynomial-time solvable.
The \NP-hardness proof is based on the following \NP-complete \textsc{Maximum 2-Satisfiability} problem~\cite{GJ79}.
 \decprob{Maximum 2-Satisfiability~(Max2Sat)}
  {A set~$U$ of Boolean variables, a collection~$C$ of size-two clauses over~$U$ and a positive integer~$h$.}
  {Is there a truth assignment for~$U$ such that at least~$h$ clauses in~$C$ are satisfied?}

% While making a profile single-crossing by deleting as few
% maverick voters as possible is in \comP, the decision variant of this problem becomes \NP-hard if one instead deletes %as few
% alternatives.
\begin{theorem}\label{thm:sc_ad_npc}
  \problemScAD  is \NP-complete.
\end{theorem}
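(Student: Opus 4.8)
The plan is to prove \NP-hardness by a polynomial-time reduction from \problemMTwoSAT; membership in \NP{} was already observed in \cref{sub:central problems}. Write the input as $(U,C,h)$ with variables $U=\{x_1,\dots,x_p\}$ and size-two clauses $C=\{c_1,\dots,c_q\}$, and assume without loss of generality that every variable occurs in some clause and $1\le h\le q$. I would output a profile together with the budget $k:=p+(q-h)$ (which is strictly smaller than the number $m$ of alternatives I build) so that the profile can be made single-crossing by deleting at most $k$ alternatives if and only if some assignment of $U$ satisfies at least $h$ clauses.

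First I would fix a small ``core'' of $O(1)$ voters and $O(1)$ alternatives that already forms a single-crossing profile. Then, for each variable $x_i$ I would add two \emph{literal alternatives} $t_i$ and $f_i$ (``delete $f_i$'' $\equiv$ ``$x_i=\truevalue$'', ``delete $t_i$'' $\equiv$ ``$x_i=\falsevalue$'') and wire the core voters' rankings of $\{t_i,f_i\}$ so that the core together with $\{t_i,f_i\}$ contains a forbidden configuration; the only way to destroy it by alternative deletion is to remove $t_i$ or $f_i$, so a successful deletion set ``decides'' every variable. For each clause $c_j=\ell\vee\ell'$ I would add one \emph{clause alternative} $g_j$ and three fresh voters whose restriction to $\{g_j,\alpha(\ell),\alpha(\ell')\}$ is a cyclic configuration (a Condorcet cycle), where $\alpha(x_i):=f_i$ and $\alpha(\bar x_i):=t_i$. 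A cyclic configuration on three distinct alternatives is a \cinconsistentconfig{} (\cref{ex:gamma}) and hence forbidden, so this gadget can be destroyed only by deleting $g_j$, $\alpha(\ell)$, or $\alpha(\ell')$; choosing $\alpha$ this way makes ``$c_j$ is satisfied by the decided assignment'' coincide with ``$\alpha(\ell)$ or $\alpha(\ell')$ has been deleted'', in which case the cycle is already gone, whereas an unsatisfied clause costs one extra deletion, namely $g_j$. Throughout, I would have all voters rank alternatives from two different gadgets (and core versus gadget) in one fixed common order, so that every forbidden configuration lives inside a single gadget or inside the core, and I would place the (few) core alternatives so conservatively that deleting one of them is never part of a cheapest solution.

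For the ``if'' direction, given an assignment $\tau$ satisfying at least $h$ clauses, I delete the ``false-literal'' alternative of every variable ($p$ deletions) and $g_j$ for every clause not satisfied by $\tau$ (at most $q-h$ deletions), so at most $k$ alternatives in total. I then verify that the surviving profile is single-crossing by exhibiting an explicit single-crossing order of its voters---the core voters in their core order, followed by the clause voters grouped by clause---and checking the single-crossing condition pair-by-pair; equivalently, by checking directly that no \cinconsistentconfig{} (\cref{ex:gamma}) and no \dinconsistentconfig{} (\cref{ex:delta}) remains, which is where the ``gadgets do not interfere'' design pays off. For the ``only if'' direction, suppose a set $S$ with $|S|\le k$ makes the profile single-crossing. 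A standard exchange/normalization argument lets me assume $S$ deletes exactly one alternative of each pair $\{t_i,f_i\}$ and no core alternative, using exactly $p$ deletions and leaving at most $q-h$ for clause alternatives. Put $P:=\{j:g_j\in S\}$, so $|P|\le q-h$. For every $j\notin P$ the cyclic gadget of $c_j$ survives unless $\alpha(\ell)\in S$ or $\alpha(\ell')\in S$; since the profile is single-crossing, one of these holds. Reading off $\tau(x_i):=\truevalue$ iff $f_i\in S$, every clause with index not in $P$ is satisfied by $\tau$, and there are at least $q-|P|\ge h$ such clauses.

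The main obstacle is the gadget engineering together with its verification: realizing the two-literal structure of a clause using only \cinconsistentconfig s and \dinconsistentconfig s as obstructions, arranging the common background order so that combining voters from different clause gadgets never accidentally creates a new forbidden configuration, and then discharging the numerous but routine pair-of-alternatives cases showing that the profile produced in the ``if'' direction is genuinely single-crossing. A secondary but still delicate point is the normalization step in the ``only if'' direction---arguing that a cheapest deletion set may be taken to hit each variable gadget exactly once and to avoid the core---which is exactly what keeps the budget $k=p+(q-h)$ tight.
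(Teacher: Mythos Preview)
Your high-level plan---reduce from \problemMTwoSAT, build a variable gadget that forces a binary choice, and a clause gadget that costs one extra deletion when the clause is unsatisfied---is exactly the paper's. But the sizes you chose are too small for the budget argument to go through, and this is not something a ``standard exchange/normalization argument'' can repair.

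The concrete failure is in the variable gadget. Any forbidden configuration for single-crossingness (a \ccon- or \dcon-configuration) involves at least two distinct unordered pairs of alternatives, hence at least three distinct alternatives. With only the two alternatives $t_i,f_i$ attached to variable~$x_i$, your gadget must borrow at least one alternative from the $O(1)$-size core. But then deleting that single core alternative destroys the forbidden configuration of \emph{every} variable simultaneously; together with deleting all $q$ clause alternatives $g_j$ you obtain a deletion set of size $O(1)+q$, which is below your budget $k=p+(q-h)$ whenever $h<p-O(1)$, so the ``only if'' direction breaks. The paper's construction is engineered precisely to block this shortcut: the dummy blocks $O,\overline{O}$ each have more than $k$ alternatives (so some pair $o\in O$, $\bar o\in\overline{O}$ always survives), and every literal is represented by an entire \emph{set} $X_i$ or $\overline{X_i}$ of $s{+}1$ alternatives, so the only way to kill all \dcon-configurations attached to variable~$x_i$ is to delete one whole literal set. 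That is what makes the tight budget $k=r(s{+}1)+(s-h)$ work without any exchange step.

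There is a second, related inconsistency. Your clause gadget places a Condorcet cycle on $\{g_j,\alpha(\ell),\alpha(\ell')\}$, which forces the three clause voters to disagree pairwise on the relative order of $\alpha(\ell)$ and $\alpha(\ell')$. But $\alpha(\ell)$ and $\alpha(\ell')$ live in two \emph{different} variable gadgets, so this directly contradicts your isolation promise that ``all voters rank alternatives from two different gadgets in one fixed common order''. The paper sidesteps this by using, for each clause~$c_j$, a \dcon-configuration on $\{a_j,b_j,x,y\}$ with $x\in\x{j}{1}$ and $y\in\x{j}{2}$: the two crossing pairs are $(a_j,x)$ and $(b_j,y)$, so the four clause voters all agree on the canonical order $\seq{X}$ of the literal alternatives, and the ``no cross-gadget obstruction'' verification becomes a finite case analysis rather than a hope.
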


\begin{proof}\label{proof:sc_ad_npc}
  For the \NP-hardness result we reduce from \textsc{Max2Sat}~\cite{GJ79}. 
  % Given a
  % set~$U$ of Boolean variables, a collection~$C$ of size-two clauses over~$U$ and a
  % positive integer~$h$, \textsc{Max2Sat} asks whether there is a truth
  % assignment for~$U$ such that at least~$h$ clauses in~$C$ are satisfied.
  %
  We will provide an example for the reduction (\cref{tab:sc-ad-reduced-instance}) right after this proof.

  Let $(U, C, h)$ be a \textsc{Max2Sat} instance with variable set~$U = \{x_{1},\ldots, x_r\}$ and clause set~$C = \{c_{1}, \ldots, c_s\}$.
  We construct two sets~$O$ and $\overline{O}$ of \emph{dummy alternatives} with
  $|O|=|\overline{O}|=2(r \cdot s+r+s)+1$ .
  For each variable~$x_i \in U$, we construct two sets~$X_i$  and $\overline{X_i}$ of
  \emph{variable alternative}s with $|X_i| = |\overline{X_i}| = s+1$.
  We say that $X_i$ corresponds to~$x_i$ and that $\overline{X_i}$ corresponds to~$\overline{x_i}$.
  The canonical orders $\seq{O}$, $\seq{\overline{O}}$,
  $\seq{X_i}$ and $\seq{\overline{X_i}}$, $i \in\{1,\ldots,r\}$, are arbitrary but fixed.
  Let $X$ be the union $\bigcup_{i=1}^{r}{X_i \cup \overline{X_i}}$ of all
  variable alternatives.
  The canonical order~$\seq{X}$ is defined as
  \begin{align*}
   \seq{X} \coloneqq \seq{X_{1}} \pref \seq{\overline{X_{1}}} \pref \seq{X_{2}} \pref \seq{\overline{X_{2}}} \pref \ldots \pref  \seq{X_r}  \pref 
    \seq{\overline{X_r}}.
  \end{align*}
  For each clause~$c_j \in C$,
  we construct two \emph{clause alternative}s $a_j$ and $b_j$.
  Let~$A$ denote the set of all clause alternatives.
  The canonical order~$\seq{A}$ is defined as
  \[\seq{A} \coloneqq a_{1} \pref b_{1} \pref a_{2} \pref b_{2} \pref \ldots \pref a_s \pref b_s.\]
  The total number~$m$ of alternatives is $6(r \cdot s+r+s)+2$.

  We will introduce voters and their \preferenceorder{s} such
  that 
  \begin{enumerate}[(1)]
    \item deleting all alternatives in $X_i$ corresponds to setting variable~$x_i$ to true, 
    \item deleting all alternatives in $\overline{X_i}$ corresponds
    to setting variable~$x_i$ to false, and
    \item deleting~$b_j$ or~$a_j$ corresponds to clause~$c_j$ not being
    satisfied.
  \end{enumerate}
  % Let $\seq{O \cup \overline{O}}$ denote the linear order~$\seq{O} \pref \seq{\overline{O}}$ 
  % and let $\overleftarrow{\seq{O \cup \overline{O}}}$ denote the linear order~$\seq{\overline{O}} \pref \seq{O}$.
  % For each $i$, $1\le i \le r$, let $\seq{X_i\cup \overline{X_{i}}}$ denote the linear order~$\seq{X_i} \pref \seq{\overline{X_i}}$ 
  % and let $\overleftarrow{\seq{X_i \cup \overline{X_i}}}$ denote the linear order~$\seq{\overline{X_i}}\pref \seq{X_i}$.
  We construct two sets~$V$ and $W$ of voters with $|V| = 2r$ and $|W|=4s$.
  Voter set~$V$ consists of two voters~$v_{2i-1}$ and $v_{2i}$ for
  each variable~$x_{i}$, $1 \le i \le r$. 
  Their \preferenceorder{s} are
\newcommand{\sldots}{\ldots}
\newcommand{\sseq}[1]{\langle#1\rangle}
\newcommand{\bboldsymbol}{}
\newcommand{\prefdotssucc}{\pref \ldots \pref}
%\noindent \scalebox{.94}
%{\parbox{.8\linewidth}{%
    \begin{alignat*}{3}
      &\seq{O} \pref \seq{\overline{O}} \pref\; && \seq{X_1} \pref \seq{\overline{X_{1}}} \prefdotssucc \seq{X_{i-1}} \pref \seq{\overline{X_{i-1}}} \pref\\
      &&& \seq{\overline{X_{i}}} \pref \seq{X_i} \pref \seq{X_{i+1}} \pref \seq{\overline{X_{i+1}}} \prefdotssucc \seq{X_{r}} \pref \seq{\overline{X_{r}}} \pref \sseq{A},  \\
      & \seq{\overline{O}} \pref \seq{O} \pref && \seq{X_1} \pref \seq{\overline{X_{1}}} 
      \prefdotssucc 
      \seq{X_{i-1}} \pref \seq{\overline{X_{i-1}}} 
      \pref\\
      &&& \seq{\overline{X_{i}}} \pref \seq{X_i} \pref 
      \seq{X_{i+1}} \pref \seq{\overline{X_{i+1}}} \prefdotssucc 
      \seq{X_{r}} \pref \seq{\overline{X_{r}}} \pref \sseq{A}
      \text{,} 
    \end{alignat*}
%  }}
respectively.
These two voters together with any other two voters~$v_{\ell}$ and $v_{\ell'}\in V\setminus\{v_{2i-1},
v_{2i}\}$ with odd number~$\ell$ and even number~$\ell'$ form a \dinconsistentconfig{} with regard to
each four alternatives~$o, \overline{o}, x, \overline{x}$ such that
$o \in O, \overline{o}\in \overline{O}, x \in X_i, \overline{x} \in \overline{X_i}$:
  \begin{alignat*}{3}
    &\defvoter v_{2i-1} &\colon\;\;& o \pref \overline{o} ~\text{ and }~ \overline{x} \pref x, \\
    &\defvoter v_{2i} &\colon\;\;& \overline{o} \pref o ~\text{ and }~ \overline{x} \pref x,\\
    &\defvoter v_{\ell} &\colon\;\;& o \pref \overline{o} ~\text{ and }~ x \pref \overline{x},\\
    &\defvoter v_{\ell'} &\colon\;\;& \overline{o} \pref o ~\text{ and }~ x \pref \overline{x}.
  \end{alignat*}
  % \begin{alignat*}{5}
  %   &\defvoter v_{2i-1} &\colon\;\;& o \pref \overline{o} ~\text{ and }~ \overline{x} \pref x; \hspace{3ex}
  %   &\defvoter v_{2i} &\colon\;\;& \overline{o} \pref o ~\text{ and }~ \overline{x} \pref x;\\
  %   &\defvoter v_{l} &\colon\;\;& o \pref \overline{o} ~\text{ and }~ x \pref
  %   \overline{x};
  %   &\defvoter v_{l'} &\colon\;\;& \overline{o} \pref o ~\text{ and }~ x \pref \overline{x}.
  % \end{alignat*}
  Voter set~$W$ consists of four voters~$w_{4j-3}, w_{4j-2}, w_{4j-1}, w_{4j}$ for each clause~$c_j$, $1 \le j \le s$.  
  These four voters
  have the same \preferenceorder
  \begin{align*}
   \seq{\overline{O}} \pref \seq{O} \pref \seq{A_{1}} \pref \seq{X} \pref \seq{A_2}
  \end{align*} over the set~$O\cup \overline{O} \cup A_{1} \cup A_2 \cup X$,
  where $A_{1}=\{a_{j'}, b_{j'} \mid j' < j\}$ and $A_2=\{a_{j'}, b_{j'} \mid
  j' > j\}$.
  Note that $A_1\cup A_2 = A \setminus \{a_j, b_j\}$.
  Thus, it remains to specify the exact positions of $a_j$ and $b_j$ in the four voters' \preferenceorder{s}:
  Let \x{j}{1} denote the set of
  variable alternatives corresponding to the literal in~$c_j$ with the lower index
  and \x{j}{2} denote the set of variable alternatives corresponding to the
  literal in~$c_j$ with the higher index.
  For instance, if $c_j=\overline{x_2} \vee x_{4}$, then \x{j}{1}
  equals~$\overline{X_2}$ and \x{j}{2} equals $\overline{X_4}$.

  Voters~$w_{4j-3}$ and $w_{4j-2}$ rank the clause alternative~$a_j$ right
  below the last alternative in $\seq{\x{j}{1}}$ while voters~$w_{4j-1}$
  and $w_{4j}$ rank it right above the first alternative in $\seq{\x{j}{1}}$.
  As for alternative~$b_j$, voters~$w_{4j-3}$ and $w_{4j-1}$ rank~$b_j$ right above
  the first variable alternative in $\seq{\x{j}{2}}$ while
  voters~$w_{4j-2}$ and $w_{4j}$ rank it right below the last variable
  alternative in $\seq{\x{j}{2}}$.
  Thus, these four voters form a
  \dinconsistentconfig{} with regard to $a_{j}, b_{j}, x\in \x{j}{1}$, and $y\in \x{j}{2}$:
  \begin{alignat*}{3}  
    &\defvoter w\sub{4j-3}&\colon\;\;&x \pref a\sub{j} ~\text{ and }~ b\sub{j} \pref {y}, \\
    &\defvoter w\sub{4j-2}&\colon\;\;& x \pref a\sub{j} ~\text{ and }~ {y} \pref b\sub{j},\\
    &\defvoter w\sub{4j-1}&\colon\;\;& a\sub{j} \pref x ~\text{ and }~ b\sub{j} \pref {y}, \\
    &\defvoter w\sub{4j}&\colon\;\;& a\sub{j} \pref x  ~\text{ and }~ {y} \pref b\sub{j}.
  \end{alignat*}
  % \begin{alignat*}{4}  
  %   &\defvoter w\sub{4j-3}&\colon\;\;&x \pref a\sub{j} ~\text{ and }~ b\sub{j} \pref {y}; \hspace{2ex}
  %   &\defvoter w\sub{4j-2}&\colon\;\;& x \pref a\sub{j} ~\text{ and }~ {y} \pref b\sub{j};\\
  %   &\defvoter w\sub{4j-1}&\colon\;\;& a\sub{j} \pref x ~\text{ and }~ b\sub{j} \pref {y}; \hspace{2ex}
  %   &\defvoter w\sub{4j}&\colon\;\;& a\sub{j} \pref x  ~\text{ and }~ {y} \pref b\sub{j}.
  % \end{alignat*}
  %The total number~$n$ of voters is $2r+4s$.
  %This completes the construction.

  \noindent
  We complete the construction by setting the number~$k$ of alternatives that may be deleted to $k\coloneqq r(s+1)+(s-h)$.

  The construction clearly runs in polynomial time.
  It remains to show that $(U, C, h)$ is a yes-instance of \textsc{Max2Sat}
  if and only if the constructed profile together with $k$ is a yes-instance
  of \problemScAD.

  For the ``only if'' part, suppose that % ~$(U, C, h)$ is a yes-instance for
  % \textsc{Max2Sat}:
  there is a truth assignment $U  \to  \{\truevalue, \falsevalue\}^{r}$ of the variables 
  such that at least $h$ clauses are satisfied.
  We delete all variable alternatives in $X_i$ if $x_i$ is assigned to true.
  Otherwise, we delete all variable alternatives in $\overline{X_i}$.
  Furthermore, we delete the clause alternative~$b_j$ if $c_j$ is not satisfied by the assignment.
  Let $X_{\mathrm{rem}}$ be the set of remaining variable alternatives, 
  and let $A_{\mathrm{rem}}$ be the set of remaining clause alternatives.
  Then, $|X_{\mathrm{rem}}|=r(s+1)$ and $|A'|\ge s+h$,
  implying that the number of deleted alternatives is $|X|+|A| - (|X_{\mathrm{rem}}| + |A_{\mathrm{rem}}|) \le r(s+1) + (s-h) = k$. 
  
  For each $j\in\{1,\ldots,s\}$, 
  we define $\seq{z_{j}} = w\sub{4j-2} \pref w\sub{4j} \pref w\sub{4j-3} \pref w\sub{4j-1} $ 
  if the literal in clause~$c_j$ with the lower index is satisfied; 
  otherwise, $\seq{z_{j}} = w\sub{4j-3} \pref w\sub{4j-2} \pref w\sub{4j-1} \pref w\sub{4j}$.  
  The resulting profile is
  single-crossing with respect to the voter order~$L$:
  \[
  v_{1} \pref v_{3} \pref \ldots \pref v_{2r-1} \pref
  v_{2} \pref v_{4} \pref \ldots \pref v_{2r} \pref 
  \seq{z_{1}} \pref \seq{z_{2}} \pref \ldots \pref \seq{z_{s}}\text{.}
  \]
  % To show this, we define the concept of ``separation'': 
  % We say that a pair~$\{a, b\}$ of distinct alternatives \emph{separates the order~$L$ into
  %   two orders~$L_1$ and $L_2$} if $L = \lll_1 \pref \lll_2$ and no voter in $\seq{\lll_1}$ agrees with any voter
  % in $\seq{\lll_2}$ on the order of $a$ and~$b$. Obviously, $\lll$ is
  % single-crossing if and only if it can be separated by every possible pair of alternatives.

  Suppose for the sake of contradiction that~$L$ is not a single-crossing order,
  which means that
  there is a pair~$\{a, a'\} \subset O\cup \overline{O} \cup X_{\mathrm{rem}}\cup A_{\mathrm{rem}}$ of alternatives
  and three voters~$u,v,w$ with $u\pref_L v \pref_L w$ such that
  voter $v$ disagrees with voters~$u$ and $w$ on the relative order of $a$ and $a'$.

  %$L$ cannot be separated by a pair~$\{a, a'\} \subset O\cup \overline{O} \cup X_{\mathrm{rem}}\cup A_{\mathrm{rem}}$ of alternatives.
  Note that all voters along $L$ up to and including voter~$v_{2r-1}$ rank $\seq{O} \pref \seq{\overline{O}} \pref \seq{X}$ while all voters from $v_{2}$~onwards rank $\seq{\overline{O}} \pref \seq{O} \pref \seq{X}$.
  Hence,
  $a$ and~$a'$ can neither both be in $O\cup \overline{O}$, 
  nor both be in $X_{\mathrm{rem}}$.  
  Furthermore, $a$ and~$a'$ cannot both be in $A_{\mathrm{rem}}$ 
  as all voters rank~$\seq{A}$.  
  Moreover, since all voters rank $(O\cup
  \overline{O}) \pref (X\cup A)$, neither $a$ nor $a'$ belongs to $O\cup
  \overline{O}$.  This means, without loss of generality, that $a\in X_{\mathrm{rem}}$ and $a' \in
  A_{\mathrm{rem}}$.
  
  Assume that~$a'$ corresponds to clause~$c_j$ for some $j$, that is, $a'\in \{a_j, b_j\}$. 
  Then, for each alternative~$a''\in X_{\mathrm{rem}}\setminus (\x{j}{1} \cup \x{j}{2})$
  that does not correspond to a literal in $c_j$~(recall that $\x{j}{1}$ 
  and $\x{j}{2}$ denote the two sets of variable alternatives corresponding to the literal in~$c_j$ 
  with the lower index and the literal in~$c_j$ with the lower index, respectively),
  the following holds.
  If the first voter in $\seq{z_j}$ prefer~$a''$ to $a'$, 
  which means either that $a''$ is ranked in front of $\x{j}{1} \cup \x{j}{2}$ (by all voters)
  or 
  that $a'=a_j$ and $a''$ is ranked in front of $\x{j}{2}$ (by all voters),
  then all voters along the order~$L$ up to and including the last voter in $\seq{z_j}$ prefer $a''$ to $a'$ 
  while all remaining voters prefer $a'$ to $a''$;
  otherwise all voters along the order~$L$ up to and including the last voter in $\seq{z_{j-1}}$ prefer $a''$ to $a'$
  while all remaining voters prefer $a'$ to $a''$.
  Thus, $a'$ cannot be in $X_{\mathrm{rem}} \setminus (\x{j}{1} \cup \x{j}{2})$.
  % since each pair~$\{a', a''\}$ with $a'' \in X_{\mathrm{rem}}\setminus (\x{j}{1} \cup \x{j}{2})$ separates the order~$L$ into two orders~$\seq{L_1}$ and $\seq{L_2}$ where either 
  %  \begin{align*}
  %   \seq{L_{1}}=\seq{v\sub{1}, v\sub{3}, \sldots, v\sub{2r - 1}, 
  %     v\sub{2}, v\sub{4},\sldots, v\sub{2r}, z\sub{1}, \sldots, z\sub{j}} \text{ and }
  %   \seq{L_{2}}=\seq{z\sub{j + 1}, \sldots, z\sub{s}}, 
  % \end{align*}
  % or 
  % \begin{align*}
  % \seq{L_{1}}=\seq{v\sub{1}, v\sub{3}, \sldots, v\sub{2r - 1}, v\sub{2},v\sub{4},\sldots, v\sub{2r}, z\sub{1}, \sldots, z\sub{j - 1}}\text{ and } \seq{L_{2}}=\seq{z\sub{j},  \sldots, z\sub{s}},
  % \end{align*}
  % depending on whether $a''$ corresponds to a variable whose index is lower than 
  % the indices of both literals in $c_j$.
  That is, we have $a \in \x{j}{1} \cup \x{j}{2}$.
  We distinguish four cases regarding $a$ and $a'$.
  \begin{enumerate}[(i)]
    \item If $a \in \x{j}{1}$ and if $a' = a_j$, 
    then the literal corresponding to $\x{j}{1}$ is not satisfied because $\x{j}{1}$ is not deleted.
    Thus, $\seq{z_j}$ is defined as $w_{4j-3} \pref w_{4j-2} \pref $ $w_{4j-1} \pref w_{4j}$.
    All voters along $L$ up to and including $w_{4j-2}$ prefer $a$ to~$a'$, 
    and all remaining voters prefer $a'$ to~$a$.
    \item If $a \in \x{j}{1}$ and if $a' = b_j$,
    then all voters along $L$ up to and including the last voter in~$\seq{z_j}$ prefer $a$ to $a'$,
    and all remaining voters prefer $a'$ to~$a$.
    \item If $a \in \x{j}{2}$ and if $a' = a_j$,
    then all voters along $L$ up to and including the last voter in~$\seq{z_{j-1}}$ prefer $a$ to $a'$,
    and all remaining voters $a'$ to $a$.
    \item If $a \in \x{j}{2}$ and if $a' = b_j$, 
    then clause~$c_j$ is satisfied because $b_j$ is not deleted.
    Furthermore, since $\x{j}{2}$ is not deleted, $\x{j}{1}$ must be deleted because clause~$c_j$ is satisfied.
    This implies that the literal in clause~$c_j$ with the lower index is satisfied.
    Thus, $\seq{z_j}$ is defined as $w_{4j-2} \pref w_{4j} \pref w_{4j-3} \pref w_{4j-1}$.
    All voters along $L$ up to and including $w_{4j}$ prefer $a$ to~$a'$, 
    and all remaining voters prefer $a'$ to $a$.
  \end{enumerate}
  In summary, there is single a voter~$v$ along the order~$L$ such that
  all voters up to and including $v$ have the same preference over $\{a,a'\}$ and
  all remaining voters have the same preference over $\{a,a'\}$---a contradiction to the assumption that~$L$ is not a single-crossing order.

 \begin{table}[t!]
   \centering
   \begin{tabular}{@{}l @{} l @{}}
     $\defvoter v_1 \colon$&$ \seq{O} \spref \seq{\overline{O}} \spref \seq{\overline{X_1}} \spref \seq{X_1} \spref \seq{X_2} \spref \seq{\overline{X_2}} \spref a_1 \spref b_1 \spref a_2 \spref b_2 \spref a_3 \spref b_3 \spref a_4 \spref b_4$\\
     $\defvoter v_2 \colon$&$ \seq{\overline{O}} \spref \seq{O} \spref \seq{\overline{X_1}} \spref \seq{X_1} \spref \seq{X_2} \spref \seq{\overline{X_2}} \spref a_1 \spref b_1 \spref a_2 \spref b_2 \spref a_3 \spref b_3 \spref a_4 \spref b_4$\\
     $\defvoter v_3 \colon$&$ \seq{O} \spref \seq{\overline{O}} \spref \seq{X_1} \spref \seq{\overline{X_1}} \spref \seq{\overline{X_2}} \spref \seq{X_2} \spref a_1 \spref b_1 \spref a_2 \spref b_2 \spref a_3 \spref b_3 \spref a_4 \spref b_4$\\
     $\defvoter v_4 \colon$&$ \seq{\overline{O}} \spref \seq{O} \spref \seq{X_1} \spref \seq{\overline{X_1}} \spref \seq{\overline{X_2}} \spref \seq{X_2} \spref a_1 \spref b_1 \spref a_2 \spref b_2 \spref a_3 \spref b_3 \spref a_4 \spref b_4$\\
     $\defvoter w_1 \colon$&$ \seq{\overline{O}} \spref \seq{O} \spref \seq{X_1} \spref a_1 \spref \seq{\overline{X_1}} \spref b_1 \spref \seq{X_2} \spref \seq{\overline{X_2}} \spref a_2 \spref b_2 \spref a_3 \spref b_3 \spref a_4 \spref b_4$\\
     $\defvoter w_2 \colon$&$ \seq{\overline{O}} \spref \seq{O} \spref \seq{X_1} \spref a_1 \spref \seq{\overline{X_1}} \spref \seq{X_2} \spref b_1 \spref \seq{\overline{X_2}} \spref a_2 \spref b_2 \spref a_3 \spref b_3 \spref a_4 \spref b_4$\\
     $\defvoter w_3 \colon$&$ \seq{\overline{O}} \spref \seq{O} \spref a_1 \spref \seq{X_1} \spref \seq{\overline{X_1}} \spref b_1 \spref \seq{X_2} \spref \seq{\overline{X_2}} \spref a_2 \spref b_2 \spref a_3 \spref b_3 \spref a_4 \spref b_4$\\
     $\defvoter w_4 \colon$&$ \seq{\overline{O}} \spref \seq{O} \spref a_1 \spref \seq{X_1} \spref \seq{\overline{X_1}} \spref \seq{X_2} \spref b_1 \spref \seq{\overline{X_2}} \spref a_2 \spref b_2 \spref a_3 \spref b_3 \spref a_4 \spref b_4$\\
     $\defvoter w_5 \colon$&$ \seq{\overline{O}} \spref \seq{O} \spref a_1 \spref b_1 \spref \seq{X_1} \spref a_2 \spref \seq{\overline{X_1}} \spref \seq{X_2} \spref b_2 \spref \seq{\overline{X_2}} \spref a_3 \spref b_3 \spref a_4 \spref b_4$\\
     $\defvoter w_6 \colon$&$ \seq{\overline{O}} \spref \seq{O} \spref a_1 \spref b_1 \spref \seq{X_1} \spref a_2 \spref \seq{\overline{X_1}} \spref \seq{X_2} \spref \seq{\overline{X_2}} \spref b_2 \spref a_3 \spref b_3 \spref a_4 \spref b_4$\\
     $\defvoter w_7 \colon$&$ \seq{\overline{O}} \spref \seq{O} \spref a_1 \spref b_1 \spref a_2 \spref \seq{X_1} \spref \seq{\overline{X_1}} \spref \seq{X_2} \spref b_2 \spref \seq{\overline{X_2}} \spref a_3 \spref b_3 \spref a_4 \spref b_4$\\
     $\defvoter w_8 \colon$&$ \seq{\overline{O}} \spref \seq{O} \spref a_1 \spref b_1 \spref a_2 \spref \seq{X_1} \spref \seq{\overline{X_1}} \spref \seq{X_2} \spref \seq{\overline{X_2}}  \spref b_2  \spref a_3 \spref b_3 \spref a_4 \spref b_4$\\
     $\defvoter w_9 \colon$&$ \seq{\overline{O}} \spref \seq{O} \spref a_1 \spref b_1 \spref a_2 \spref b_2 \spref \seq{X_1} \spref \seq{\overline{X_1}} \spref a_3 \spref b_3 \spref \seq{X_2} \spref \seq{\overline{X_2}} \spref a_4 \spref b_4$\\
     $\defvoter w_{10} \colon$&$ \seq{\overline{O}} \spref \seq{O} \spref a_1 \spref b_1 \spref a_2 \spref b_2 \spref \seq{X_1} \spref \seq{\overline{X_1}} \spref a_3 \spref \seq{X_2} \spref b_3 \spref \seq{\overline{X_2}} \spref a_4 \spref b_4$\\
     $\defvoter w_{11} \colon$&$ \seq{\overline{O}} \spref \seq{O} \spref a_1 \spref b_1 \spref a_2 \spref b_2 \spref \seq{X_1} \spref a_3 \spref \seq{\overline{X_1}} \spref b_3 \spref \seq{X_2} \spref \seq{\overline{X_2}} \spref a_4 \spref b_4$\\
     $\defvoter w_{12} \colon$&$ \seq{\overline{O}} \spref \seq{O} \spref a_1 \spref b_1 \spref a_2 \spref b_2 \spref \seq{X_1} \spref a_3 \spref \seq{\overline{X_1}} \spref \seq{X_2} \spref b_3 \spref \seq{\overline{X_2}} \spref a_4 \spref b_4$\\
     $\defvoter w_{13}\colon$&$ \seq{\overline{O}} \spref \seq{O} \spref a_1 \spref b_1 \spref a_2 \spref b_2 \spref a_3 \spref b_3 \spref \seq{X_1} \spref \seq{\overline{X_1}} \spref a_4  \spref \seq{X_2} \spref b_4  \spref \seq{\overline{X_2}}$\\
     $\defvoter w_{14} \colon$&$ \seq{\overline{O}} \spref \seq{O} \spref a_1 \spref b_1 \spref a_2 \spref b_2 \spref a_3 \spref b_3 \spref \seq{X_1} \spref \seq{\overline{X_1}} \spref a_4 \spref \seq{X_2}  \spref \seq{\overline{X_2}}\spref b_4$\\
     $\defvoter w_{15} \colon$&$ \seq{\overline{O}} \spref \seq{O} \spref a_1 \spref b_1 \spref a_2 \spref b_2 \spref a_3 \spref b_3 \spref \seq{X_1} \spref a_4 \spref \seq{\overline{X_1}} \spref \seq{X_2}  \spref b_4 \spref \seq{\overline{X_2}}$\\
     $\defvoter w_{16} \colon$&$ \seq{\overline{O}} \spref \seq{O} \spref a_1 \spref b_1 \spref a_2 \spref b_2 \spref a_3 \spref b_3 \spref \seq{X_1} \spref a_4 \spref \seq{\overline{X_1}} \spref \seq{X_2} \spref \seq{\overline{X_2}}  \spref b_4$
   \end{tabular}
   \caption{An instance~$((A, V), k=11)$ with alternative set~$O \cup \overline{O} \cup X_1 \cup \overline{X_1} \cup X_2 \cup \overline{X_2} \cup \{a_i, b_i \mid 1\le i \le 4\}$ and voter set~$\{v_i, w_{4i-3}, w_{4i-2}, w_{4i-1}, w_{4i} \mid 1\le i \le 4\}$ reduced from the \problemMTwoSATshort instance with two variables~$x_1$ and $x_2$,
     and with four clauses~$c_1=x_1\wedge x_2$, $c_2=x_1 \wedge \overline{x_2}$, $c_3=\overline{x_1}\wedge x_2$, and $c_4=\overline{x_1} \wedge \overline{x_2}$.
 The maximum number~$h$ of satisfied clauses is three.
}
\label{tab:sc-ad-reduced-instance}
 \end{table}
 
  For the ``if'' part, suppose that deleting a set~$K$ of at most $k$ alternatives
  makes the remaining profile single-crossing.
  Since $|O|=|\overline{O}| \ge k$, at least one pair~$\{o, \overline{o}\}$ of dummy alternatives is not deleted, where $o\in O$ and $\overline{o} \in \overline{O}$.
  Let $X_{\mathrm{del}}$ denote the set of all deleted variable alternatives,
  and $A_{\mathrm{del}}$ denote the set of all deleted clause alternatives.
  Clearly, $|X_\mathrm{del}| + |A_{\mathrm{del}}| \le |K|$.
  For each $x_i \in U$, at least one set of $X_i$ and $\overline{X_{i}}$ must be deleted to destroy all \dcon-configurations involving alternatives in $\{o, \overline{o}\} \cup X_i \cup \overline{X_i}$.
  This means that $|X_{\mathrm{del}}| \ge r(s+1)$.
  Thus, $|A_{\mathrm{del}}| \le |K| - |X_{\mathrm{del}}| \le k - r(s+1) \le s-h$.
  Let $C_{\mathrm{both}}$ denote the set of clauses such that 
  neither $a_j$ nor $b_j$ is deleted, 
  $1\le j\le s$, that is, $C_{\mathrm{both}} \coloneqq \{c_j \mid \{a_j, b_j\} \cap A_{\mathrm{del}} = \emptyset\}$.
  Set~$C_{\mathrm{both}}$ has cardinality at least $h$ because $|A_{\mathrm{del}}| \le s-h$.
  We show that by setting variable~$x_i \in U$ to true if $X_i \subseteq X_{\mathrm{del}}$, and false
  otherwise, all clauses~$c_{j}$ from $C_{\mathrm{both}}$ are satisfied.
  Suppose for the sake of contradiction that clause~$c_{j}\in C_{\mathrm{both}}$ %%  with
  %% $\{a_{j}, b_{j}\}\cap A_{\mathrm{del}} =\emptyset$
  is not satisfied.
  This means that  $\{a_{j}, b_{j}\}\cap A_{\mathrm{del}} =\emptyset$, 
  % Let $\x{j}{1}$ denote the set of variable alternatives corresponding
  % to the literal in $c_j$ with lower index, and let $\x{j}{2}$ denote the set of variable
  % alternatives corresponding to the literal in~$c_j$ with higher index.
  and that both $\x{j}{1}$ and $\x{j}{2}$ are not completely contained in $X_{\mathrm{del}}$.
  But then, voters  $w_{4i-3}, w_{4i-2}, w_{4i-1}$, and $w_{4i}$ form a
  \dinconsistentconfig{} with regard to $a_j, b_j, x, x'$ with $x \in \x{j}{1} \setminus X_{\mathrm{del}}$ and $x'\in \x{j}{2} \setminus X_{\mathrm{del}}$% In order to eliminate the \dinconsistentconfig{} formed by voters $w_{4i-3}, w_{4i-2},
  % w_{4i-1}$, and $w_{4i}$ regarding alternatives $a_j, b_j, x, x'$ with $x\in \x{j}{1}$ and
  % $x'\in \x{j}{2}$, either $a_j$ or $b_j$ must be deleted
  ---a contradiction.
 \end{proof}

 We illustrate our \NP-hardness reduction through an example. 
 Consider a \problemMTwoSATshort~instance with two variables~$x_1$ and $x_2$ and four clauses~$c_1=x_1\wedge x_2$, 
 $c_2=x_1 \wedge \overline{x_2}$, $c_3=\overline{x_1}\wedge x_2$, and $c_4=\overline{x_1} \wedge \overline{x_2}$.
 The maximum number~$h$ of satisfied clauses is three.
 For instance, the truth assignment~$x_1 \mapsto \truevalue$ and $x_2 \mapsto \falsevalue$ satisfies clauses~$c_1, c_2, c_4$.
 \cref{tab:sc-ad-reduced-instance} depicts the reduced instance of \problemScVD
 with alternative set~$O \cup \overline{O} \cup X_1 \cup \overline{X_1} \cup X_2 \cup \overline{X_2} \cup \{a_i, b_i \mid 1\le i \le 4\}$ and voter set~$\{v_i, w_{4i-3}, w_{4i-2}, w_{4i-1}, w_{4i} \mid 1\le i \le 4\}$.
 The number~$k$ of alternatives that may be deleted is set to $2\cdot (4+1) + (4-3) = 11$.
 We can verify that deleting the alternatives from $X_1\cup \overline{X_2} \cup \{b_3\}$
 results in a single-crossing profile where a single-crossing order is $v_1 \pref v_3 \pref v_2 \pref v_4 \pref w_2 \pref w_4 \pref w_1 \pref w_3 \pref w_6 \pref w_8 \pref w_5 \pref w_7 \pref w_9 \pref w_{10} \pref \ldots \pref w_{16}$.

 \medskip

 In contrast to the other \NP-hard \problemPiVD problems,
 \problemScVD is polynomial-time solvable.
 The algorithm, which is similar to the single-crossing detection algorithm by~\citet{EFS12},
 not only solves the decision problem, %\problemScVD
 but also the optimization problem asking for the maximum-size subset of voters
 such that the profile restricted to this subset is single-crossing.

 Before we proceed to describe the algorithm, we define some notions and make some observations about single-crossing profiles.
 We call a set~$S$ of preference orders \emph{single-crossing}, 
 if there is a single-crossing order of the elements in $S$.
 We introduce the notion~\mbox{$\Delta(\pref,\pref')$} of the set of \emph{conflict pairs} for two given \preferenceorder{s}~$\pref$ and $\pref'\in \mathcal{S}$.
 By $\Delta(\pref,\pref')$, 
 we denote the set of pairs~$\{a, b\}$ of alternatives whose relative order differs between \preferenceorder{s}~$\pref$ and $\pref'$.
 Formally, $\Delta(\pref,\pref') \coloneqq \{\{a,b\} \mid a\pref b \wedge b \pref' a\}.$
 For instance, given three alternatives~$a,b,c$, 
 if the \preferenceorder{s}~$\pref, \pref'$ are the same,
 then $\Delta(v,v') = \emptyset$;
 if the two \preferenceorder{s} are specified as follows:
 \[
   b \pref a \pref c \text{\; and \;} c \pref' b \pref' a\text{,}
 \]
 then $\Delta(\pref,\pref') \coloneqq \{\{a,c\}, \{b,c\}\}$.

 Based on this notion, 
 we can redefine the single-crossing property of a set of \preferenceorder{s} 
 using set inclusions.
 For the sake of readability, 
 we will use the vector notation~$(\cdot, \cdots, \cdot)$ to denote a linear order over a set of preference orders. 

 \begin{lemma}\label{lem:single-crossing<=>set inclusion}
   A linear order~$(\pref^*_1, \pref^*_2, \ldots, \pref^*_n)$ over a set of $n$ \preferenceorder{s} is single-crossing if and only if
   for each two preference orders~$\pref^*_i$ and $\pref^*_j$ with $1\le i\le j \le n$ it holds that
   $\Delta(\pref^*_1, \pref^*_i) \subseteq \Delta(\pref^*_1, \pref^*_j)$.
 \end{lemma}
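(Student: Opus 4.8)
The plan is to prove both directions of the biconditional directly from the definition of single-crossing given earlier (the condition that for each pair $\{a,b\}$ of alternatives, if the first voter in the order prefers $a$ to $b$, then the set of voters preferring $b$ to $a$ forms a suffix of the order). The key translation is that $\Delta(\pref^*_1,\pref^*_i)$ is precisely the set of pairs on which voter $\pref^*_i$ disagrees with the reference voter $\pref^*_1$; the nesting condition $\Delta(\pref^*_1,\pref^*_i)\subseteq\Delta(\pref^*_1,\pref^*_j)$ for $i\le j$ says that disagreements with the first voter, once introduced, never disappear as we move down the order. First I would fix an arbitrary pair $\{a,b\}$ of alternatives and, without loss of generality, assume $a\pref^*_1 b$ (the reference voter prefers $a$ to $b$). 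Then for any voter $\pref^*_i$, the pair $\{a,b\}$ lies in $\Delta(\pref^*_1,\pref^*_i)$ if and only if $b\pref^*_i a$.

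\textbf{Forward direction.} Assume the order is single-crossing. Take $i\le j$ and a pair $\{a,b\}\in\Delta(\pref^*_1,\pref^*_i)$; I must show $\{a,b\}\in\Delta(\pref^*_1,\pref^*_j)$. By the remark above, assuming $a\pref^*_1 b$, we have $b\pref^*_i a$. By the single-crossing condition applied to the pair $\{a,b\}$ (with $\pref^*_1$ the first voter preferring $a$ to $b$) and to the two voters $\pref^*_i\pref_L\pref^*_j$, the implication $b\pref^*_i a \Rightarrow b\pref^*_j a$ holds; hence $\{a,b\}\in\Delta(\pref^*_1,\pref^*_j)$, as required.

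\textbf{Backward direction.} Assume the nesting condition holds for all $i\le j$. To verify the single-crossing condition, fix a pair $\{a,b\}$ and suppose the first voter $\pref^*_1$ prefers $a$ to $b$, and take $\pref^*_i\pref_L\pref^*_j$ with $b\pref^*_i a$. Then $\{a,b\}\in\Delta(\pref^*_1,\pref^*_i)\subseteq\Delta(\pref^*_1,\pref^*_j)$, so $\{a,b\}$ is a conflict pair between $\pref^*_1$ and $\pref^*_j$; since $a\pref^*_1 b$, this forces $b\pref^*_j a$. That is exactly the required implication, so the order is single-crossing. I don't anticipate a genuine obstacle here — the only mild subtlety is being careful with the orientation of the pair (which of $a,b$ the reference voter prefers) so that "conflict with $\pref^*_1$" is correctly identified with "ranks the pair the opposite way from $\pref^*_1$"; once that bookkeeping is fixed, both directions are immediate unwindings of the definitions.
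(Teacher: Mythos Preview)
Your proof is correct and follows essentially the same approach as the paper: both directions are immediate translations between the single-crossing definition and the nesting of conflict sets relative to the first voter. The only cosmetic difference is that the paper phrases the backward direction as a proof by contradiction (assuming a violating triple of positions), whereas you verify the single-crossing implication directly; the underlying argument is identical.
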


 \begin{proof}
   The ``only if'' part follows directly from the definition of the single-crossing property and the set of conflict pairs.
   For the ``if'' part, suppose towards a contradiction that the order~$(\pref^*_1, \pref^*_2, \ldots, \pref^*_n)$ is not single-crossing.
   This means that there are two alternatives~$a, b$,
   and there are two preference orders~$\pref^*_i, \pref^*_j$ with $1 < i < j$ such that $a\pref^*_{1} b$ and $a\pref^*_{j} b$, but $b\pref^*_{i} a$.
   Then it follows that $\{a,b\}\in \Delta(\pref^*_{1}, \pref^*_{i})$ but $\{a, b\} \notin \Delta(\pref^*_1, \pref^*_j)$---a contradiction.
   % For the ``only if'' part, suppose towards a contradiction that 
   % there are two voters~$v^*_i, v^*_j$ with $1\le i\le j \le n$ 
   % and there are two alternatives~$a$ and $b$ such that $\{a,b\}\in \Delta(v^*_1, v^*_i)$ but $\{a,b\}\notin \Delta(v^*_1, v^*_j)$.   
 \end{proof}
 
 The following observation states that 
 the single-crossing property only depends on the preference orders,
 not on the voters.
 \begin{obs}\label{obs:identicalvoters do not change sc property}
   Let $V$ be a set of voters and let $w\notin V$ be an additional voter 
   such that there is a voter in $V$ who has the same \preferenceorder{} as voter~$w$.
   Then, the profile with voter set~$V$ is single-crossing if and only if
   the profile with voter set~$V\cup \{w\}$ is single-crossing.
 \end{obs}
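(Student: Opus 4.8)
The plan is to prove the two implications separately; only the direction ``$V$ single-crossing $\Rightarrow$ $V\cup\{w\}$ single-crossing'' requires any work. For the easy direction I would use the characterization that a profile is single-crossing if and only if it contains neither a \ccon-configuration nor a \dcon-configuration: every \ccon- or \dcon-configuration formed by voters of $V$ is still present in $V\cup\{w\}$, so if the profile on $V\cup\{w\}$ is single-crossing it contains none, hence the subprofile on $V$ contains none either and is therefore single-crossing. (Alternatively, one restricts a single-crossing order of $V\cup\{w\}$ to $V$ and observes that, for each pair of alternatives, the set of voters preferring one to the other is a prefix or a suffix of the order and remains so after deleting $w$.)

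For the main direction, let $v^\ast\in V$ be a voter whose preference order coincides with that of $w$, and let $(\pref^\ast_1,\dots,\pref^\ast_n)$ be a single-crossing order of the voters of $V$ with $v^\ast$ at position $p$. I would insert $w$ immediately \emph{after} $v^\ast$, obtaining the order $(\pref^\ast_1,\dots,\pref^\ast_p,\pref^\ast_w,\pref^\ast_{p+1},\dots,\pref^\ast_n)$ of $V\cup\{w\}$, where $\pref^\ast_w=\pref^\ast_p$. Inserting $w$ after (rather than in front of) $v^\ast$ keeps $\pref^\ast_1$ as the first voter even when $p=1$, so I can apply \cref{lem:single-crossing<=>set inclusion}: the original order being single-crossing means the conflict sets $\Delta(\pref^\ast_1,\pref^\ast_i)$ are nested increasingly in $i$, and since $\Delta(\pref^\ast_1,\pref^\ast_w)=\Delta(\pref^\ast_1,\pref^\ast_p)$, inserting this set right after the $p$-th term leaves the chain nested. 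By \cref{lem:single-crossing<=>set inclusion} again, the new order is single-crossing, so $V\cup\{w\}$ is single-crossing.

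I do not expect a genuine obstacle here; the one point deserving attention is the asymmetry in the definition of a single-crossing order (it singles out the first voter), which is why in the easy direction I would argue through the forbidden-configuration characterization or the prefix/suffix reformulation rather than directly from the definition, and why in the main direction I insert $w$ adjacent to $v^\ast$ so that the reference voter $\pref^\ast_1$ is untouched and \cref{lem:single-crossing<=>set inclusion} applies verbatim. As a backup for the main direction I would instead use the configuration characterization together with the easily checked fact that the voters of any \ccon- or \dcon-configuration have pairwise distinct preference orders; this lets me replace an occurrence of $w$ in a forbidden configuration of $V\cup\{w\}$ by $v^\ast$, producing one inside $V$, a contradiction.
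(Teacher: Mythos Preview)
Your proof is correct, but the paper takes a shorter, more conceptual route. The paper simply observes that whether a profile is single-crossing depends only on the \emph{set} of distinct preference orders present (it explicitly defines a set $S$ of preference orders to be single-crossing and identifies this with the profile notion). Since adding $w$ leaves $\mathcal{S}(V)=\mathcal{S}(V\cup\{w\})$ unchanged, both directions follow at once.

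Your argument is essentially an explicit unpacking of the step the paper treats as immediate: the insertion of $w$ next to $v^\ast$ and the appeal to \cref{lem:single-crossing<=>set inclusion} is exactly what justifies ``profile single-crossing $\Leftrightarrow$ set of orders single-crossing'' in the nontrivial direction. So your proof is more self-contained, while the paper's one-liner is terser but leans on that implicit equivalence. The backup via forbidden configurations (using that the voters in a \ccon- or \dcon-configuration have pairwise distinct orders) is also valid and gives yet another symmetric way to handle both directions simultaneously.
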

 \begin{proof}
   By the definition of single-crossing orders,
   a profile is single-crossing if and only if the set of the preference orders of all voters in this profile is single-crossing.
   Since adding voter~$w$ to voter set~$V$ does not change the set of the preference orders of all voters in $V$,
   the statement follows.
 \end{proof}

 Based on the notions of conflicting pairs and single-crossing sets of preference orders,
 and \cref{lem:single-crossing<=>set inclusion} and \cref{obs:identicalvoters do not change sc property},
 we can solve the maximization variant of the \problemScVD problem by 
 reducing it to finding a longest path in an appropriately constructed directed acyclic graph.
 This implies the following theorem.

\begin{theorem}\label{thm:sc_vd_p}
  \problemScVD is solvable in $O(n^3 \cdot m^2)$ time, where $n$~denotes the
  number of voters and $m$~denotes the number of alternatives.
\end{theorem}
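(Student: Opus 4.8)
The plan is to solve the \emph{maximization} version — find a largest subset $S$ of voters for which the profile restricted to $S$ is single-crossing — and then answer ``yes'' iff $|S| \ge n-k$. By \cref{obs:identicalvoters do not change sc property} we may assume an optimal $S$ is \emph{saturated}: whenever it contains one voter with a given preference order it contains all of them. Hence it suffices to work with the distinct preference orders occurring in the profile, each carrying as a weight the number of voters realizing it.

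The engine is \cref{lem:single-crossing<=>set inclusion}: once the first voter $v$ of a single-crossing order is fixed, a set of preference orders extends $v$ to a single-crossing order precisely when their conflict sets relative to $\pref_v$ form a chain under inclusion. So for each of the at most $n$ choices of a ``first'' voter $v$ I would (i) compute $\Delta(\pref_v,\pref_w)$ for every voter $w$, each a subset of the $\binom{m}{2}$ alternative pairs; (ii) collapse voters with equal conflict set (equivalently, equal preference order) into a single weighted node; (iii) build the directed graph $D_v$ on these nodes with an arc from $C$ to $C'$ whenever $\Delta_C \subsetneq \Delta_{C'}$ — this is acyclic since $\subsetneq$ is a strict partial order, and sorting nodes by $|\Delta|$ gives a topological order; and (iv) compute a maximum-weight path in $D_v$ that starts at the node whose conflict set is $\emptyset$ (the class of $v$, which always exists since $\Delta(\pref_v,\pref_v)=\emptyset$). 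By the ``if'' direction of \cref{lem:single-crossing<=>set inclusion}, ordering the voters along such a path by increasing $|\Delta|$ yields a single-crossing profile of size equal to the path weight; conversely, taking $v$ to be the first voter of the single-crossing order of an optimal saturated $S$, the equivalence classes of $S$ form exactly one such path of weight $|S|$. Returning the maximum over all choices of $v$ thus gives the optimum, from which the decision answer follows.

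For the running time: step (i) compares two linear orders over $m$ alternatives, costing $O(m^2)$ per pair and $O(nm^2)$ for a fixed $v$; there are at most $n$ distinct conflict sets, each of size $O(m^2)$, so all $O(n^2)$ inclusion tests in step (iii) cost $O(n^2 m^2)$; step (iv) is a longest-weighted-path computation in a DAG with $O(n)$ nodes and $O(n^2)$ arcs, solvable in $O(n^2)$ time after the topological sort. Each choice of first voter therefore costs $O(n^2 m^2)$, and over the $n$ choices we obtain the claimed $O(n^3 m^2)$ bound. The one point that needs genuine care — rather than being routine bookkeeping — is the correctness argument: one must check that no single-crossing requirement is hidden beyond ``the conflict sets form a chain'', and that passing to saturated subsets over the \emph{given} preference orders (not hypothetical ones) loses nothing. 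Both are exactly what \cref{lem:single-crossing<=>set inclusion} and \cref{obs:identicalvoters do not change sc property} supply, so the proof essentially consists of stitching these two facts onto the standard longest-path dynamic program.
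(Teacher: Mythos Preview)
Your proposal is correct and follows essentially the same approach as the paper: reduce to distinct preference orders via \cref{obs:identicalvoters do not change sc property}, fix a candidate first order, use \cref{lem:single-crossing<=>set inclusion} to turn single-crossingness into a chain condition on conflict sets, and extract the optimum via a longest weighted path in the resulting DAG. The only cosmetic difference is that the paper merges your outer loop over the first voter into a single DAG by creating a copy $u^z_i$ of each preference-order node for every choice~$z$ of first order and adding a common root, whereas you iterate explicitly; the correctness and $O(n^3 m^2)$ accounting are the same.
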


\begin{proof}\label{proof:sc_vd_p}
  Suppose that we are given a profile with $A$ being the set of
  $m$~alternatives and $V$ being the set of $n$~voters, each voter having a preference order over $A$.
  Now, the goal is to find a maximum-size subset of voters such that the profile restricted to this subset is single-crossing. 
  To achieve this goal, we use two further notions:
  Let $\mathcal{S}(V) \coloneqq \{\pref_v \mid v \in V\}$ be the set of the preference orders of all voters from $V$.
  Without loss of generality, let $\mathcal{S}(V) \coloneqq \{\pref_1, \pref_2, \ldots, \pref_{n'}\}$.
  For each \preferenceorder~$\pref\in \mathcal{S}(V)$, 
  let $\#(\pref, V)$ denote the number of voters in $V$ with the same \preferenceorder~$\pref$.
%  and let $n'$ denote the total number  $\mathcal{S}(V) \coloneqq \{\pref_1, \pref_2, \ldots, \pref_n\}$.
  By \cref{obs:identicalvoters do not change sc property},
  it follows that finding the maximum-size single-crossing voter subset is equivalent to 
  finding a \emph{single-crossing} subset~$\mathcal{S}'\subseteq \mathcal{S}(V)$ of \preferenceorder{s} that maximizes the sum~$\sum_{\pref\in \mathcal{S'}}{\#(\pref, V)}$.
  % it follows that finding the maximum-size single-crossing voter subset is equivalent to finding a subset~$\mathcal{S}'\subseteq \mathcal{S}(V)$ of \preferenceorder{s}, 
  % such that $\mathcal{S}'$ is \emph{single-crossing},
  % and the sum~$\sum_{\pref\in \mathcal{S'}}{\#(\pref, V)}$ is maximum.

  Now, observe that if $\pref$ is the first preference order along the single-crossing order over set~$\mathcal{S}'$,
  then for each two further preference orders~$\pref', \pref''\in \mathcal{S'}$ 
  with $\pref'$ being the predecessor of $\pref''$ along the single-crossing order, 
  by \cref{lem:single-crossing<=>set inclusion},
  it holds that $\Delta(\pref,\pref') \subseteq \Delta(\pref,\pref'')$.
  This inspires us to build a directed graph based on the set inclusion relation and then, to find a maximum-weight path.
%  Note that to fix the first \preferenceorder~$\pref$ in the single-crossing order over the set~$\mathcal{S}(V)$, we simply try all $n'\coloneqq |\mathcal{S}(V)|$ possibilities. 
  Thus, the idea of our algorithm is to first construct a directed graph with weighted arcs 
  and then to find a maximum-weight path on this graph.
  We will provide an example to illustrate this idea right after this proof.

  The construction of the desired directed graph works as follows:
  For each two numbers~$z,i \in \{1, 2, \ldots, n'\}$,
  we construct one vertex~$u^z_i$; this vertex will represent the \preferenceorder~$\pref_i$ in a linear order starting with \preferenceorder~$\pref_z$.
%  (not necessarily distinct) \preferenceorder{s}~$\pref_z, \pref_i \in \mathcal{S}(V)$,
%  create one vertex~$u^z_i$.
  Then, for each further number~$i' \in \{1,2,\ldots, n'\}$ with $i\neq i'$, %\preferenceorder~$\pref_{i'}\in \mathcal{S}(V)$ with $i \neq i'$, 
  we add an arc with weight $\#(\pref_{i'}, V)$ from vertex~$v^z_i$ to vertex~$v^z_{i'}$
  if $\Delta(\pref_z, \pref_i) \subseteq \Delta(\pref_z,\pref_{i'})$.
  Finally, we construct a root vertex~$u_r$, 
%  and for each \preferenceorder~$\pref_z \in \mathcal{S}(V)$, 
  and for each number $z\in \{1,2,\ldots, n'\}$, we add an arc with weight~$\#(\pref_{z}, V)$ from root~$u_r$ to~$u^z_z$.
  This completes the construction.
  Observe that the constructed directed graph is acyclic:
  \begin{enumerate}
    \item For each three numbers~$z, z', i \in \{1,2,\ldots, n'\}$ with $z\neq z'$,
    there are no arcs between vertices~$u^z_{i}$ and $u^{z'}_{i}$.
    \item For each three numbers~$z, i, i' \in \{1,2,\ldots, n'\}$ with $i\neq i'$, 
    a path from $u^{z}_i$ to $u^{z}_{i'}$ implies that $\Delta(\pref_z, \pref_i)\subseteq \Delta(\pref_z,\pref_{i'})$,
    while a path from $u^{z}_{i'}$ to $u^{z}_{i}$ implies that $\Delta(\pref_z, \pref_{i'})\subseteq \Delta(\pref_z,\pref_{i})$.
    Thus, both paths cannot exist simultaneously because $\pref_{i} \neq \pref_{i'}$.
  \end{enumerate}
  
  Now, an order of the vertices along a maximum-weight directed path corresponds to a subset~$\mathcal{S}'\subseteq \mathcal{S}(V)$ of \preferenceorder{s}, 
  such that $\mathcal{S}'$ is \emph{single-crossing},
  and the sum~$\sum_{\pref\in \mathcal{S'}}{\#(\pref, V)}$ is maximum:
  The second vertex on the maximum-weight path fixes the first \preferenceorder{} of
  the single-crossing order.
  Each successive vertex~$u^z_i$ on the path represents the successive \preferenceorder~$\pref_i$
  in the single-crossing order (this is true by \cref{lem:single-crossing<=>set inclusion} and by the way we define an arc).
  The arc weights ensure that the sum of the weights on the path equals the total number of represented voters.
  
  As to the running time analysis, we need $O(n\cdot m)$ time to compute the set~$\mathcal{S}(V)$.
  Then, for each two (not necessarily distinct) \preferenceorder{s}~$\pref, \pref' \in \mathcal{S}(V)$, we compute $\Delta(\pref, \pref')$.
  This can be done by checking the relative order of each pair of alternatives in $O(n^2 \cdot m^2)$ time.
  Further, we construct the directed graph in $O(n^3\cdot m^2)$~time.
  Finally, we compute the maximum-weight path in a directed acyclic graph with $n^2$~vertices and
  $n^3$~arcs in $O(n^3)$ time.
  To achieve this, 
  we first replace all positive weights~$w$ with $-w$, 
  and then use the algorithm in the textbook of \citet[Sec 24.2]{CorLeiRivSte2009}
  to find a minimum-weight path.
  In total, the running time is $O(n^3 \cdot m^2)$.
\end{proof}

Consider a profile~$\ppp$ with three alternatives~$a, b, c$ and four voters~$v_1, v_2, v_3, v_4$ whose preference orders are depicted in \cref{fig:sc-vd-example}(\subref{fig:sc-vd-example-sc-profile}).
This profile is not single-crossing since it contains a \ccon-configuration with regard to the alternatives~$a,c,a,$ $b,a,b$ and voters~$v_1,v_3,v_4$.
The set of the preference orders of all voters is $\{a\pref b \pref c,\; b\pref c \pref a,\; c\pref a\pref b\}$.
According to our algorithm of finding a single-crossing profile with maximum number of voters,
we first construct a weighted directed graph as depicted in \cref{fig:sc-vd-example}(\subref{fig:sc-vd-example-dag}).
Then, we will find a maximum-weight path in the graph.
We can verify that there are four maximum-weight paths,
including this one $u_r \to (a\pref b \pref c)\to (b\pref c \pref a)$ with weight three.
A single-crossing profile with maximum number of voters has three voters.
For instance, the profile with voters~$v_1, v_2, v_3$.

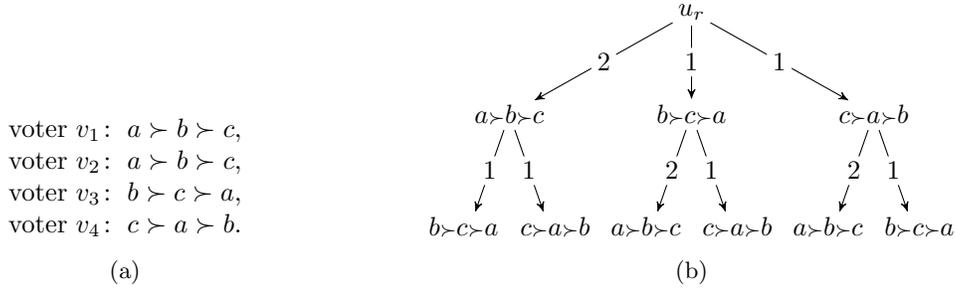
\begin{figure}[t!]
  \centering
  \begin{subfigure}[b]{.3\textwidth}
    \centering
    \begin{tabular}{@{} l @{} l @{}}
        $\defvoter v_1 \colon$\; & $a\pref b \pref c\text{,}$\\
        $\defvoter v_2 \colon$\; & $a\pref b \pref c\text{,}$\\
        $\defvoter v_3 \colon$\; & $b\pref c \pref a\text{,}$\\
        $\defvoter v_4 \colon$\; & $c\pref a \pref b\text{.}$
      \end{tabular}
      \caption{}
      \label{fig:sc-vd-example-sc-profile}
  \end{subfigure}
  % \begin{minipage}[t]{.35\textwidth}
  %   \begin{tabular}{@{} l @{} l @{}}
  %     $\defvoter v_1 \colon$ & $a\spref b \spref c\text{,}$\\
  %     $\defvoter v_2 \colon$ & $a\spref b \spref c\text{,}$\\
  %     $\defvoter v_3 \colon$ & $b\spref c \spref a\text{,}$\\
  %     $\defvoter v_4 \colon$ & $c\spref a \spref b\text{.}$
  %   \end{tabular}
  % \end{minipage}
  \tikzset{
    treenode/.style = {align=center, text centered},
    arn/.style = {treenode,  inner sep = 1pt,  fill=white},
    arn_label/.style = {treenode, draw=white, fill=white, inner sep=2pt, midway},
    arrow/.style = {->,shorten <= 1pt, shorten >= 1pt}
  }
  \tikzstyle{level 1}=[level distance=9ex, sibling distance=16ex]
  \tikzstyle{level 2}=[level distance=10ex, sibling distance=7ex]
  \qquad
  \begin{subfigure}[b]{.6\textwidth}
    \centering
  \begin{tikzpicture}[>=stealth']
    \node[arn] (root) {$u_r$}
    child
    {
      node[arn] {$a\spref b \spref c$}
      child
      {
        node[arn] (leftmost) {$b \spref c \spref a$\phantom{$_4$}} 
        edge from parent[arrow] 
        node[arn_label] {$1$}
      } 
      child
      {
        node[arn] {\phantom{$_4$}$c \spref a \spref b$} 
        edge from parent[arrow] 
        node[arn_label] {$1$}
      }
      edge from parent[arrow] node[arn_label] {$2$}
    }
    child
    {
      node[arn] {$b\spref c \spref a$}
      child
      {
        node[arn] {$a \spref b \spref c$\phantom{$_4$}} 
        edge from parent[arrow] 
        node[arn_label] {$2$}
      } 
      child
      {
        node[arn] {\phantom{$_4$}$c \spref a \spref b$} 
        edge from parent[arrow] 
        node[arn_label] {$1$}
      }
      edge from parent[arrow] node[arn_label] {$1$}
    }
    child
    {
      node[arn] {$c\spref a \spref b$}
      child
      {
        node[arn] {$a \spref b \spref c$\phantom{$_4$}} 
        edge from parent[arrow] 
        node[arn_label] {$2$}
      } 
      child
      {
        node[arn] (last node) {\phantom{$_4$}$b \spref c \spref a$} 
        edge from parent[arrow] 
        node[arn_label] {$1$}
      }
      edge from parent[arrow] node[arn_label] {$1$}
    }
    ;
    % \node[fit=(root) (last node) (leftmost), draw=none, rectangle, inner sep=0] (whole root) {};
    % \node[left = of whole root.south west, anchor=south east, inner sep=0pt, outer sep=0pt] (test) {
    %   % $\defvoter v_1 \colon$ $a\spref b \spref c\text{}$
    %   \begin{tabular}{@{} l @{} l @{}}
    %     $\defvoter v_1 \colon$\; & $a\pref b \pref c\text{,}$\\
    %     $\defvoter v_2 \colon$\; & $a\pref b \pref c\text{,}$\\
    %     $\defvoter v_3 \colon$\; & $b\pref c \pref a\text{,}$\\
    %     $\defvoter v_4 \colon$\; & $c\pref a \pref b\text{.}$
    %   \end{tabular}
    % };
  \end{tikzpicture}
  \caption{}\label{fig:sc-vd-example-dag}
  \end{subfigure}
  \caption{An example illustrating how to construct a weighted directed graph for a given profile.
  (a) A profile with four voters and three alternatives. Note that the first two voters have the same preference orders and that this profile is not single-crossing.
  (b) A weighted directed graph for the left profile. 
  Note that we label each vertex with its corresponding preference order.
  The weight on an arc denotes the number of voters in the profile that have the preference order labeled in the source vertex.
  For instance, there is an arc from the root~$u_r$ to its left most ``child'' $a\spref b \spref c$ with weight $2$.
  This means that the left profile has two voters with preference order~$a \spref b \spref c$.}
\label{fig:sc-vd-example}
\end{figure}

\section{Conclusion}
\label{sec:conclusion}
In terms of computational complexity theory, this work is one of the starting
points for preference profiles which are ``close'' to being nicely structured.
We have shown that
making a profile single-crossing by deleting as few voters as possible can be solved in polynomial time.
In contrast, making a profile nicely structured by deleting at most~$k$ voters
or at most~$k$ alternatives is \NP-hard for all other considered cases.
However, we mention in passing that all these problems become tractable when $k$~is small:
All considered properties are characterized by a fixed number of forbidden substructures.
Thus, by branching over all possible voters (resp.\ alternatives) of each
forbidden substructure in the profile one obtains a fixed-parameter
algorithm~\cite{CyFoKoLoMaPiPiSa2015,DF13,FG06,Nie06} that is efficient for small distances.
One line of future research is to investigate more sophisticated and
more efficient (fixed-parameter) algorithms to
compute the distance of a profile to a nicely structured one~\cite{ElkLac14}.

A second line of research which was started by \citet{ELP13} for
single-peaked profiles is to study further distance measures such as ``the
number of pairs of alternatives to swap''.
Besides the domain restrictions studied in this paper, 
there is also a very nicely structured property in the literature, 
the so-called \euclid representation.
It models the ability to place voters and alternatives onto
a real line such that a voter prefers an alternative to another one if and
only if the first one is closer to the voter.
\euclid profiles are necessarily single-peaked and single-crossing.  
The last two properties, however, are not sufficient to characterize 
the \euclid profile~\cite{ChPrWo15, Coombs1964, ElkFalSkow2014}.
In fact, \citet{ChPrWo15} show that
the \euclid profile \emph{cannot} be characterized by finitely many forbidden substructures.
%Ongoing work includes the study of forbidden substructures of the \euclid property~\cite{ChPrWo15}.
Nevertheless, recognizing \euclid profiles can be done in polynomial time~\cite{DoiFal1994, ElkFal2014, Knoblauch2010}.
The computational complexity of making a profile \euclid using
a minimum number of modifications remains unexplored.

A third line of research investigates whether and in which way tractability
for nicely structured preference profiles transfers to profiles that are only close
to being nicely structured (see also key questions~6 and~7 in~\cite{BCFGNW2014}).
It was started by \citet{CGS12,CGS13,FaHeHe2014,SkoYuFalElk2015,YanGuo14}
who look into several notions of nearly nicely structured profiles which are different from, but related to ours.
There are cases where the computational tractability of voting problems on nicely structured profiles
transfers to nearly nicely structured profiles and
cases where the vulnerability disappears even if the preference profile is extremely close to being nicely-structured.

%%%%%%%%%%%%%%%%%%%%%%%%%%%%%%%%%%%%%%%%%%%%%%%%%%%%%%%%
%%%%%%%%%%%%%%%%%%%%%%%%%%%%%%%%%%%%%%%%%%%%%%%%%%%%%%%%
\paragraph{Acknowledgements.}
Robert Bredereck was supported by the German Research Foundation (DFG), research project PAWS, NI~369/10.
Jiehua Chen was supported by the Studienstiftung des Deutschen Volkes.
Gerhard J. Woeginger was supported by the Alexander von Humboldt Foundation.

\end{document}